\title{%
  \MakeUppercase{Graph Skeletonization of High-Dimensional Point Cloud Data via Topological Method}%
  \thanks{This work is partially supported by National Science Foundation (NSF) under grants CCF-2051197, RI-1815697, and OAC-2107076, and by National Institutes of Health (NIH) under grant RF1MH125317.}
}
\author{%
  Lucas Magee,%
  \thanks{Department of Computer Science and Engineering, University of California, San Diego, 
          lmagee@ucsd.edu}\,
  Yusu Wang%
  \thanks{Hal{\i}c{\i}o\u{g}lu Data Science Institute, University of California, San Diego,
          yusuwang@ucsd.edu}\
}
\newcommand{\eps}  {{\varepsilon}}
\newcommand{\reals}     {{\mathbb{R}}}
\newtheorem{theorem}{Theorem}[section]
\newtheorem{lemma}[theorem]{Lemma}
\newtheorem{claim}[theorem]{Claim}
\newtheorem{definition}[theorem]{Definition}
\newcommand{\pers}      {{\mathrm{pers}}}
\newcommand{\dgm}       {{\mathrm{dgm}}}
\newcommand{\Fcal}      {{\mathcal{F}}}
\newcommand{\Tcal}      {{\mathcal{T}}}
\newcommand{\pp}        {{\mathsf{p}}}
\newcommand{\rips}      {{\mathrm{rips}}}
\newcommand{\myo}       {{\mathsf{ind}}}
\newcommand{\myprec}    {{\prec}}
\newcommand{\mypreceq}  {{\preceq}}
\newcommand{\myC}       {{\mathsf{C}}}
\newcommand{\ZZ}        {{\mathbb{Z}}}
\newcommand{\homo}      {{\mathsf{H}}}
\newcommand{\myparagraph}[1]  {\vspace*{0.02in}\noindent{\bf #1}~}
\newcommand{\argmax}    {{\mathrm{argmax}}}
\newcommand{\lexoptper}  {{lex-opt persistent}}
\newcommand{\Ghat}      {{\widehat{G}}}
\newcommand{\Khat}      {{\widehat{K}}}
\newcommand{\mye}       {{\mathrm{e}}}
\newcommand{\Tcalhat}  {{\widehat{\Tcal}}}
\newcommand{\bb}        {{\mathsf{b}}}
\newcommand{\dd}        {{\mathsf{d}}}
\newcommand{\DMfirst}   {{\sf firstDM-graph}}
\newcommand{\oldDM}         {{\sf DM-graph}}
\newcommand{\newDM}     {{\sf extDM-graph}}
\newcommand{\DMPCD}     {{\sf DM-PCD}}
\newcommand{\lst}       {{\mathsf{lowSt}}}
\newcommand{\CC}        {{\mathsf{C}}}
\newcommand{\Bsf}       {{\mathsf{B}}}
\newcommand{\Zsf}       {{\mathsf{Z}}}
\newcommand{\mypath}    {{path_{\Tcal_\delta}}}
\newcommand{\mybaseline} {{\sf baseline}}
\newcommand{\wR}        {{\sf wRip}}
\newcommand{\mydomain}  {{\Omega}}
\newcommand{\xx}        {{\mathsf{x}}}
\newcommand{\myorder}   {{\Pi}}
\newcommand{\newrho}    {{\widehat{\rho}}}
\newcommand{\DTMrho}    {{\rho_{w}}}
\newcommand{\myReeb}    {{\sf ReebRecon}}
\newcommand{\myMapper} {{\sf Mapper}}
\begin{document}
\maketitle

\begin{abstract}
Geometric graphs form an important family of hidden structures behind data. In this paper, we develop an efficient and robust algorithm to infer a graph skeleton of a high-dimensional point cloud dataset (PCD). 
Previously, there has been much work to recover a hidden graph from a low-dimensional density field, or from a relatively clean high-dimensional PCD. 
Our proposed approach builds upon the recent line of work on using a persistence-guided discrete Morse (DM) theory based approach to reconstruct a geometric graph from a density field defined over 
a low-dimensional triangulation. 
In particular, we first give a very simple generalization of this DM-based algorithm from a density-function perspective to a general filtration perspective. On the theoretical front, we show that the output of the generalized algorithm contains a so-called lexicographic-optimal persistent cycle basis w.r.t the input filtration, justifying that the output is indeed meaningful. On the algorithmic front, the generalization allows us to combine sparsified weighted Rips filtration to develop a new graph reconstruction algorithm for noisy point cloud data. 
The new algorithm is robust to background noise and non-uniform distribution of input points, and we provide various experimental results to show its effectiveness.
\end{abstract}

\section{Introduction}

Modern complex data, or the space where data is sampled from, often has a simpler underlying structure. A key step in modern data analysis is to model and extract such hidden structures. 
A particularly interesting type of non-linear structure is a (geometric) graph skeleton, 
which can be thought of as a 1-D \emph{singular manifold}, consisting of pieces of 1-manifolds (curves) glued together. 
Graph structures are common in practice, such as river networks and dark matter filament structures in cosmology. Graphs can also be natural models for the evolution of trends behind data (e.g, the evolution of topics in twitter data).  

While there has been beautiful work on manifold learning \cite{Roweis2000,Tenenbaum2000,BN03,donoho2003hel}, recovering singular manifolds is more challenging \cite{BQWZ12}. Nevertheless, recovering a hidden graph skeleton (singular 1-manifolds)
from data has attracted much attention; e.g, in \cite{Hastie84,Kegl02,Ozertem11}.
In general, one of the main challenges involved is to identify graph nodes and connections among them.  Local information is often used to make inference or decisions, making it hard to handle noise, non-uniform sampling and gaps in data.  
To this end, topological methods become useful, as they offer ways to capture the global structure behind data and thus can be robust in detecting junction nodes and their global connectivity. 
Indeed, there are several algorithms that extract a graph skeleton behind point cloud data (PCD) based on topological ideas; e.g.  \cite{Aanjaneya11,GSBW11,CHS15,LRW14}. Unfortunately, while such approaches work well when the input points are sampled within a tubular neighborhood of the hidden graph (called tubular or Hausdorff noise), they do not effectively handle more general noise, such as outliers and background noise. 
The locally-defined principal curve approach  \cite{Ozertem11} can handle noisy data with non-tubular noise via a ridge-finding strategy using a constraint mean-shift-like procedure. However, the procedure only moves points closer to a graph skeleton  without outputting an actual graph. 

Recently, there has been a line of work using a persistence-guided discrete Morse theory based approach to reconstruct a graph (or even a 2D) skeleton from \emph{density field} \cite{DRS15,GDN07,RWS11,2011MNRAS,WWL15}. 
In particular, assume that the input is a density field defined on a discretized domain.
Such methods use the discrete Morse (DM) theory to compute the so-called stable 1-manifolds to capture the mountain ridges of the density field and returns these mountain ridges as the extracted graph skeleton; see Figure \ref{fig:morse_neuron} for a 2D example. Persistent homology is used to simplify the resulting stable 1-manifolds. 
The algorithm based on this idea has been significantly simplified in \cite{DWW18} together with theoretical analysis. 
The resulting method (which we will refer to as \oldDM{}) can recover a hidden graph from noisy and non-homogeneous density fields, and has already been applied to several applications in 2D/3D \cite{BMWL20,DWW19,DWW17}. These graphs
have also been used as input for Graph Neural Networks (GNNs) to generate effective predictive models for rock data \cite{cai2021equivariant}.
However, this method currently assumes that one has \emph{a discretization of the ambient space where data is embedded in}, which becomes prohibitively expensive for high dimensional data, 
and also cannot be directly applied to metric data that is not embedded. 

\paragraph{New work.} 
We consider the general setting where the input is just a set of points $P$ embedded in a metric space, say the Euclidean space $\reals^d$, 
or with pairwise distances (or correlations) given. 
The previous \oldDM{} does not work in this setting, and as we will explain later, the straightforward extension is not effective for high-dimensional PCDs. 
In this paper, we extend the idea behind the discrete-Morse based approach beyond density field, and combine it with the so-called sparsified weighted Rips filtration of \cite{BCOS15} to develop an effective and efficient algorithm to infer graph skeletons of high-dimensional PCDs. 

More specifically, in Section \ref{sec:generalalg}, we view the DM-graph reconstruction method from a filtration perspective instead of a density perspective, and thus generalize the \oldDM{} algorithm to work with an arbitrary filtration (which intuitively is a sequence of growing spaces spanned by our input points in our setting).  
We then prove (Theorem \ref{thm:lexOPHC}) that the output of the generalized method contains a so-called \emph{lex-optimal persistent cycle basis} of the given filtration, thereby showing that the output captures meaningful information w.r.t. the filtration. This result is of independent interest. 

We next show how this simple change of view can help us reconstruct the graph skeleton of a set of points $P$ more efficiently and effectively. In particular, the filtration perspective now allows us to combine the DM-based graph reconstruction algorithm with a sparsified weighted Rips filtration scheme proposed by \cite{BCOS15}, which both improves the quality of the reconstruction and significantly reduces the time complexity. 
This new graph reconstruction algorithm for PCDs, called \DMPCD{}, is our second main contribution and presented in Section \ref{sec:PCD}.    

Finally, we show experimental results on a range of datasets, and compare with previous methods to demonstrate the effectiveness of our new \DMPCD{} algorithm. More results are shown in the Appendix.

\section{Preliminaries}
\label{sec:preliminaries}

We now briefly introduce some notions needed to describe the idea behind the \oldDM{} algorithm of \cite{WWL15, DWW18}. In this paper we will use the {\bf simplicial setting}, where the space of interest is modeled by a \emph{simplicial complex $K$}, consisting of basic building blocks called \emph{simplices}. Intuitively, a geometric $d$-simplex is the convex combination of $d+1$ affinely independent vertices: a $0$-, $1$-, $2$-, or $3$-simplex is just a vertex, an edge, a triangle, or a tetrahedron, respectively. Ignoring the geometry, \emph{an abstract $d$-simplex} $\sigma = (v_0, \ldots, v_d)$ is simply a set of $d+1$ vertices. Any subset $\tau$ of the vertices of a $d$-simplex $\sigma$ is \emph{a face} of $\sigma$, and $\tau$ is called a \emph{facet} of $\sigma$ if its dimension is $d-1$. 
A simplicial complex $K$ is a collection of simplices with the property that if a simplex $\sigma$ is in $K$, then any of its face must be in $K$ as well. 
Given a simplicial complex $K$, its \emph{$q$-skeleton $K^q$} consists of all simplices in $K$ of dimension at most $q$.  

\subsection{Persistent Homology}
\label{subsec:PHintro}

Instead of introducing persistent homology in its full general form, below we focus on the simplicial complex setting. See e.g., \cite{EH10,CD18} for more detailed exposition. 

\paragraph{Boundaries, cycles, homology groups.}
Given a simplicial complex $K$, let $K^q$ denote the set of $q$-simplices of $K$. Under $\ZZ_2$ field coefficient (which we use throughout this paper), 
a $q$-chain $C = \sum_{\sigma \in K^q} c_\sigma \sigma$ where $c_\sigma \in \{0, 1\}$; equivalently $C$ is a subset of $K^q$ (those with $c_\sigma = 1$). 
The set of $q$-chains together with addition operation gives rise to the so-called \emph{$q$-th chain group $\CC_q(K)$}. 
Given any $q$ simplex $\sigma$, its boundary $\partial_q \sigma$ consists of all of its faces of dimension $q$-$1$. 
This in turn gives a linear map, called the $q$-th boundary map $\partial_q: \CC_q(K) \to \CC_{q-1}(K)$, where $\partial_q C = \sum_{\sigma\in K^1} c_\sigma \partial_q(\sigma)$ for any $q$-chain $C = \sum_{\sigma \in K^q} c_\sigma \sigma$. 
A $q$-chain $C$ is a $q$-cycle if its boundary $\partial_q C = 0$. The collection of all $q$-cycles form 
the \emph{$q$-th cycle group $\Zsf_q$}; that is, $\Zsf_q = \mathrm{kernel}~ \partial_q$. 
A $q$-chain $C$ is a $q$-boundary if it is the image of some ($q$+$1$)-chain $C'$; i.e., $C = \partial_{q+1} C'$. 
The collection of $q$-boundaries form the \emph{$q$-th boundary group $\Bsf_q$}; that is, $\Bsf_q= \mathrm{image}~ \partial_{q+1}$. 
By the fundamental property of boundary map, i.e, $\partial_{q} \circ \partial_{q+1} = 0$, it follows that $\Bsf_q$ is a subgroup of $\Zsf_q$. 
The \emph{$q$-th homology group $\homo_q$} is defined as $\homo_q = \Zsf_q / \Bsf_q$. In particular, given any $q$-cycle $C$, its homology class $[C]$ is the equivalent class of all $q$-cycles in $q + \Bsf_{q}(K)$; and two $q$-cycles $C_1, C_2$ are homologous if $[C_1] = [C_2]$, implying that $C_1+C_2$ is a boundary (i.e, $C_1+C_2  \in \Bsf_q(K)$). The \emph{$q$-th homology classes} intuitively capture 
$q$-dimensional ``holes" in $K$; i.e., connected components ($0$D), loops ($1$D), closed surfaces that are not ``filled" ($2$D) and their higher dimensional analogs. 
The $q$th homology group is the  vector space spanned by such topological features, and its rank, called the \emph{$q$-th Betti number} $\beta_q(K)$, gives the number of independent topological "holes". 

\paragraph{Filtration, persistent modules.} 
Suppose we have a finite sequence of simplicial complexes connected by inclusions, called a \emph{filtration of $K$}, denoted by 
$\Fcal:~K_1 \subseteq K_2 \subseteq \cdots K_m = K.$
Applying the homology functor to this sequence (with $\ZZ_2$ coefficients), we obtain a sequence of vector spaces (over field $\ZZ_2$) connected by linear maps induced from inclusions, which is called a persistence module; in particular, for any dimension $q\ge 0$, we have: 

\[\mathbb{P}\Fcal:~~~~  \homo_q(K_1) \to \homo_q(K_2) \to \cdots \to \homo_q(K_m).\]
where maps are induced by inclusions. 
In our paper, we assume that the persistence module is indexed by a finite set $[1, m]$ instead of $\ZZ$. 

A special class of persistence modules is the so-called \emph{interval modules}. 
(i)  $I_i = \ZZ_2$ for any $\ell \in [s,t]$ and $I_i = 0$ otherwise; and (ii) $\nu^{i,j}$ is identity map for $s\le i \le j \le t$ and 0 map otherwise. We abuse the notation slightly and allow $t =\infty$, in which case the interval is really $[s, \infty)$. 
A pictorial version of an interval module is as follows: 
\[\cdots \to 0 \to \ZZ_2 \to \ZZ_2 \to \cdots \ZZ_2 \to 0 \to \cdots. \]

\paragraph{Persistence diagram.} It turns out that a given persistence module $\mathbb{V}$ can be uniquely decomposed into direct sums of interval modules (up to isomorphisms) $\mathbb{V} = \bigoplus_{[b,d]\in J} \mathbb{I}^{[b,d]}$, where $J$ is a multiset of intervals $J = \{ [b, d]\}$. 
We call $\bigoplus_{[b,d]\in J} \mathbb{I}^{[b,d]}$ the interval decomposition of $\mathbb{V}$. 
Again, note that the intervals in $J$ could be of two forms: $[b, d]$ for finite $b, d\in \ZZ$, and $[b, \infty)$; the former is called a \emph{finite interval}. Note that each interval $[b,d]$ can also be viewed as a point in $\reals^2$. 
Given a filtration $\Fcal$, its \emph{persistence diagram} $\dgm \Fcal$ is the multiset of points in $J$ where $\mathbb{P}\Fcal = \bigoplus_{[b,d]\in J} \mathbb{I}^{[b,d]}$ is the interval decomposition of $\mathbb{P}\Fcal$. Each point in $J$ is called a \emph{persistence point}. 
Assuming that we are given a monotone function $f: \ZZ \to \reals$, then the persistence of  $\pp = [b, d] \in \dgm \Fcal$ w.r.t. $f$ is defined as $\pers(\pp) = f(d) - f(b)$ \footnote{We note that in the literature, the persistence of a pair is often defined using some indices ($\mathbb{Z}$ or $\reals$) of the filtration. Here we decouple the two to make the presentation cleaner.}. To make the dependency on the function $f$ explicit, we now write the filtration together with this function as $\Fcal_f$, and the persistence diagram is denoted by $\dgm \Fcal_f$. 
For example, a common choice of $f$ in the literature is simply $f(i) = i$. 

\paragraph{Simplex-wise setting.} 
In the remainder of this paper, we assume that we are given a \emph{simplex-wise filtration $\Fcal$} of $K$, such that there is an ordering of all simplices in $K$, $\sigma_1, \ldots, \sigma_N$, and the filtration is given by: 
\begin{align}\label{eqn:simplexwiseF}
 \Fcal:  \emptyset = K_0 \subset K_1 \subset \cdots \subset K_N = K, ~~\text{where}~K_i:= \{\sigma_1, \ldots, \sigma_i \}. 
\end{align}
Suppose we are also given a monotone function $\rho: [1, N] \to \reals$ (i.e, $\rho(j) \ge \rho(i)$ for $j > i$), which we use to define the persistence of points in the persistence diagram $\dgm \Fcal_\rho$. (If no function $\rho$ is explicitly given, we take $\rho$ to be $\rho(i) = i$.)

Furthermore, note that for any $i$, $K_i$ is obtained by adding $\sigma_i$ to $K_{i-1}$. 
Let $\myo: K \to [1, N]$ be this bijection, where we set $\myo(\sigma_i) = i$. That is, $\myo(\sigma)$ in general is the index of simplex $\sigma$ in the ordered sequence of simplices that induce simplex-wise filtration $\Fcal$; or, the time it will be inserted into a complex (i.e $K_{\myo(\sigma)}$) in the filtration. 
Given this bijection, a function on the simplices in $K$ also gives rise to a function on $[1,N]$. 
In what follows, for convenience, we do not differentiate a function on simplices in $K$ and a function on $[1, N]$; that is, $\rho(\sigma) = \rho (\myo(\sigma))$, and if simplices are ordered as in Eqn (\ref{eqn:simplexwiseF}), then $\rho(\sigma_i) = \rho(i)$. If $\rho$ is defined on simplices in $K$, we also call it a \emph{simplex-wise function $\rho: K \to \reals$}. 

Given any persistence point $[b, d]\in \dgm \Fcal_\rho$ with $b, d\in [1, N]$, we say its corresponding \emph{persistence pair} is $(\sigma_b, \sigma_d)$ and it is necessary that $dim(\sigma_d) = dim(\sigma_b) + 1$. We set $\pers(\sigma_b) = \pers(\sigma_d) = \pers([b,d]) = \rho(d) - \rho(b)$. If $d = \infty$, then we say $\sigma_b$ is unpaired, and $\pers(\sigma_b) = \rho(\infty) := \infty$. 
Finally, consider each persistence pair $(\sigma, \tau)$, we say that $\sigma$ is \emph{positive} and $\tau$ is \emph{negative}, as the $q$-simplex $\sigma$ will create a new homology class that will become trivial (be killed) when the ($q$+$1$)-simplex $\tau$ is added to the filtration. 

A common way to induce a filtration is via a descriptor function $\rho: V(K) \to \reals$ given at vertices $V(K)$ of $K$. For simplicity of presentation, assume that $\rho$ is \emph{injective}. We can extend $\rho$ to a simplex-wise function $\rho: K \to \reals$ by setting $\rho(\sigma) = max_{v\in \sigma} \rho(v)$. 
Consider an ordering of simplices $\mathcal{S}_\rho: \sigma_1, \ldots, \sigma_m$ that is \emph{consistent with} $\rho$; i.e, (i) $\rho(\sigma_i) \le \rho(\sigma_j)$ for any $i\le j$ and (ii) for any simplex $\sigma_i$, its faces appear before it in the ordering. This order induces the so-called \emph{lower-star filtration} $\Fcal_\rho$ w.r.t. $\rho$. 
That is, assume $\rho(v_1) < \ldots < \rho(v_n)$.  Intuitively, we inspect the domain in increasing values of $\rho$ and the lower-star filtration is obtained by adding each vertex $v_i$ and its lower-star (simplices incident on $v_i$ with function value at most $\rho(v_i)$) in ascending order of $i$. 
The persistence diagram $\dgm_p \Fcal_\rho$ encodes birth and death of features during this course. In this case, we modify the persistence to reflect function values: For a persistence point $(b, d)\in \dgm_p \Fcal_\rho$, we set $\pers((b,d)) = \pers((\sigma_b, \sigma_d)) := |\rho(\sigma_d) - \rho(\sigma_b)|$. Features with large persistence survive for a long range of function values and are considered as more important w.r.t. $\rho$. 

\subsection{Discrete Morse Theory}
Below we very briefly introduce some concepts from discrete Morse theory, so that we can introduce both the original algorithm of \cite{WWL15} (to provide intuition) and the simplified algorithm of \cite{DWW18}. See \cite{For98,For01} for more detailed exposition of discrete Morse theory. 

We again consider the simplicial complex setting. 
Given a simplicial complex $K$, a \emph{discrete gradient vector} is a combinatorial pair of simplices $(\sigma^q, \tau^{q+1})$ where $\sigma$ is a face of $\tau$ of co-dimension $1$ (i.e, $\sigma$ is a vertex of an edge $\tau$, or an edge of a triangle $\tau$), and we sometimes include the superscript to make its dimension explicit. 
Given a collection $M(K)$ of such discrete gradient vectors over $K$, a \emph{V-path} is a sequence of simplices of alternating dimensions: 
$\sigma^q_1, \tau^{q+1}_1, \ldots, \sigma^q_\ell, \tau^{q+1}_\ell, \sigma^q_{\ell+1}$ such that for each $i\in [1, \ell]$,  we have (1) $(\sigma^q_i, \tau^{q+1}_i) \in M(K)$ and (2) $\sigma^q_{i+1}$ is a face of $\tau^{q+1}_i$. 
We say that a V-path as above is a \emph{non-trivial closed V-path} (or \emph{cyclic}) if $\sigma_1 = \sigma_{\ell+1}$; otherwise, it is \emph{acyclic}. 

\begin{definition}[Discrete Morse gradient vector field]
A collection of discrete gradient vectors $M(K)$ of $K$ is a \emph{discrete Morse gradient vector field}, or \emph{DM-vector field} for short, if (i) any simplex in $K$ is in at most one vector in $M(K)$; and (ii) no V-path in $M(K)$ is cyclic. 

A simplex in $K$ is \emph{critical} w.r.t. a DM-vector field $M(K)$ if it does not appear in any gradient vector in $M(K)$. 
\end{definition}

Now suppose we are given a critical edge $e$ in $M(K)$. The \emph{stable 1-manifold} of $e$ is the union of vertex-edge V-paths 
$v_1, e_1, \ldots, v_\ell, e_\ell, v_{\ell+1}$ such that $v_1$ is an endpoint of $e$, while $v_{\ell+1}$ is a critical vertex. 
Such stable 1-manifolds correspond to the "valley ridges" in a continuous function $f: \reals^d \to \reals$ (the graph of which can be viewed as a terrain), connecting index-1 saddles with minima. They are the opposite of "mountain ridges" (unstable 1-manifolds), connecting saddles to maxima and separating different valleys. 

Finally, we note that there is a \emph{Morse cancellation} operation that allows one to cancel a pair of critical simplices, and thus reduce both the number of critical simplices as well as the complexity of (un)stable 1-manifolds. 
In particular, a pair of critical simplices $\langle \sigma^q, \tau^{q+1}\rangle$ is \emph{cancellable} if there is a unique V-path 
$\sigma_{1}, \tau_{1}... , \sigma_{\ell}, \tau_{\ell}, \sigma_{\ell + 1} = \sigma^q$ in $M(K)$ such that $\sigma_1$ is a face of $\tau^{q+1}$.
The Morse cancellation operation will essentially invert the gradient vectors along this V-path and render $\sigma^q$ and $\tau^{q+1}$ no longer critical afterwards.

\subsection{Graph Reconstruction Algorithm for Density Field Based on Morse Theory}
\label{subsec:oldDM}
Below we first introduce the intuition behind the original discrete Morse based graph reconstruction algorithm from density field by \cite{2011MNRAS,WWL15} in the smooth setting. We will then describe the discrete setting, and its simplification \oldDM{} by \cite{DWW18}. 
First, assume we are given a smooth function $\rho: \mydomain \to \reals$ on a hypercube $\mydomain$ in $\reals^d$. View $\rho$ as a density function which concentrates around a hidden geometric graph (e.g, Figure \ref{fig:morse_neuron} (A) where $\mydomain \subset \reals^2$). Consider the graph of this function $\{(x, \rho(x) ) \mid x\in \mydomain\}$, which is a terrain in $\reals^{d+1}$ and which we will refer to as the terrain of $\rho$; see Figure \ref{fig:morse_neuron} (B). Intuitively, the "mountain ridge" of this terrain identifies the hidden graphs, as locally on the hidden graph, the density should be higher than points off it.  To capture these mountain ridges, one can use the so-called \emph{unstable 1-manifolds} of the function $\rho$ as in \cite{2011MNRAS,WWL15}. 

\begin{figure*}
\begin{center}
\includegraphics[width = 0.8\linewidth]{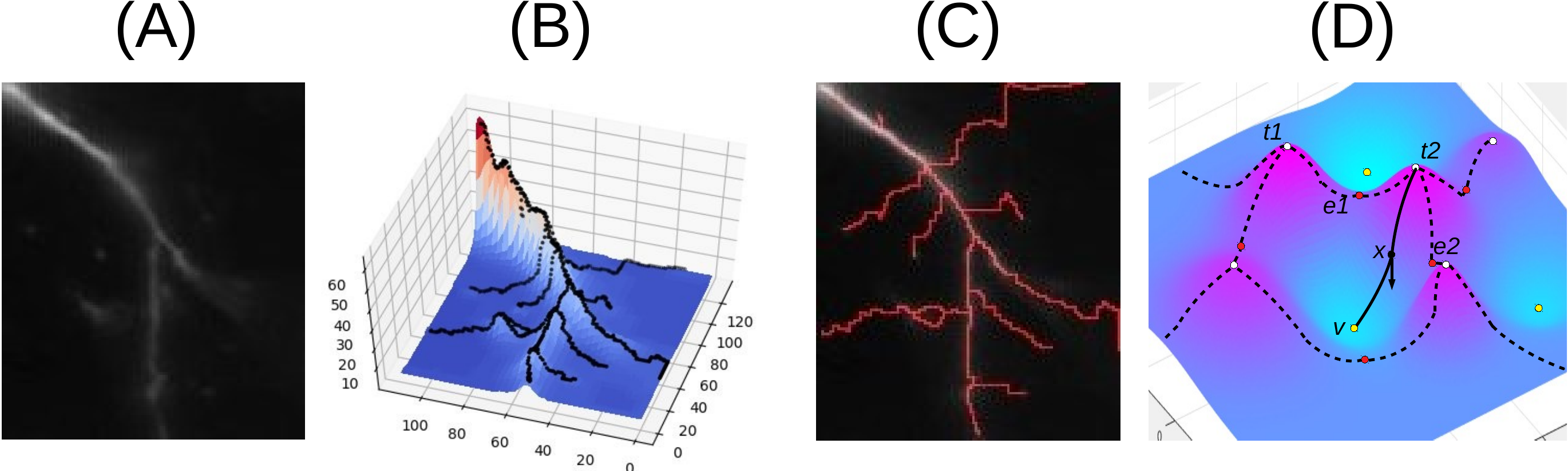}
\end{center}
\vspace*{-0.1in}
\caption{{\small A practical example of DM graph reconstruction. (A) The image (downloaded from \url{www.brainimagelibrary.org/})  contains neuronal branches that we aim to reconstruct. (B) View the image as a density function, we show the graph of this function, and mountain ridges of this terrain. 
(C) These ridges capture potential neuronal branches in the image in (A). (D) Gradient and the integral line passing $x$. Dashed curves are union of unstable 1-manifolds. 
}}
\label{fig:morse_neuron}
\end{figure*} 

Roughly speaking, given $\rho$, the gradient vector at $x\in \mydomain$, $\nabla \rho(x) = -[\frac{\partial \rho}{\partial \xx_1}(x), \ldots, \frac{\partial \rho}{\partial \xx_d}(x)]^T$, indicates the steepest descending direction of $\rho$ at $x$. See Figure \ref{fig:morse_neuron} (D). 
\emph{Critical points} of $\rho$ are points whose gradient vector vanishes. For a smooth function on $d$-D domain, non-degenerate critical points include minima, maxima, and $d$-$1$ types of saddle points. 
An integral line is intuitively the flow-line traced out by following the gradient direction at every point.
Flow-lines (integral lines) start and end (in the limit) at critical points. 
The \emph{unstable 1-manifold} of a saddle (of index $d$-$1$) is the union of flow-lines starting at some maximum and ending at this saddle. Intuitively, unstable 1-manifolds connect mountain peaks to saddles to peaks, separating different valleys (around minima), and thus can be used to capture mountain ridges. 

Hence one can compute the union of unstable 1-manifolds of $\rho$ as its graph skeleton. Furthermore, the density map $\rho$ may be noisy. To denoise the graph skeleton, previous approaches use persistent homology to keep only unstable 1-manifolds corresponding to "important" saddles. 

\paragraph{Algorithm in the discrete setting.} 
In the discrete setting imagine $K$ is the 2-skeleton of a domain $\Omega$ of interest, $\rho$ is a density function defined on $\Omega$ but is only accessible at vertices $V(K)$ of $K$, i.e., $\rho: V(K) \to \reals$. 
Algorithm \DMfirst($K, \rho: V(K) \to \reals, \delta$) will output a graph consisting of edges of $K$ capturing a graph skeleton of the density field $\rho$ by the following three steps:

\begin{itemize}
    \item (Step 1): Compute persistence pairing $\mathcal{P}$ induced by the  lower-star filtration w.r.t. -$\rho$.
    
    Specifically, we use $f = -\rho$ as it is easier to algorithmicly compute the discrete analog of "valley ridges" using discrete Morse theory than "mountain ridges" -- The valley ridges are the stable 1-manifolds (vertex-edge V-paths) for critical edges, and thus only 2-skeleton of input complex $K$ is needed. 
    To compute the importance of critical points in the simplicial setting when we are given $f: V(K) \to \reals$, we use the standard lower-star filtration to simulate the so-called sublevel-set filtration in the smooth case. In particular, given $f: V(K) \to \reals$, let $v_1 \ldots v_n$ be the set of vertices in $K$ sorted in non-decreasing order of $f$ values. Given any vertex $v_i \in V(K)$, its \emph{lower-star $\lst(v_i)$} consists of the set of simplices incident on $v_i$ spanned by only vertices from $V_i:= \{v_1, \ldots, v_i\}$. The lower-star filtration w.r.t. $f$ is the following: 
\begin{align}\label{eqn:lowstF}
\Khat_1 \subset \Khat_2 \subset \cdots \Khat_n  = K; ~~~\text{where}~ K_i = K_{i-1} \cup \lst(v_i). 
\end{align}
Equivalently, we can think that this filtration is induced by a simplex-wise function $\hat{f}: K \to \reals$ where $\hat{f}(\sigma) = \max_{\text{vertex}~ v~ \text{of}~ \sigma} f(v)$. 
\item (Step 2): Initialize vector field $M(K)$ to be the trivial one where all simplices are critical. Then in order of increasing persistence, for each pair $(\sigma, \tau) \in \mathcal{P}$ with $\pers(\sigma,\tau) \le \delta$, perform discrete Morse cancellation and update $M(K)$ if possible. 
Intuitively, this is to simplify and remove "not-important" critical points. 
    \item (Step 3): Output the graph $G_\delta = \bigcup_{e \in K, \pers(e) > \delta} \{$ stable 1-manifold of $e \}$. 
    
    In particular, we only consider critical edges
    that are "important" (i.e., $\pers > \delta$). Then we trace the valley ridges (stable 1-manifolds) connecting them to minima. These minima - which have persistence greater than $\delta$ - are the topographically prominent peaks of $\rho$.
\end{itemize}

\paragraph{Simplified algorithm.} It turns out that algorithm \DMfirst() can be significantly simplified \cite{DWW18}. In particular, one does not need to explicitly maintain any discrete Morse gradient vector field at all. See Algorithm \oldDM() below. 

\begin{algorithm2e}
\SetAlgoLined
\KwIn{ Triangulation $K$, density function $\rho: V(K) \to \reals$, persistence  
threshold $\delta$ } 
\KwOut{ a graph skeleton $G_\delta$}

    (Step 1) Compute persistence pairing $\mathcal{P}$ induced by the lower star filtration w.r.t. -$\rho$,
    
    (Step 2) Set $\Tcal_\delta:= \{ e \in E \mid e$ is negative and $\pers(e) \le \delta \}$
    
    \quad For each component (tree) $T$ in $\Tcal_\delta$, set its root to be $r(T):= \mathrm{argmin}_{v\in T}$ -$\rho(v)$.
    
    (Step 3) Let $\pi_T(x,y)$ be the tree path from $x$ to $y$ in a tree $T$. Output:  
    \begin{align}\label{eqn:outputG}
    G_\delta = \bigcup_{e =(u,v), \pers(e) > \delta} \{ e \cup \pi_{T_{i_1}}(u, r(T_{i_1})) \cup \pi_{T_{i_2}}(v, r(T_{i_2})) \mid u\in T_{i_1},  v\in T_{i_2}~\text{in}~\Tcal_\delta\}. \end{align} 

 \caption{\oldDM($K, \rho, \delta$)}
 \label{alg:oldDMalgo}
\end{algorithm2e}

In particular, in (Step 3) above, we only consider critical edges with $\pers > \delta$, and their stable 1-manifolds turn out to be the union of tree paths as specified in Eqn (\ref{eqn:outputG}). Note that (Step 2, 3) can be implemented in time linear to the number of vertices and edges in $K$. 

\section{Generalized Algorithm and Optimality}
\label{sec:generalalg} 

Now suppose instead of a triangulation of a $d$-D domain $\Omega$, we have an arbitrary simplicial complex $K$ -- our algorithm only needs its 2-skeleton $K=(V,E,T)$.
Suppose further that there is a simplex-wise function $\newrho: K \to \reals$.
Let $\myorder_\newrho:=\langle \sigma_1, \ldots, \sigma_N \rangle$ be an ordered sequence of simplicies of $K$ that is \emph{consistent with $\newrho$} (see the end of Section \ref{subsec:PHintro}),
and let $\Fcal_\newrho$ be the simplex-wise filtration of $K$ induced by this order $\myorder_\newrho$. 
(We will describe in Section \ref{sec:PCD} how to set up this filtration for graph skeletonization from PCDs.) 
We now generalize algorithm \oldDM() to the following \newDM(), where essentially, only (Step 1) differs by taking an arbitrary simplex-wise filtration $\Fcal_\newrho$, which we state in Algorithm \ref{alg:generalizedDMalgo} for clarity. 

\begin{algorithm2e}
\KwIn{Arbitrary simplex-wise filtration $\Fcal_\newrho$ of a simplicial complex $K=(V,E,T)$, threshold $\delta$ } 
\KwOut{ A reconstructed graph $G_\delta$ }

    (Step 1) Compute persistence pairings w.r.t. $\Fcal_\newrho$
    
    (Step 2) + (Step 3): same as in alg. \oldDM{}
    
 \caption{\newDM($K, \Fcal_\newrho, \delta$)}
 \label{alg:generalizedDMalgo}
\end{algorithm2e}

It is easy to verify that the original \oldDM($K, \rho, \delta)$ algorithm 
is a special case of the above algorithm, where we set $\Fcal_\newrho$ in \newDM($K,\Fcal_\newrho, \delta$) to be the lower-star filtration induced by the vertexwise function $\rho'= -\rho: V(K) \to \reals$; specifically, for any simplex $\sigma \in K$, set $\newrho(\sigma):= \max_{v\in \sigma} -\rho(v)$. 
The difference between our \newDM() algorithm and the original algorithm is rather minor. However, we will see that this change of perspective (from density-function based view to arbitrary filtration-based view) significantly broadens the applicability of this algorithm. In particular, in Section \ref{sec:PCD} we will show how this generalized algorithm can be combined with weighted Rips sparsification strategy to reconstruct a hidden graph skeleton of high-dimensional points data. 
But first, in what follows, we provide some characterization of the graph skeleton output by \newDM{}. Specifically, we show that the output of \newDM() contains the so-called \emph{lex-optimal} cycle basis of $K$ w.r.t. important 1D homological features  in $\dgm_1 \Fcal_\newrho$. 
To make this statement more precise, we first introduce some notations, following \cite{CLV20,DeyHM20,Wuetal17}. 
Intuitively, a $1$-cycle is a collection of edges forming one or multiple closed loops; and 
a $d$-cycle is a $d$-D analog of it. 
\begin{definition}[Persistent cycles \cite{DeyHM20}]
\label{def:percycle}
Let $\Fcal$ be a simplexwise filtration of $K$ induced by the ordered sequence of simplices $\sigma_1, \ldots, \sigma_N$, and $\dgm_q \Fcal$ its resulting $q$-th persistence diagram. 
Given a point $\pp= [b, d] \in \dgm_{q} \Fcal$, a $q$-cycle $\gamma$ is \emph{a persistent $q$-cycle w.r.t. $\pp$} if
(i) if $d \neq \infty$, $\gamma$ is a cycle in $K_b$ containing $\sigma_b$, and $\gamma$ is not a boundary in $K_{d-1}$ but becomes a boundary in $K_d$; and (ii) otherwise if $d=\infty$, then $\gamma$ is a cycle in $K_b$ containing $\sigma_b$. 

Given a subset $D = \{\pp_1, \ldots, \pp_r\} \subseteq \dgm_{q} \Fcal$ with $r=|D|$, we say that a set of cycles $\{\gamma_1, \ldots, \gamma_r \}$ form a \emph{persistent cycle-basis} for $D$ if $\gamma_i$ is a persistence cycle w.r.t. $\pp_i$ for all $i\in [1,r]$. 

\end{definition}

Roughly speaking, a persistent cycle $\gamma$ w.r.t. a persistence point $\pp = [b,d]$ is created at $b$ and killed at $d$, and can be thought of a representative of the homological feature captured by point $\pp \in \dgm \Fcal$. A persistent cycle basis w.r.t. $D \subset \dgm \Fcal$ corresponds to representative cycles captured by points in $D$.
More specifically, 
given a cycle $\gamma$, let $[\gamma]_{K_i}$ denote the homology class of $\gamma$ in complex $K_i$. 
The following result from \cite{DeyHM20} intuitively says that a persistence cycle-basis for $\dgm_q \Fcal$ essentially generates the interval decomposition of persistence module $\mathbb{P}\Fcal$. 
\begin{claim}[\cite{DeyHM20}]
Let $\{\gamma_1, \ldots, \gamma_g \}$ be a persistence cycle-basis for $\dgm_q \Fcal = \{ \pp_1, \ldots, \pp_g\}$ and $g = |\dgm_q \Fcal|$. 
Then $\mathbb{P}\Fcal = \bigoplus_{\pp_\ell \in \dgm_q \Fcal} \mathbb{I}^{\pp_\ell}$, where the interval module $\mathbb{I}^{\pp_\ell} = \{I_i \overset{\tiny{\nu_{i,j}}}{\longrightarrow} I_j \}_{i\le j}$ is generated by $\gamma_\ell$ in the sense that $I_i = [\gamma_\ell]_{K_i}$. 
\end{claim}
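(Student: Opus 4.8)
The plan is to prove the claim by induction along the simplexwise filtration $\Fcal:\emptyset=K_0\subset K_1\subset\cdots\subset K_N=K$, building the asserted interval decomposition one simplex at a time. Write $\pp_\ell=[b_\ell,d_\ell]$ with persistence pair $(\sigma_{b_\ell},\sigma_{d_\ell})$, where $d_\ell=\infty$ signals that $\sigma_{b_\ell}$ is unpaired, and call $\ell$ \emph{alive at $i$} if $b_\ell\le i$ and ($d_\ell=\infty$ or $i<d_\ell$); by Definition~\ref{def:percycle} this is exactly the index set on which $[\gamma_\ell]_{K_i}\ne 0$ should hold. For each $\ell$ let $\mathbb{J}_\ell$ be the sub-persistence-module of $\mathbb{P}\Fcal$ with $(\mathbb{J}_\ell)_i:=\langle\,[\gamma_\ell]_{K_i}\,\rangle\subseteq\homo_q(K_i)$ for $i\ge b_\ell$ and $(\mathbb{J}_\ell)_i:=0$ otherwise; this is a submodule since the inclusion-induced map $\homo_q(K_i)\to\homo_q(K_{i+1})$ sends $[\gamma_\ell]_{K_i}$ to $[\gamma_\ell]_{K_{i+1}}$. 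Everything then follows from the single inductive statement: for all $i$, the set $\{\,[\gamma_\ell]_{K_i}:\ell\text{ alive at }i\,\}$ is a basis of $\homo_q(K_i)$ (so in particular each such class is nonzero), while $[\gamma_\ell]_{K_i}=0$ for every $\ell$ with $b_\ell\le i$ that is not alive at $i$.

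For the induction I would first record the standard local facts about adjoining one simplex $\sigma_i$ to $K_{i-1}$: (a) $\homo_q(K_i)=\homo_q(K_{i-1})$ and the alive set is unchanged unless $\sigma_i$ is a positive $q$-simplex or a negative $(q{+}1)$-simplex; (b) if $\sigma_i$ is a positive $q$-simplex, then $\homo_q(K_{i-1})\hookrightarrow\homo_q(K_i)$ is injective with one-dimensional cokernel, and any $q$-cycle in $K_i$ that contains $\sigma_i$ represents a class outside the image (since $\Bsf_q(K_i)=\Bsf_q(K_{i-1})$ and $\Bsf_q(K_{i-1})$ contains no chain involving $\sigma_i$); (c) if $\sigma_i$ is a negative $(q{+}1)$-simplex, then the map $\homo_q(K_{i-1})\to\homo_q(K_i)$ is surjective with kernel $\langle[\partial\sigma_i]\rangle$ and $[\partial\sigma_i]\ne 0$ in $\homo_q(K_{i-1})$. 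The base case $i=0$ is trivial. In the inductive step, the inert cases of (a) are immediate. If $\sigma_i$ is a positive $q$-simplex, let $\ell_0$ be the unique index with $b_{\ell_0}=i$; then $\gamma_{\ell_0}$ is a $q$-cycle in $K_i$ containing $\sigma_i$, so by (b) the class $[\gamma_{\ell_0}]_{K_i}$ extends the (injected) previous basis to a basis of $\homo_q(K_i)$, matching that $\ell_0$ becomes the one newly-alive index. If $\sigma_i$ is a negative $(q{+}1)$-simplex, let $\ell_1$ be the unique index with $d_{\ell_1}=i$; this is the crux. The persistent-cycle conditions give $[\gamma_{\ell_1}]_{K_{i-1}}\ne 0$ and $[\gamma_{\ell_1}]_{K_i}=0$, so $[\gamma_{\ell_1}]_{K_{i-1}}$ is a nonzero element of the one-dimensional kernel $\langle[\partial\sigma_i]\rangle$ of the map in (c), hence equals $[\partial\sigma_i]$; therefore deleting $[\gamma_{\ell_1}]_{K_{i-1}}$ from the previous basis and pushing the rest forward yields a basis of $\homo_q(K_i)$, while $[\gamma_{\ell_1}]_{K_i}=0$ records that $\ell_1$ leaves the alive set. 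This closes the induction.

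Granting the inductive statement, the claim about each summand is immediate: $(\mathbb{J}_\ell)_i$ is one-dimensional precisely when $\ell$ is alive at $i$ and zero otherwise, and on that range the structure maps carry the generator $[\gamma_\ell]_{K_i}$ to the generator $[\gamma_\ell]_{K_{i+1}}$ and so are isomorphisms; hence $\mathbb{J}_\ell\cong\mathbb{I}^{\pp_\ell}$ with $I_i=[\gamma_\ell]_{K_i}$. For the decomposition, the basis statement gives $\homo_q(K_i)=\bigoplus_{\ell\text{ alive at }i}(\mathbb{J}_\ell)_i$ for every $i$, and since each $\mathbb{J}_\ell$ is a submodule this pointwise direct sum is a direct sum of persistence modules, i.e.\ $\mathbb{P}\Fcal=\bigoplus_\ell\mathbb{J}_\ell\cong\bigoplus_{\pp_\ell\in\dgm_q\Fcal}\mathbb{I}^{\pp_\ell}$.

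I expect the one place that needs genuine care to be the negative-$(q{+}1)$-simplex case: it is exactly where both halves of Definition~\ref{def:percycle} (non-bounding just before the death index, bounding at it) are used, and where one must notice that the ``killing cycle'' $\gamma_{\ell_1}$ of a bar and the canonically killed class $[\partial\sigma_i]$ must coincide in $\homo_q(K_{i-1})$. Infinite bars are only a bookkeeping matter: such $\ell$ are alive at every $i\ge b_\ell$ and never leave, matching the support $[b_\ell,N]$ of $\mathbb{I}^{\pp_\ell}$. An alternative, more structure-theoretic route would first show, from monotonicity of $\Bsf_q$ together with Definition~\ref{def:percycle}, that each $\gamma_\ell$ individually generates an interval submodule with interval $\pp_\ell$, and then use the dimension identity $\#\{\ell\text{ alive at }i\}=\dim\homo_q(K_i)$ (which holds by definition of $\dgm_q\Fcal$) to promote ``the classes span'' to ``the classes form a basis''; the needed linear independence at index $i$ reduces, by inspecting the summand of largest birth index, to the same observation about essential versus finite components, so this route is not really shorter.
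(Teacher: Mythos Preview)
Your proof is correct. Note, however, that the paper does not actually prove this claim: it is quoted from \cite{DeyHM20} and stated without argument, so there is no ``paper's own proof'' to compare against here.

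That said, your inductive argument along the simplexwise filtration is the natural and standard way to establish this result, and the details are handled correctly. The positive-simplex step uses exactly the right observation (a cycle containing $\sigma_i$ cannot be homologous in $K_i$ to any cycle of $K_{i-1}$ because $\Bsf_q(K_i)=\Bsf_q(K_{i-1})$ contains no chain involving $\sigma_i$), and in the negative-simplex step you correctly identify the one nontrivial point: that the persistent-cycle conditions force $[\gamma_{\ell_1}]_{K_{i-1}}$ to generate the one-dimensional kernel $\langle[\partial\sigma_i]\rangle$, so removing it and pushing the remaining basis elements forward (through a surjection whose kernel they avoid) yields a basis of $\homo_q(K_i)$. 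The passage from the pointwise basis statement to the module decomposition is also fine, since each $\mathbb{J}_\ell$ is genuinely a submodule (the structure maps send generator to generator) and the pointwise direct sums assemble into a direct sum of persistence modules.
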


Lexicographic optimal cycles are introduced in \cite{CLV19,CLV20}. We will extend them to the persistence version. 
Given a simplex-wise filtration $\Fcal$ of $K$ induced by an ordering of simplices $\sigma_1, \ldots, \sigma_N$, we set $\myo(\sigma)$ as the order it appears in $\Fcal$; i.e, $\myo(\sigma_i) = i$.  

\begin{definition}[Lexicographic order \cite{CLV20}]
Given two $q$-cycles $C_1, C_2 \in \myC_q(K)$, 
we say that $C_1 \mypreceq C_2$ if either (i) $C_1 + C_2 = 0$ or (ii) otherwise, the simplex $\sigma_{max}:= \argmax_{\sigma\in C_1+C_2} \myo(\sigma)$ is from $C_2$. 
If (ii) holds, we say that $C_1 \myprec C_2$, i.e., $C_1$ is \emph{smaller than $C_2$ in lexicographic order}. 
Intuitively, $C_1 \mypreceq C_2$ if simplices in $C_1$ comes "earlier" than $C_2$ in the filtration order. 
\end{definition}

\begin{definition}[Lex-optimal persistent cycles]
Given a persistence point $\pp=[b, d]\in \dgm_{q} \Fcal$, a $q$-cycle $\gamma$ is a \emph{lexicographic-optimal (lex-opt for short) persistent cycle w.r.t. $\pp$} 
if (i) $\gamma$ is a persistent cycle w.r.t. $\pp$; and 
(ii) among all persistence cycles w.r.t. $\pp$, $\gamma$ has the smallest lexicographic order. 
We say that $\Gamma = \{\gamma_1, \ldots, \gamma_r\}$ forms a \emph{lex-optimal persistent cycle basis} for a multiset $D = \{\pp_1, \ldots, \pp_r\} \subseteq \dgm_{q} \Fcal$ if $\gamma_i$ is a lex-optimal persistence cycle w.r.t $\pp_i$ for all $i\in [1, r]$. 
\end{definition} 

Given a $q$-th persistence diagram $\dgm_{q} \Fcal$ and a threshold $\delta$, let $\dgm_{q} \Fcal(\delta) \subseteq \dgm_q\Fcal$ denote the subset of points in $\dgm_{q} \Fcal$ whose persistence is larger than $\delta$ (intuitively, these corresond to important features). Our first main result is the following theorem. 

\begin{theorem}\label{thm:lexOPHC}
(i) $G_\delta$ as constructed w.r.t. a simplex-wise filtration $\Fcal_\rho$ contains a lex-optimal persistence cycle basis for $\dgm_{1} \Fcal_\rho (\delta)$, and (ii) the first Betti number of $G_\delta$ equals $|\dgm_1 \Fcal_\rho(\delta)|$. 
\end{theorem}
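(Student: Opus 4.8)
The plan is to work with the standard reduction of the boundary matrix of the simplexwise filtration $\Fcal_\rho$, to describe the lex-optimal persistent $1$-cycles completely explicitly from the reduced $\partial_2$ columns, and then to check that each of them is collected inside $G_\delta$; part (ii) I would treat separately by a direct contraction computation. Throughout I use two standard facts. (1) The negative edges of $\Fcal_\rho$ form a spanning forest $\Tcal$ of the $1$-skeleton, with $\Tcal\cap K_i$ a spanning forest of each $K_i$, and $\Tcal_\delta$ (the negative edges of persistence $\le\delta$) is a subforest whose components are exactly the trees $T_1,T_2,\dots$ used by the algorithm. (2) After reduction, each triangle $\sigma_{d'}$ yields a $1$-cycle $z_{d'}$ (its reduced column), one has $\Bsf_1(K_\ell)=\langle z_{d'}:\sigma_{d'}\in K_\ell\rangle$, the highest-index edges ($\mathrm{low}$'s) of the nonzero $z_{d'}$ are pairwise distinct, and $\mathrm{low}(z_{d'})$ is precisely the positive edge paired with $\sigma_{d'}$.

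For part (ii), write $\mathcal{E}_{>\delta}=\{e:\pers(e)>\delta\}$ and let $F\subseteq\Tcal_\delta$ be the union of all tree paths $\pi_{T_i}(x,r(T_i))$ appearing in \eqref{eqn:outputG}; within each $T_i$ these share the root $r(T_i)$, so $F$ is a subforest and contracting $F$ inside $G_\delta$ leaves $\beta_1$ unchanged. In the contracted graph $\widehat{G}_\delta$ the edges are exactly those of $\mathcal{E}_{>\delta}$, and its negative edges, being the image of $\Tcal$ after contracting $F\subseteq\Tcal$, form a forest. Adding the positive edges of $\mathcal{E}_{>\delta}$ one at a time, each already has its two (contracted) endpoints joined inside that forest — because when a positive edge is inserted into $\Fcal_\rho$ its endpoints are joined by negative edges of smaller index, and contracting the low-persistence ones among them leaves a walk through $\mathcal{E}_{>\delta}$-edges — so each such edge raises $\beta_1$ by exactly one. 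Hence $\beta_1(G_\delta)=\beta_1(\widehat{G}_\delta)$ equals the number of positive $1$-creating edges of persistence $>\delta$, which is $|\dgm_1\Fcal_\rho(\delta)|$.

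For part (i), fix $\pp=[b,d]\in\dgm_1\Fcal_\rho(\delta)$ with creator edge $e=\sigma_b$ and endpoints $u,v$. When $d=\infty$ the persistent cycles w.r.t.\ $\pp$ are all cycles of $K_b$ through $e$, and cancelling positive edges by their fundamental cycles greedily shows the lex-optimal one is $\gamma^\ast=e\cup\pi_\Tcal(u,v)$. When $d<\infty$, fact (2) gives that the persistent cycles w.r.t.\ $\pp$ are exactly $z_d+\sum_{d'\in S}z_{d'}$ where $S$ ranges over sets of triangles of index $<d$ whose paired positive edge has index $<b$; a greedy argument — while the current cycle contains a positive edge $g\ne e$ that is killed by some triangle of index $\le d$, add that triangle's reduced column, which cancels $g$ and introduces only lower-index edges and hence strictly decreases the lexicographic order — terminates at the unique persistent cycle $\gamma^\ast$ with no such $g$, which must be the lex-optimal one. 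By construction every positive edge of $\gamma^\ast$ other than $e$ has index $<b$ and dies after $d$, so its persistence is $\ge\rho(d)-\rho(b)=\pers(\pp)>\delta$; thus every positive edge of $\gamma^\ast$, and every negative edge of $\gamma^\ast$ of persistence $>\delta$, already lies in $\mathcal{E}_{>\delta}\subseteq G_\delta$.

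It then remains to place each negative edge $f\in\gamma^\ast$ with $\pers(f)\le\delta$, i.e.\ $f\in\Tcal_\delta$. Deleting $f$ splits its tree $T$ of $\Tcal_\delta$ into $T_p,T_q$; following $\gamma^\ast\setminus\{f\}$ from one endpoint of $f$ to the other, the first edge leaving $T_p$ and the first edge entering $T_q$ are not in $\Tcal_\delta$ (the only $\Tcal_\delta$-edge between $T_p$ and $T_q$ is $f$), hence lie in $\mathcal{E}_{>\delta}$; this exhibits vertices $x\in T_p$ and $y\in T_q$ that are endpoints of $\mathcal{E}_{>\delta}$-edges, so $\pi_T(x,r(T))$ and $\pi_T(y,r(T))$ are both included in $G_\delta$ by \eqref{eqn:outputG}, and $f\in\pi_T(x,y)\subseteq\pi_T(x,r(T))\cup\pi_T(y,r(T))$. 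Hence $\gamma^\ast\subseteq G_\delta$, and $\{\gamma^\ast_\pp\}_{\pp\in\dgm_1\Fcal_\rho(\delta)}$ is the asserted lex-optimal persistent cycle basis. The step I expect to be the crux is pinning down $\gamma^\ast$ correctly: the naive representative $e\cup\pi_\Tcal(u,v)$ need not even be a persistent cycle when $d<\infty$, so one really has to pass through the reduced-matrix description and the greedy cancellation; the payoff is the clean dichotomy — positive edges of $\gamma^\ast$ are forced to have persistence $>\delta$, low-persistence edges of $\gamma^\ast$ are forced onto root-paths — which is exactly the information \eqref{eqn:outputG} was designed to assemble. Part (ii), by contrast, is comparatively routine bookkeeping.
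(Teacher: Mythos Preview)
Your approach is correct but genuinely different from the paper's. The paper introduces an intermediate graph $\Ghat_\delta=\Tcal_\delta\cup E^-_\delta\cup E^+_\delta$, first proves both (i) and (ii) for $\Ghat_\delta$ (where the lex-optimal cycle argument reduces to showing $\gamma^\ast$ cannot contain a positive edge of persistence $\le\delta$, exactly your key dichotomy), and then shows $\Ghat_\delta$ deformation-retracts to $G_\delta$ by checking that each component of $\Ghat_\delta\setminus G_\delta$ is a subtree touching $G_\delta$ at a single vertex. You bypass $\Ghat_\delta$ entirely: for (i) you give an explicit constructive description of $\gamma^\ast$ via the reduced matrix and greedy cancellation, then place its edges directly in $G_\delta$ using the cut argument in your last paragraph. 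The payoff of your route is that the lex-optimal cycles are identified concretely (essentially the ``totally reduced'' columns), which is informative in its own right; the paper's route is less explicit about $\gamma^\ast$ but avoids invoking matrix-reduction machinery.

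One gap to patch in (ii): you contract only $F\subseteq\Tcal_\delta$ inside $G_\delta$, but then justify that a positive edge's endpoints are joined in the negative-edge forest by appealing to the tree path $\pi_{\Tcalhat}(u,v)$ and ``contracting the low-persistence ones among them''. That reasoning literally contracts all of $\Tcal_\delta$, not just $F$, so as written it proves $\beta_1(\Ghat_\delta)=|E^+_\delta|$ rather than $\beta_1(G_\delta)$. What is missing is the observation that every $\Tcal_\delta$-edge on $\pi_{\Tcalhat}(u,v)$ already lies in $F$: whenever $\pi_{\Tcalhat}(u,v)$ enters and exits a tree $T$ of $\Tcal_\delta$ at vertices $x,y$, both $x,y$ are endpoints of $\mathcal{E}_{>\delta}$-edges, so $\pi_T(x,y)\subseteq\pi_T(x,r(T))\cup\pi_T(y,r(T))\subseteq F$. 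This is precisely the root-path mechanism you deploy in the last paragraph of (i), so the fix is immediate; you should just invoke it here as well (and perhaps note that the same lemma is doing the work in both parts, which is why you correctly flag (ii) as ``routine bookkeeping'' only \emph{after} (i) is in hand).
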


The above theorem suggests that the output graph $G_\delta$ by our algorithm \newDM() contains the "best" loops whose homology classes have large persistence and whose edges come \emph{as early as possible in the filtration}. In particular, imagine that important edges or more faithful edges come early in the filtration, then the output graph contains those loops with large persistence ($>\delta$) and formed by more faithful edges whenever possible. 
In the graph reconstruction from PCDs application in the next section, intuitively, if edges from high-density region come into the filtration first, then the resulting output graph will use such edges whenever possible. See Figure \ref{fig:gaussian_circle_results} (A) to (D).  

\section{Proof of Theorem 3.5}
\label{sec:lexopt-proof}

We assume that $K$ is connected. If it is not, then we will perform the following arguments to each connected component of $K$.  
Now recall that $\Tcal_\delta := \{e\in E \mid e ~\text{is negative and}~\pers(e) \le \delta \}$ consists of all negative edges with persistence at most $\delta$ (from (Step 2) of algorithm \newDM{} in the main paper). 
It is shown in \cite{DWW18} that $\Tcal_\delta$ consists of a set of trees. 
Set

\begin{align*} 
E^+_\delta &:= \{ e\in E \mid e ~\text{is positive and}~ \pers(e) > \delta \},~\text{and}~ \\
E^-_\delta &:= \{ e\in E \mid e ~\text{is negative and}~ \pers(e) > \delta \}.
\end{align*}
Set $\Ghat_\delta = \Tcal_\delta \bigcup G_\delta$, where $G_\delta$ is the output of algorithm \newDM{}. Furthermore, by construction, $G_\delta$ consists of edges in $E^-_\delta \cup E^+_\delta$ together with a set of tree paths in $\Tcal$ (recall Eqn (\ref{eqn:outputG}) in Algorithm \ref{alg:oldDMalgo}, which is the same as the construction for algorithm \newDM{}). It follows that 

\begin{align}\label{eqn:Ghat} 
\Ghat_\delta &= \Tcal_\delta \bigcup G_\delta= \Tcal_\delta \bigcup E^-_\delta \bigcup E^+_\delta. 
\end{align} 
We prove Theorem 3.5 in two steps, laid out in the following two lemmas.

\begin{figure*}
\begin{center}
    \begin{tabular}{ccc}
    \fbox{\includegraphics[height=3cm]{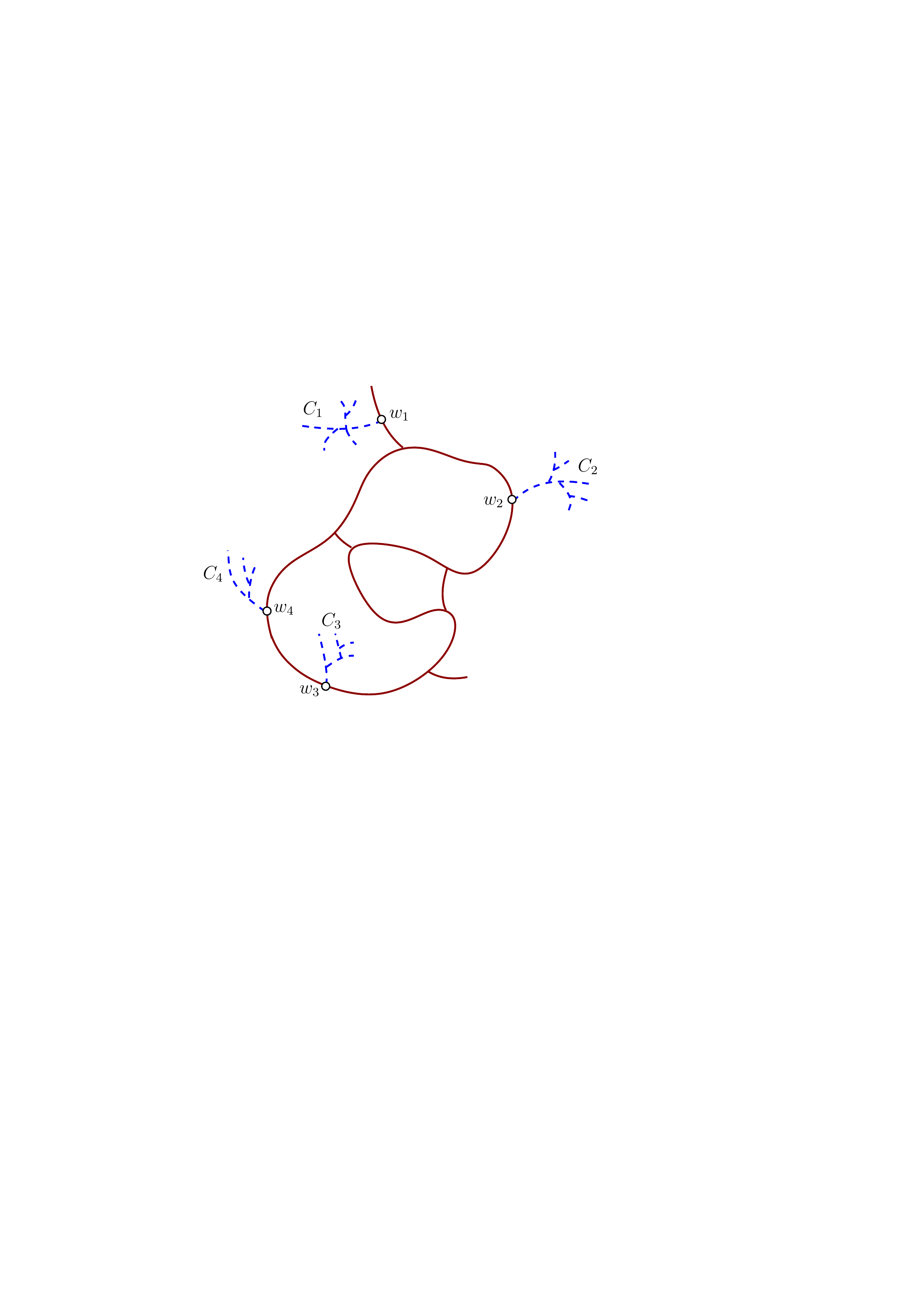}}  &
    \fbox{\includegraphics[height=2.5cm]{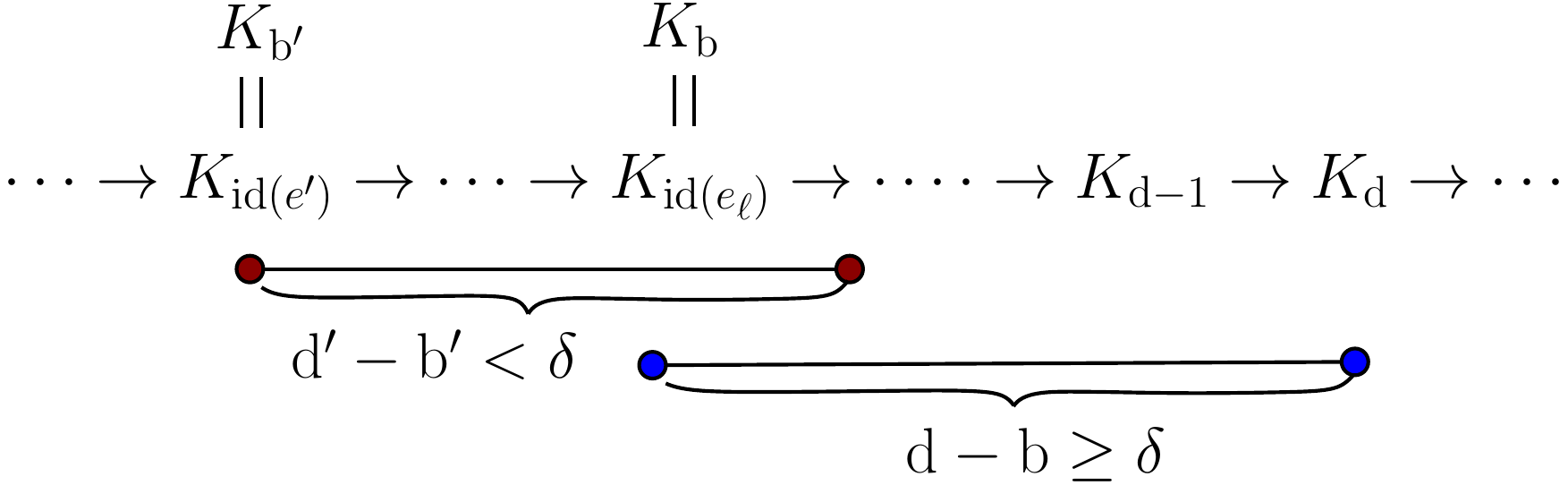}}\\
    (a) & (b)
    \end{tabular}
\end{center}
\vspace*{-0.1in}
\caption{(a) The solid red curve is $G_\delta$, while dashed trees are components in $\Ghat_\delta \setminus G_\delta$. The closure of each component $C_i$ connects to $G_\delta$ at one point $w_i$, and thus its closure can deformation retracts to $w_i\in G_\delta$.
(b) As $\pers(e') = \dd' - \bb' \le \delta$, and $\pers(\mye_\ell) = \dd - \bb > \delta$, and $\bb' (= \myo(e')) \le \bb$, it then follows that the persistent cycle $C' = \pi(u,v) + e'$ must become a boundary in the simplicial complex $K_{\dd - 1}$, which in turn leads to that $[\gamma'] = [\gamma^*]$ in $K_{\dd-1}$.
\label{fig:sequencecycle}}
\end{figure*}

\begin{lemma}\label{lem:Ghatgood} 
Statements (i) and (ii) in Theorem 3.5 holds for $\Ghat_\delta$. That is: (i') $\Ghat_\delta$ constructed w.r.t. a simplex-wise filtration $\Fcal_\rho$ contains a lex-optimal persistence cycle basis for $\dgm_{1} \Fcal_\rho (\delta)$, and (ii') the first Betti number of $\Ghat_\delta$ equals $|\dgm_1 \Fcal_\rho(\delta)|$. 
\end{lemma}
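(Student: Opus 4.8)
The plan is to handle the two assertions of the lemma separately: (ii$'$) by a direct counting argument, and (i$'$) by writing down, for each point of $\dgm_1\Fcal_\rho(\delta)$, its unique lex-optimal persistent cycle and checking that this cycle already lies in $\Ghat_\delta$.

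For (ii$'$) I would argue as follows. Every negative $1$-simplex of $\Fcal_\rho$ is paired with a vertex, hence has finite persistence, so the set of \emph{all} negative edges of $\Fcal_\rho$ equals $\Tcal_\delta\cup E^-_\delta$. No set of negative edges can contain a $1$-cycle (the youngest edge of a $1$-cycle has both endpoints already connected below it, so is positive); hence, as $K$ is connected, $\Tcal_\delta\cup E^-_\delta$ is a spanning tree of $K$ on $V$, with $|V|-1$ edges. By \eqref{eqn:Ghat}, $\Ghat_\delta=(\Tcal_\delta\cup E^-_\delta)\cup E^+_\delta$ is a disjoint union of this tree with $|E^+_\delta|$ positive edges, so $\Ghat_\delta$ is connected, has $|V|-1+|E^+_\delta|$ edges and no triangles, and therefore $\beta_1(\Ghat_\delta)=|E^+_\delta|$. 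Since sending a point of $\dgm_1\Fcal_\rho$ to its birth edge is a bijection onto the positive edges of $\Fcal_\rho$ that carries $\dgm_1\Fcal_\rho(\delta)$ onto $E^+_\delta$, we get $|E^+_\delta|=|\dgm_1\Fcal_\rho(\delta)|$, which is (ii$'$).

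For (i$'$), fix $\pp=[\bb,\dd]\in\dgm_1\Fcal_\rho(\delta)$ with birth edge $\sigma_\bb$, so $\pers(\sigma_\bb)=\pers(\pp)>\delta$. I would first record a convenient description of persistent cycles: when $\dd<\infty$, any $1$-cycle $\gamma\subseteq K_\bb$ with $\sigma_\bb\in\gamma$ and $[\gamma]_{K_\dd}=0$ is automatically a persistent cycle for $\pp$, because in the interval decomposition of $\permodule\Fcal_\rho$ the summand born at $\bb$ has coefficient $1$ in $[\gamma]_{K_{\dd-1}}$, forcing $[\gamma]_{K_{\dd-1}}\neq 0$ (and when $\dd=\infty$ every cycle in $K_\bb$ through $\sigma_\bb$ is a persistent cycle). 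Thus the persistent cycles for $\pp$ form a coset of a $\ZZ_2$-subspace of $\Zsf_1(K)$, so there is a unique $\mypreceq$-minimal one, $\gamma^*_\pp$, which is the lex-optimal persistent cycle for $\pp$. Because $\Tcal_\delta\cup E^-_\delta$ is exactly the set of all negative edges and is contained in $\Ghat_\delta$, to conclude $\gamma^*_\pp\subseteq\Ghat_\delta$ it suffices to show that every positive edge occurring in $\gamma^*_\pp$ lies in $E^+_\delta$, i.e.\ has persistence $>\delta$.

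This last point is the heart of the proof and the step I expect to be the main obstacle. Suppose some positive edge $e'$ occurring in $\gamma^*_\pp$ has $\pers(e')\le\delta$; then $e'\neq\sigma_\bb$, so $\myo(e')<\bb$, and $e'$ has a finite pair $[\bb',\dd']$. Monotonicity of $\rho$ gives $\rho(\dd')\le\rho(\bb')+\delta\le\rho(\bb)+\delta<\rho(\dd)$, hence $\dd'<\dd$ and $K_{\dd'}\subseteq K_\dd$. Let $R$ be the reduced column of the killing triangle $\sigma_{\dd'}$ in the standard boundary-matrix reduction: $R$ is a $1$-boundary of $K_{\dd'}$ whose youngest edge is $e'$, so $e'\in R$ and $R\subseteq K_{\myo(e')}\subseteq K_{\bb-1}$. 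Setting $\gamma':=\gamma^*_\pp+R$, one has $\gamma'\subseteq K_\bb$, $\sigma_\bb\in\gamma'$ (as $\sigma_\bb\notin R$), and $[\gamma']_{K_\dd}=[\gamma^*_\pp]_{K_\dd}+[R]_{K_\dd}=0$ since $R\in\Bsf_1(K_{\dd'})\subseteq\Bsf_1(K_\dd)$, so $\gamma'$ is again a persistent cycle for $\pp$; but $\gamma^*_\pp+\gamma'=R$ has youngest edge $e'\in\gamma^*_\pp\setminus\gamma'$, so $\gamma'\myprec\gamma^*_\pp$, contradicting minimality. Hence all positive edges of $\gamma^*_\pp$ lie in $E^+_\delta$, so $\gamma^*_\pp\subseteq\Ghat_\delta$, and $\{\gamma^*_\pp\}_{\pp\in\dgm_1\Fcal_\rho(\delta)}$ is a lex-optimal persistent cycle basis for $\dgm_1\Fcal_\rho(\delta)$ inside $\Ghat_\delta$, which is (i$'$). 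The parts I would need to justify carefully are the two structural inputs used in this exchange — that persistent cycles for a point form a full coset (via the interval-decomposition argument, so that $\gamma^*_\pp$ is well defined and $\gamma^*_\pp+R$ is still a persistent cycle), and that the reduced killing column $R$ is a boundary already in $K_{\dd'}$ with youngest edge $e'$ (standard persistence reduction) — while the role of the monotonicity of $\rho$ is precisely to force $\dd'<\dd$, which is what makes the swap $\gamma^*_\pp\mapsto\gamma^*_\pp+R$ legitimate; the counting argument for (ii$'$) and the basis statement are then routine.
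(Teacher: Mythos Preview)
Your proposal is correct and follows essentially the same route as the paper: for (ii$'$) you identify $\Tcal_\delta\cup E^-_\delta$ as a spanning tree and count, and for (i$'$) you do the same case analysis on edges of the lex-optimal cycle $\gamma^*_\pp$, deriving a contradiction in the ``positive with $\pers\le\delta$'' case by swapping in a small cycle through $e'$ to produce a strictly lex-smaller persistent cycle. The only technical differences are cosmetic: the paper swaps using an arbitrary persistent cycle $C_{e'}=\pi(u,v)+e'$ for the pair $[\bb',\dd']$ (and verifies $\gamma'$ is still persistent by noting $[\gamma']=[\gamma^*]$ in $K_{\dd-1}$), whereas you swap using the reduced killing column $R$ (and verify persistence via the interval-decomposition observation that any cycle in $K_\bb$ through $\sigma_\bb$ which bounds in $K_\dd$ automatically survives in $K_{\dd-1}$); both objects have youngest edge $e'$ and bound by time $\dd-1$, so the swap works identically.
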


\begin{lemma}\label{lem:contractible}
$\Ghat_\delta$ deformation contracts to $G_\delta$. 
\end{lemma}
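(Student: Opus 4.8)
The plan is to realize $\Ghat_\delta \setminus G_\delta$ as a disjoint union of finite trees, each attached to $G_\delta$ along a single vertex, and then to collapse each such tree onto its attachment vertex. I would start from Eqn~(\ref{eqn:Ghat}): $\Ghat_\delta = \Tcal_\delta \cup E^-_\delta \cup E^+_\delta$, whereas (from Eqn~(\ref{eqn:outputG})) $G_\delta$ is made of the edges of $E^-_\delta \cup E^+_\delta$ together with root-paths $\pi_T(\cdot, r(T))$ taken inside the trees $T$ of the forest $\Tcal_\delta$. Since edges of $E^-_\delta \cup E^+_\delta$ have persistence $>\delta$ while edges of $\Tcal_\delta$ have persistence $\le\delta$, the two edge sets are disjoint; hence every edge of $\Ghat_\delta$ outside $G_\delta$ lies in a single tree $T$ of $\Tcal_\delta$. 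Let $C_1,\dots,C_k$ be the connected components of the subgraph of $\Ghat_\delta$ spanned by exactly these leftover edges, and let $\bar{C}_i$ be $C_i$ together with the $G_\delta$-vertices incident to its edges; each $C_i$, being a connected subgraph of a forest, is itself a finite tree.

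The key step is to show that $\bar{C}_i \cap G_\delta$ consists of exactly one vertex $w_i$, as depicted in Figure~\ref{fig:sequencecycle}(a). Fix the tree $T$ of $\Tcal_\delta$ containing $C_i$. Inside $T$, the edges of $G_\delta$ are precisely the root-paths $\pi_T(u,r(T))$, and their union $S$ is a connected subtree of $T$ since every such path passes through the common root $r(T)$; by construction $C_i$ is a maximal connected subset of $E(T)\setminus E(S)$. If $\bar{C}_i$ met $S$ in two distinct vertices $w$ and $w'$, then a path from $w$ to $w'$ inside $\bar{C}_i$ and a path from $w$ to $w'$ inside $S$ would be edge-disjoint (as $E(C_i)\cap E(S)=\emptyset$), so their union would contain a cycle in $T$, contradicting that $T$ is a tree. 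Thus $|\bar{C}_i\cap G_\delta|\le 1$, and assuming each leftover component does attach to $G_\delta$ (see the obstacle below) it equals $\{w_i\}$. Since a finite tree deformation retracts onto any of its vertices, choose such a retraction $r^i_t:\bar{C}_i\to\bar{C}_i$ onto $w_i$ that fixes $w_i$ throughout. As the $\bar{C}_i$ are pairwise edge-disjoint and meet $G_\delta$ only at their respective $w_i$, the maps $r^i_t$ glue with the identity on $G_\delta$ into a continuous homotopy $R_t:\Ghat_\delta\to\Ghat_\delta$ with $R_0=\mathrm{id}$, $R_t|_{G_\delta}=\mathrm{id}$ for all $t$, and $R_1(\Ghat_\delta)=G_\delta$ --- the desired deformation retraction.

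I expect the main obstacle to be the single-attachment claim, and within it the subtle point that each leftover component $C_i$ truly touches $G_\delta$ rather than forming an isolated (contractible) component of $\Ghat_\delta$ --- equivalently, that every tree of $\Tcal_\delta$ is met by some root-path of the output (or retains a prominent critical vertex kept in the output). I would establish this from the structural description of $\Tcal_\delta$ in \cite{DWW18} together with the definition of $G_\delta$, rather than from purely combinatorial graph manipulation. The remaining ingredients --- that a finite tree collapses to a point, and that collapses of disjoint subtrees can be carried out simultaneously --- are routine.
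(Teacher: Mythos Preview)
Your strategy is sound and reaches the same conclusion as the paper, but via a different route for the key single-attachment step. The paper first proves two auxiliary results: (a) for each $e=(u,v)\in E^+_\delta$ the cycle $e\cup\pi_{\Tcalhat}(u,v)$ already lies in $G_\delta$ (Lemma~\ref{claim:onecycle}, obtained by breaking the tree path in $\Tcalhat$ into maximal $\Tcal_\delta$-pieces joined by edges of $E^-_\delta$ and invoking Claim~\ref{claim:onetree}); and (b) $\beta_0(G_\delta)=\beta_0(\Ghat_\delta)$ and $\beta_1(G_\delta)=\beta_1(\Ghat_\delta)$ (Lemma~\ref{claim:bettinumbers}). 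Only then does it argue single attachment \emph{globally}: two attachment vertices $w,w'$ of some $\overline{C}_i$ would give a path in $\overline{C}_i$ and, since $G_\delta$ has just been shown connected, a second path in $G_\delta$; their union is a cycle of $\Ghat_\delta$ not lying in $G_\delta$, contradicting the $\beta_1$-equality. You instead bypass both lemmas by arguing \emph{locally} inside the unique tree $T\in\Tcal_\delta$ containing $C_i$: all root-paths in $T$ pass through $r(T)$, so the $G_\delta$-edges in $T$ form a connected subtree $S$, and two attachment points would force a cycle in $T$ itself. This is more direct; the paper's detour, on the other hand, records $\beta_1(G_\delta)=\beta_1(\Ghat_\delta)$ as a standalone fact rather than as a corollary of the retraction. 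One small point your write-up should make explicit: a vertex of $T$ could a priori lie in $G_\delta$ solely as an endpoint of some edge of $E^-_\delta\cup E^+_\delta$, but then its own root-path $\pi_T(\cdot,r(T))$ belongs to $G_\delta$ by construction, so such a vertex lies in $S$ too; this is what guarantees $\overline{C}_i\cap G_\delta\subseteq S$, which you use implicitly.

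On your flagged obstacle: it reduces to checking that $S\neq\emptyset$ in every tree $T$ of $\Tcal_\delta$, i.e.\ that some vertex of $T$ is an endpoint of an edge in $E^-_\delta\cup E^+_\delta$. No appeal to~\cite{DWW18} is needed. Since $\Tcalhat=\Tcal_\delta\cup E^-_\delta$ is a spanning tree of the (assumed connected) complex $K$, any tree $T\subsetneq\Tcalhat$ of $\Tcal_\delta$ must carry an endpoint of some edge of $E^-_\delta$; the remaining case $\Tcal_\delta=\Tcalhat$ requires $E^+_\delta\neq\emptyset$, else $G_\delta=\emptyset$, a degenerate situation the paper's argument also tacitly excludes.
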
 

Our theorem then follows from these two lemmas. Specifically, we will use the graph $\Ghat_\delta$ as a proxy: Lemma \ref{lem:Ghatgood} states that the desired results hold for $\Ghat_\delta$. 
Lemma \ref{lem:contractible} then relates $\Ghat_\delta$ to $G_\delta$. In particular, as both $\Ghat_\delta$ and $G_\delta$ are graphs, this lemma implies that any simple cycle in $\Ghat_\delta$ must be present in $G_\delta$ as well. 
Theorem 3.5 then follows. 
What remains is to prove these two lemmas, which we present in the two subsections that follow.  

\subsection{Proof of Lemma \ref{lem:Ghatgood}}

Let $\dgm_1 \Fcal_\rho (\delta) = \{ \pp_1, \ldots, \pp_g \}$. 
By the definition of positive and negative edges, we know: 
\begin{itemize}
\item (C1). $\Tcalhat = \Tcal_\delta \cup E^-$ is a spanning tree of $K$. 
\item (C2). By the definition of positive edges, $E^+_\delta$ contains exactly those edges whose addition create the persistence points in $\dgm_1 \Fcal_\rho(\delta)$. In other words, $g = |E^+_\delta|$ and we can order edges in $E^+_\delta = \{ \mye_1, \ldots, \mye_g\}$ so that for any $\ell\in [1, g]$,  $\pp_\ell = [\myo(\mye_\ell), d_\ell]$: i.e, the birth-time of $\pp_\ell$ corresponds to the insertion of edge $\mye_\ell$ in the simplicial complex $K_{\myo(\mye_\ell)}$. 
\end{itemize} 
Furthermore, the addition of each positive edge $\mye_\ell \in E^+_\delta$ creates a cycle in the spanning tree $\Tcalhat$ (as $\mye_\ell$ is not a tree edge), 
As $\Ghat_\delta = \Tcalhat \cup E^+$, we thus have $\beta_1 (\Ghat_\delta) := rank(\homo_1(\Ghat_\delta))$ is the same as $g = |\dgm_1 \Fcal_\rho(\delta)|$. This proves part (ii') in Lemma \ref{lem:Ghatgood} for the graph $\Ghat_\delta$. 

We now prove part (i') of Lemma \ref{lem:Ghatgood}. 
Consider any $\mye_\ell \in E^+_\delta$, and let $\gamma^*$ denote a \lexoptper{} cycle of the corresponding persistence point $\pp_\ell = [\bb, \dd]$ (where $\bb = \myo(\mye_\ell)$). By Definitions 3.1 and 3.3 in the main paper,  
$\gamma^*$ necessarily contains $\mye_\ell$, and all other edges in $\gamma^*$ have an index smaller than $\myo(\mye_\ell)$. 
We will next prove that $\gamma^*$ is in $\Ghat_\delta$, that is, treating a cycle (under $\mathbb{Z}_2$ coefficients) as a set, $\gamma^* \subseteq \Ghat_\delta$. 

In particular, take any edge $e' \in \gamma^*$ with $e' \neq \mye_\ell$, we will show that $e'\in \Ghat_\delta$. 
\begin{itemize}
\item If $e'$ is negative, then this is trivially true as $e' \in \Tcalhat \subseteq \Ghat_\delta$. 
\item If $e'$ is positive but with persistence $\pers(e') > \delta$, then it is also true as $e'\in E^+_\delta \subseteq \Ghat_\delta$. 
\item So what remains is the case when $e' = (u, v)$ is positive but with $\pers(e') \le \delta$. However, we will show that this case cannot happen, which implies that $e' \in \Ghat_\delta$.

Assume this case happens for edge $e'$. Then let $C_{e'} (= \pi(u,v) + e') \subset K_{\myo(e')}$ be a persistent cycle w.r.t. the persistence point $[\bb' = \myo(e'), \dd']$ generated by $e'$.  
First, as the path (1-chain) $\pi(u,v)$ is contained in $K_{\myo(e')}$, all edges in $\pi(u,v)$ have an index less than that of $e'$. 
This means that the cycle $\gamma' = \gamma^* - e' + \pi(u,v)$ is necessarily smaller than $\gamma^*$ in lexicographic order. 
We now claim that $\gamma'$ is also a persistent cycle w.r.t. the persistence point $\pp_\ell =[\bb, \dd]$ corresponding to the positive edge $\mye_\ell$. 

Indeed, as $\gamma^*$ is a persistent cycle w.r.t. $\pp_\ell$, we know that $\myo(e') < \myo(\mye_\ell) = \bb$. Recall that the persistence point corresponds to the positive edge $e'$ is $[\bb' = \myo(e'), \dd']$.  
As $\pers(\mye_\ell) = \rho(\dd) - \rho(\bb) > \delta$ while $\pers(e') = \rho(\dd') - \rho(\bb') \le \delta$, it then follows that $\dd' < \dd$. (See Figure \ref{fig:sequencecycle} (b) for illustrations of these notations.) 
Hence we know that it is necessary that the cycle $\pi(u,v) + e'$ becomes boundary in $K_{\dd-1}$. In other words, in $K_{\dd-1}$, the two cycles $\gamma^*$ and $\gamma'$ are homologous. It is then easy to verify that $\gamma'$ must be a persistent cycle for $\pp_\ell$ as well. 

Since $\gamma'$ is also a persistent cycle for $\pp_\ell$ and is lexicographically smaller than $\gamma^*$, this contradicts our assumption that $\gamma^*$ is a \lexoptper{} cycle for $\pp_\ell$. Hence no positive edge $e' \in \gamma^*$ with $\pers(e') < \delta$ can be in $\gamma^*$. 

\end{itemize}

By the above case analysis, any edge $e'\in \gamma^*$ must be in $\Ghat_\delta$. 
It then follows that $\gamma^* \subseteq \Ghat_\delta$. 
As this argument holds for any edge in $E^+_\delta$, we thus have proven (i').
This finishes the proof of Lemma \ref{lem:Ghatgood}. 

\subsection{Proof of Lemma \ref{lem:contractible}}
\label{appendix:lem:contractible}

First, by construction of $\Ghat_\delta$ (Eqn (\ref{eqn:Ghat})), we have that $G_\delta \subseteq \Ghat_\delta$, and all edges in $\Ghat_\delta \setminus G_\delta$ must come from $\Tcal_\delta$. 
Now recall $\Tcalhat = \Tcal_\delta \bigcup E^-$, which is a spanning tree of $K$. 
Given an arbitrary tree $T$ and two nodes $u,v\in T$, let $\pi_T(u,v)$ denote the unique tree path from $u$ to $v$ in $T$. We have the following simple claim. 
\begin{claim}\label{claim:onetree}
Given any rooted tree $T$ with root $r(T)$ and two nodes $u, v \in T$, we have that $\pi_T(u,v) \subseteq \pi_T(u, r(T)) \cup \pi_T(v, r(T))$. 
\end{claim}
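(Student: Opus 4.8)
The plan is to route the path $\pi_T(u,v)$ through the least common ancestor of $u$ and $v$ in the rooted tree $T$. First I would let $w$ be the common ancestor of $u$ and $v$ (with respect to the root $r(T)$) that is closest to $u$; since $T$ is a tree, the vertex sets of $\pi_T(u,r(T))$ and $\pi_T(v,r(T))$ have a nonempty intersection (they both contain $r(T)$) that is totally ordered by the ancestor relation, so such a $w$ exists, is unique, and is also the common ancestor closest to $v$. Because tree paths are unique, $\pi_T(u,r(T))$ decomposes as the concatenation $\pi_T(u,w)\cup\pi_T(w,r(T))$ at the vertex $w$, and similarly $\pi_T(v,r(T)) = \pi_T(v,w)\cup\pi_T(w,r(T))$; in particular $\pi_T(u,w)\subseteq\pi_T(u,r(T))$ and $\pi_T(v,w)\subseteq\pi_T(v,r(T))$.

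The second step is to show $\pi_T(u,v) = \pi_T(u,w)\cup\pi_T(w,v)$. The walk from $u$ up to $w$ followed by the walk from $w$ down to $v$ is a walk in $T$ from $u$ to $v$; by minimality of $w$ among common ancestors of $u$ and $v$, the $w$-to-$v$ portion never revisits $w$ or any proper ancestor of $w$, so this walk is vertex-simple and hence equals the unique simple path $\pi_T(u,v)$. Combining with the previous displays gives $\pi_T(u,v) = \pi_T(u,w)\cup\pi_T(w,v) \subseteq \pi_T(u,r(T))\cup\pi_T(v,r(T))$, which is the claim.

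I do not expect a genuine obstacle here; the only point needing a moment of care is justifying that the concatenated walk through $w$ is simple, which is exactly what the choice of $w$ as the closest common ancestor buys us. As an alternative that sidesteps the least common ancestor entirely, I could argue with edge sets over $\ZZ_2$: the $1$-chain $\pi_T(u,r(T)) + \pi_T(v,r(T))$ is supported on the union of the two root-paths (a subtree of $T$, hence acyclic) and has boundary $\{u\} + \{v\}$, and in an acyclic complex there is exactly one such chain, namely $\pi_T(u,v)$; since it is contained in $\pi_T(u,r(T))\cup\pi_T(v,r(T))$ by construction, the claim follows. Either route is elementary.
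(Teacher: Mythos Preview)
Your proposal is correct and follows essentially the same route as the paper: both arguments decompose $\pi_T(u,v)$ through the least common ancestor $w$ and observe that $\pi_T(u,w)\subseteq\pi_T(u,r(T))$ and $\pi_T(v,w)\subseteq\pi_T(v,r(T))$. The paper separates out the ancestor/descendant case explicitly while you absorb it into the general LCA argument, and your $\ZZ_2$-chain alternative is a pleasant extra, but the core idea is identical.
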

\begin{proof}
If $u$ and $v$ have ancestor / descendent relation, say $u$ is ancestor of $v$, then it is clear that $\pi_T(u,v) \subseteq \pi_T(v, r(T))$, and the claim then follows. 
Otherwise, let $w$ be the common ancestor of $u$ and $v$. It can again be verified that in this case, $\pi_T(u,w) \subseteq \pi_T(u, r(T))$, $\pi_T(v, w) \subseteq \pi_T(v, r(T))$, while $\pi_T(u,v) = \pi_T(u,w) \circ \pi_T(w, v)$. The claim thus follows. 
\end{proof}

$\Tcal_\delta$ is a spanning forest of vertex set $V$. Given any vertex $v\in V$, suppose it is in the tree $T \in \Tcal_\delta$. We denote $\mypath(v):= \pi_T(v, r(T))$ to be the path from $v$ to the root $r(T)$ of $T$. 
Recall that $G_\delta$ is constructed by, for any edge $e = (u,v) \in E_\delta^- \cup E_\delta^+$, adding $e \cup \mypath(u) \cup \mypath(v)$ into $G_\delta$. 

\begin{lemma}\label{claim:onecycle} 
For each edge $\mye = (u,v) \in E^+_\delta$, set $\gamma = e \cup \pi_{\Tcalhat}(u,v)$. Then the cycle $\gamma$ must be contained in  $G_\delta$.
\end{lemma}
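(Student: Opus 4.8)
\medskip
\noindent\emph{Proof plan.}
The plan is to cut the tree path $\pi_{\Tcalhat}(u,v)$ into pieces along the forest $\Tcal_\delta$ and to show that every piece is already part of $G_\delta$ by construction. First I would record the relevant structure: since $\Tcalhat = \Tcal_\delta \cup E^-_\delta$ is a spanning tree of $K$ (by (C1)) and $\Tcal_\delta$ is a subforest of it, every component (tree) $T$ of $\Tcal_\delta$ is an induced subtree of $\Tcalhat$, and therefore each edge of $E^-_\delta = \Tcalhat \setminus \Tcal_\delta$ joins two \emph{distinct} components of $\Tcal_\delta$ (otherwise adding such a chord to a component of $\Tcal_\delta$ would create a cycle in the tree $\Tcalhat$). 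Moreover $\Tcal_\delta$ is a spanning forest of $V$, so every vertex — in particular $u$ and $v$ — lies in some (possibly trivial) tree of $\Tcal_\delta$.

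Next I would perform the decomposition. A simple path in a tree cannot leave an induced subtree and later return to it, so walking along $\pi_{\Tcalhat}(u,v)$ the path breaks up uniquely as
\[
\pi_{\Tcalhat}(u,v) = \pi_{T_0}(u,a_1) \cup f_1 \cup \pi_{T_1}(b_1,a_2) \cup f_2 \cup \cdots \cup f_k \cup \pi_{T_k}(b_k,v),
\]
where $T_0,\ldots,T_k$ are (distinct) trees of $\Tcal_\delta$ with $u\in T_0$ and $v\in T_k$, and each $f_j=(a_j,b_j)\in E^-_\delta$ satisfies $a_j\in T_{j-1}$, $b_j\in T_j$; the trivial case $k=0$ means $u,v$ lie in the same tree of $\Tcal_\delta$ and the path is just $\pi_{T_0}(u,v)$.

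Finally I would verify each piece lies in $G_\delta$. By construction $G_\delta$ contains, for the edge $\mye=(u,v)\in E^+_\delta$, the set $\mye\cup\mypath(u)\cup\mypath(v)$, so $\mye\in G_\delta$ and the root paths $\mypath(u),\mypath(v)\subseteq G_\delta$; likewise, for each bridging edge $f_j\in E^-_\delta$, $G_\delta$ contains $f_j\cup\mypath(a_j)\cup\mypath(b_j)$, so $f_j\in G_\delta$ and $\mypath(a_j),\mypath(b_j)\subseteq G_\delta$. For the within-tree segments, each of $\pi_{T_0}(u,a_1)$, $\pi_{T_j}(b_j,a_{j+1})$, and $\pi_{T_k}(b_k,v)$ joins two vertices of one and the same tree $T_j$ of $\Tcal_\delta$, so Claim \ref{claim:onetree} bounds it by the union of the two corresponding root paths $\mypath(\cdot)\cup\mypath(\cdot)$, all of which were just shown to lie in $G_\delta$. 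Assembling the pieces yields $\gamma=\mye\cup\pi_{\Tcalhat}(u,v)\subseteq G_\delta$, as desired. The one step needing care is the decomposition: one must check that every maximal within-tree segment has both endpoints in a common $\Tcal_\delta$-tree (so Claim \ref{claim:onetree} applies) and that every other edge of $\pi_{\Tcalhat}(u,v)$ is an $E^-_\delta$ edge, both of which follow from the components of $\Tcal_\delta$ being induced subtrees of the spanning tree $\Tcalhat$; the rest is bookkeeping.
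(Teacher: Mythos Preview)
Your proposal is correct and follows essentially the same approach as the paper: both decompose the tree path $\pi_{\Tcalhat}(u,v)$ into maximal segments lying in individual trees of $\Tcal_\delta$, joined by bridging edges from $E^-_\delta$, then invoke Claim~\ref{claim:onetree} on each within-tree segment and the definition of $G_\delta$ (via Eqn~(\ref{eqn:outputG})) on each bridging edge and on $\mye$ itself to conclude. You are a bit more explicit than the paper in justifying why the decomposition is well-formed (induced subtrees, $E^-_\delta$ edges joining distinct components), but the argument is the same.
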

\begin{proof}
Consider the path $\pi = \pi_\Tcalhat(u,v)$: it will be broken into $k\ge 0$ maximally connected pieces from $\Tcal_\delta$, connected by edges in $E^- \cup E^+$. 
If $k = 0$, we are done, because this means that $u, v$ are contained in the same tree $T$ in $\Tcal$, and it then follows from Claim \ref{claim:onetree} that
\[\pi = \pi_T (u, v) \subseteq \mypath(u) \cup \mypath(v) \subseteq G_\delta . \] 

\begin{figure}[htbp]
\begin{center}
    \begin{tabular}{ccc}
    \includegraphics[height=3cm]{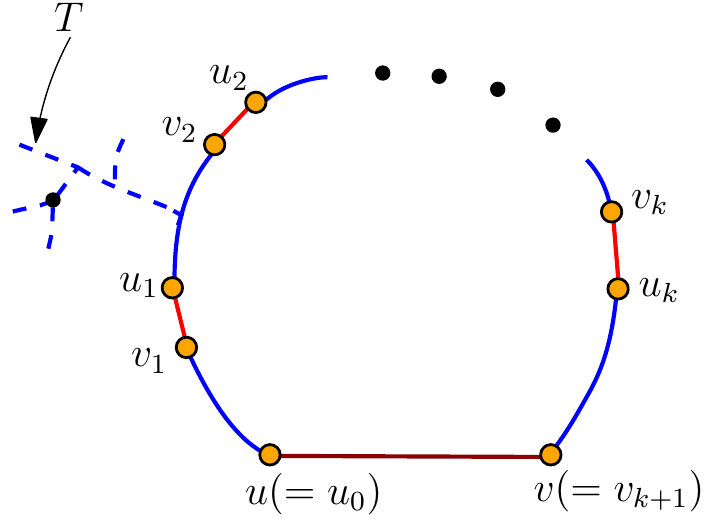} & 
        \includegraphics[height=2.8cm]{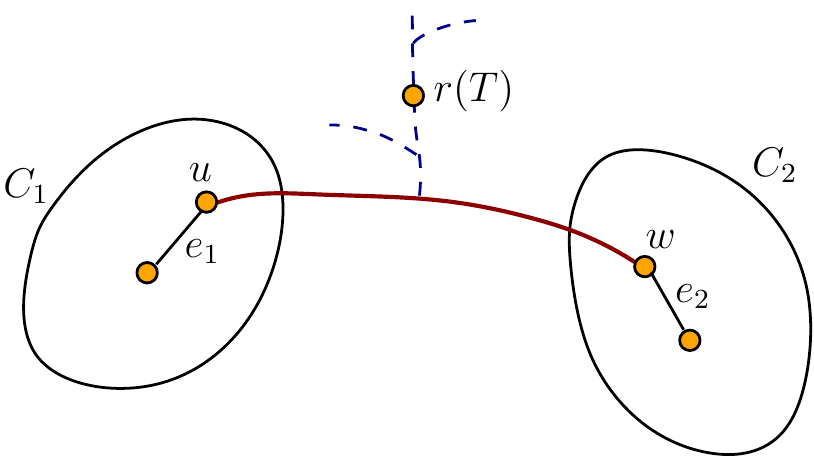} &
    \includegraphics[height=3cm]{figures/brokencycle2.pdf} \\
    (a) & (b) & (c)
    \end{tabular}
\end{center}
\vspace*{-0.08in}
\caption{(a) The path $\pi = \pi(u,v)$ is broken into $k+1$ pieces, each of which (blue subcurves) is a maximal connected component in $\pi \cap \Tcal_\delta$, while the connecting edges (red edges $(v_i, u_i)$'s) must come from $E_\delta^- \cup E_\delta^+$. (b) The solid red path is the tree path $\pi = \pi_T(u, w)$ connecting $u$ and $w$ in the tree $T$ from the spanning forest $\Tcal_\delta$. The root of $T$ is $r(T)$. (c) The solid red curve is $G_\delta$, while dashed trees are components in $\Ghat_\delta \setminus G_\delta$. The closure of each $C_i$ connects to $G_\delta$ at one point $w_i$, and thus its closure can deformation retract to $w_i$. }
\label{fig:brokencurve} 
\end{figure}
So assume that $k> 0$, and the edges connecting these pieces are $e_1 = (v_1, u_1), \ldots, e_k = (v_k, u_k)$ from $u$ to $v$ along $\pi$; see Figure \ref{fig:brokencurve} (a). Obviously, for each $i\in [1, k]$, $e_i \notin \Tcal_\delta$ and $e_i \in E_\delta^- \cup E_\delta^+$. 
Set $u_0 = u$ and $v_{k+1} = v$. It then follows that for any $i\in [0, k]$, $u_i$ is connected to $v_{i+1}$ within some tree, say $T \in \Tcal_\delta$. 
By Claim \ref{claim:onetree} that the portion of $\pi$ from $u_i$ to $v_{i+1}$ must be contained in $path(u_i) \cup path(v_{i+1})$. 
Applying this for all $i \in [0,k]$, it follows that 
\begin{align*}
    \pi \subseteq \big(\bigcup_{i\in [0,k]} (path(u_i) \cup path(v_{i+1}) \big)~ \bigcup~ \big( e_1 \cup e_2 \cdots \cup e_k\big) 
    \\
    = \big( path(u) \cup path(v) \big) ~\bigcup~ \big( e_1 \cup path(v_1) \cup path(u_1) \big) & \bigcup \cdots \bigcup ~\big( e_k \cup path(v_k) \cup path(u_k) \big). 
\end{align*}
As all edges $e_1, \ldots e_k$ and $e$ are all in $E^-\cup E^+$, it then follows that $\pi \subseteq G_\delta$ and thus $\gamma = e \cup \pi \subseteq G_\delta$. 
\end{proof}

\begin{lemma}\label{claim:bettinumbers} 
$\beta_0(G_\delta)  = \beta_0(\Ghat_\delta)$, and $\beta_1(G_\delta) = \beta_1(\Ghat_\delta)$.
\end{lemma}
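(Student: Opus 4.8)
The plan is to leverage what was already established about $\Ghat_\delta$ in the proof of Lemma~\ref{lem:Ghatgood}: namely that $\Ghat_\delta = \Tcalhat \cup E^+_\delta$ with $\Tcalhat$ a spanning tree of $K$, so that $\Ghat_\delta$ is connected (hence $\beta_0(\Ghat_\delta)=1$) and $\beta_1(\Ghat_\delta) = g := |E^+_\delta| = |\dgm_1 \Fcal_\rho(\delta)|$. Recall also that $G_\delta \subseteq \Ghat_\delta$, that $E^-_\delta \cup E^+_\delta \subseteq G_\delta$, and therefore that every edge of $\Ghat_\delta \setminus G_\delta$ lies in $\Tcal_\delta$. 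Since $\Ghat_\delta$ and $G_\delta$ are graphs, they carry no $2$-chains, so $\beta_1(X) = \dim \Zsf_1(X)$ for $X \in \{G_\delta, \Ghat_\delta\}$; this is the identity I would exploit for both halves.

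For the $\beta_1$ equality I would prove the two inequalities separately. For $\beta_1(G_\delta) \ge g$: by Lemma~\ref{claim:onecycle}, for each $\mye = (u,v)\in E^+_\delta$ the fundamental cycle $\gamma_\mye := \mye \cup \pi_{\Tcalhat}(u,v)$ lies inside $G_\delta$; since $\mye$ is positive it is not an edge of $\Tcalhat$, hence $\mye$ occurs in $\gamma_{\mye'}$ iff $\mye = \mye'$, which forces the $g$ cycles $\{\gamma_\mye\}_{\mye\in E^+_\delta}$ to be $\ZZ_2$-linearly independent in $\Zsf_1(G_\delta)$ (any nonzero combination retains at least one positive edge). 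For $\beta_1(G_\delta)\le g$: $G_\delta$ is a subcomplex of $\Ghat_\delta$, so $\Zsf_1(G_\delta) = \ker(\partial_1|_{\CC_1(G_\delta)}) \subseteq \Zsf_1(\Ghat_\delta)$, giving $\beta_1(G_\delta) \le \beta_1(\Ghat_\delta) = g$. Hence $\beta_1(G_\delta) = g = \beta_1(\Ghat_\delta)$.

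For $\beta_0$ I would first make precise the picture in Figure~\ref{fig:brokencurve}(c). For each tree $T$ of the forest $\Tcal_\delta$, set $S_T := G_\delta \cap T$. From the construction of $G_\delta$, any edge of $G_\delta$ lying in $T$ belongs to a summand $e \cup \mypath(u) \cup \mypath(v)$ with some endpoint, say $u$, in $T$, and then it lies on $\mypath(u) = \pi_T(u,r(T))$; thus $S_T$ is exactly the union of the root-paths $\pi_T(x, r(T))$ taken over endpoints $x\in T$ of edges of $E^-_\delta \cup E^+_\delta$. Consequently $S_T$ is a connected subtree of $T$ containing $r(T)$ that is \emph{ancestor-closed} (if $v\in S_T$ then $\pi_T(v,r(T))\subseteq S_T$), and $V(G_\delta)\cap V(T) = V(S_T)$. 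A short tree argument then shows that each connected component $C$ of the edge set $\Ghat_\delta\setminus G_\delta$ lies in a single tree $T$ and meets $G_\delta$ in exactly one vertex $w_C \in V(S_T)$: two distinct shared vertices would force the tree path between them to lie simultaneously in the edge-disjoint sets $E(C)$ and $E(S_T)$; and if $V(C)$ and $V(S_T)$ were disjoint, then following $\pi_T(v,r(T))$ from any $v\in V(C)$ the first edge leaving $V(C)$ would either lie in $S_T$ (putting an endpoint of it in $V(C)\cap V(S_T)$) or lie in $C$ (putting its far endpoint in $V(C)$), a contradiction either way.

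Finally, the components $C$ of $\Ghat_\delta\setminus G_\delta$ are pairwise vertex-disjoint trees, each meeting $G_\delta$ in the single vertex $w_C$, with $E(\Ghat_\delta)\setminus E(G_\delta) = \bigsqcup_C E(C)$; hence $|V(\Ghat_\delta)| - |V(G_\delta)| = \sum_C(|V(C)|-1) = \sum_C |E(C)| = |E(\Ghat_\delta)| - |E(G_\delta)|$, so $\chi(\Ghat_\delta) = \chi(G_\delta)$. Since $\beta_0 - \beta_1 = \chi$ for a finite graph and we have already shown $\beta_1(G_\delta) = \beta_1(\Ghat_\delta)$, we get $\beta_0(G_\delta) = \beta_0(\Ghat_\delta)$, completing the proof. (The only degenerate case is $E^+_\delta \cup E^-_\delta = \emptyset$, where $G_\delta = \emptyset$ and $\Ghat_\delta = \Tcal_\delta$; there both $\beta_1$ are $0$ and the statement is read with that understanding, while otherwise every tree of $\Tcal_\delta$ occurring in $\Ghat_\delta$ is touched by an edge of $E^-_\delta\cup E^+_\delta$ and the argument above applies verbatim.) The part I expect to require genuine care is the tree-combinatorial ``single attachment vertex'' claim for the components of $\Ghat_\delta\setminus G_\delta$; the $\beta_1$ half is essentially immediate once Lemma~\ref{claim:onecycle} and the subcomplex bound are in hand.
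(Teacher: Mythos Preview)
Your proof is correct. The $\beta_1$ half matches the paper's: both rely on Lemma~\ref{claim:onecycle} to place the $g$ fundamental cycles $\gamma_\mye$ inside $G_\delta$, and you make explicit the upper bound via the inclusion $\Zsf_1(G_\delta)\subseteq \Zsf_1(\Ghat_\delta)$ that the paper leaves implicit.

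For $\beta_0$ you take a genuinely different route. The paper argues directly that $G_\delta$ is connected: pick edges from $E^-_\delta\cup E^+_\delta$ in two putative components, take endpoints $u,w$, and show the tree path between them already lies in $G_\delta$ by the same root-path decomposition used in Lemma~\ref{claim:onecycle}. You instead analyze the components of $\Ghat_\delta\setminus G_\delta$ and prove, by pure tree combinatorics (ancestor-closure of $S_T$ and uniqueness of tree paths), that each attaches to $G_\delta$ at a single vertex; then an Euler-characteristic count finishes. Your approach is longer for this lemma alone, but it buys you something real: the ``single attachment vertex'' claim is exactly the heart of Lemma~\ref{lem:contractible}, which the paper proves \emph{afterwards} using $\beta_1(G_\delta)=\beta_1(\Ghat_\delta)$ together with connectedness of $G_\delta$ to rule out a second attachment. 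So you have effectively merged Lemmas~\ref{lem:contractible} and~\ref{claim:bettinumbers} into one argument that is independent of connectedness, whereas the paper's proof of Lemma~\ref{lem:contractible} leans on Lemma~\ref{claim:bettinumbers}. The paper's route is shorter if one only wants the Betti-number statement; yours is cleaner if one also wants the deformation-retract picture and prefers not to invoke the cycle count to get it.
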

\begin{proof}
That $\beta_1(G_\delta) = \beta_1(\Ghat_\delta)$ follows immediately from Lemma \ref{claim:onecycle}. We now prove that $G_\delta$ and $\Ghat_\delta$ also has the same number of connected components. 
Note that we have already assumed that $K$ is connected, and thus $\Ghat_\delta$ is connected as it contains a spanning tree $\Tcalhat$ of $K$. 
So what remains is to show that $G_\delta$ is connected. 

Assume $G_\delta$ is not connected, and let $C_1, C_2$ be two components of $G_\delta$. Recall that $G_\delta$ is constructed by the union of paths $\bigcup_{e=(u,v)\in E_\delta^- \cup E_\delta^+} \big(e \cup \mypath(u) \cup \mypath(v) \big)$. 
Let $e_1 \in C_1$ be an arbitrary edge from $C_1 \cap (E_\delta^- \cup E_\delta^+)$: Note that such an edge must exist, as otherwise $C_1$ will not be in $G_\delta$. 
Similarly, let $e_2 \in C_2 \cap (E_\delta^- \cup E_\delta^+)$. 
Let $u$ be an endpoint of $e_1$ while $w$ be an edge point of $e_2$. 
We know that $u$ and $w$ are connected in $\Tcal_\delta$ by path $\pi = \pi_{\Tcal_\delta}(u, w)$.  
We now claim that this path must be in $G_\delta$; which contradicts with our assumption that $C_1$ and $C_2$ are two connected components of $G_\delta$. Hence our assumption is wrong, and $G_\delta$ must be connected as well, which finishes the proof of the claim. 

What remains is to show that the path $\pi = \pi_{\Tcal_\delta}(u, w)$ as described above must be in $G_\delta$. 
Let $T\in \Tcal_\delta$ be the tree in $\Tcal_\delta$ containing path $\pi$, and let $r_T$ be its root. We now perform a case analysis based on the location of $r_T$ w.r.t. $u$ and $w$. (Case 1): $u$ is an ancestor of $w$ in $T$; (case 2): $w$ is an ancestor of $u$ in $T$; and (case 3): otherwise. See Figure \ref{fig:brokencurve} (b) for an illustration of (case 3). We first prove that $\pi \subseteq G_\delta$ for (case 3).
In this case, we have that $\mypath(u) \cup \mypath (w)$ is a superset of $\pi$, that is, $\pi\subseteq \mypath(u) \cup \mypath(w)$. Furthermore, since both $e_1, e_2 \in E^-_\delta \cup E^+_\delta$, by construction of $G_\delta$, $\mypath(u) \subseteq G_\delta$ and $\mypath(w) \subseteq G_\delta$. It then follows that $\pi \subseteq G_\delta$. 
Using a similar argument, one can show that $\pi \subseteq G_\delta$ for (case 1) and (case 2) as well. 

Putting everything together, we have that $G_\delta$ is connected and thus $\beta_0(G) = \beta_0(\Ghat_\delta)$. This finishes the proof of the lemma. 
\end{proof}

Now let $C_1, \ldots, C_s$ be the components of $\Ghat_\delta \setminus G_\delta$, and for each $i\in [1, s]$, let $\overline{C}_i$ be the closure of $C_i$. 
We claim that $\overline{C}_i \setminus C_i$ can contain only one vertex, say $w_i$. See Figure \ref{fig:brokencurve} (c). 
Indeed, as $\Ghat_\delta \setminus G_\delta \subseteq \Tcal_\delta$, each $\overline{C}_i$ is simply connected (i.e, it is a subtree of some tree in $\Tcal_\delta$). 
Suppose $\overline{C}_i \setminus C_i$ contains at least two vertices, say $w$ and $w'$. As $\overline{C}_i$ is connected, there is a path $\pi_{\overline{C}_i}(w, w')$ connecting $w$ to $w'$ in $\overline{C}_i$. 
On the other hand, as $G_\delta$ is connected (Lemma \ref{claim:bettinumbers}), there is another path $\pi_{G_\delta}(w, w')$ connecting $w$ and $w'$. 
This gives rise to a cycle $\gamma = \pi_{\overline{C}_i}(w,w') \cup \pi_{G_\delta}(w,w')$ in $\Ghat_\delta$, and this cycle is not in $G_\delta$. This however contradicts to what we just proved that $\beta_1(G_\delta) = \beta_1(\Ghat_\delta)$. 
Hence this cannot happen. 

Hence $\overline{C}_i$ can only connect to $G_\delta$ via one point $w_i$ as illustrated in Figure \ref{fig:brokencurve} (c). 
It then follows that $\Ghat_\delta$ deformation retracts to $G_\delta$ by contracting each subtree $\overline{C}_i$ to the point $w_i$. This finishes the proof of Lemma \ref{lem:contractible}. 

\section{PCD Algorithm via Sparse Weighted-Rips}
\label{sec:PCD}

Given a PCD $P \subset \reals^d$, we now wish to compute a graph skeleton of $P$. Our algorithm can be easily extended to the case where these points $P$ are not embedded but with only pairwise distances (or similarity) given. 

\paragraph{A baseline approach.}
A natural approach is to (i) build a simplicial complex $K$ from $P$ to "approximate" the space behind $P$, (ii) estimate a density function $\rho$ at $P= V(K)$, and (iii) then perform algorithm \oldDM.
A reasonable choice for $K$ is the so-called Rips complex $\rips^r(P) := \{ (p_{i_0}, \ldots, p_{i_k}) \mid \|p_{i_j} - p_{i_{j'}}\|\le r \}$: Intuitively, an edge $(p, q)\in \rips^r(P)$ if the distance between points $p,q\in P$ is at most $r$. A triangle is in $\rips^r(P)$ if all three edges are in, and similarly for higher-dimensional simplices. 
However, we only need 2-skeleton of $\rips^r(P)$, which we still denote by $\rips^r(P)$. The estimated density of a point is determined by summing the distances under a Gaussian kernel to each of its KNN for some k.
We refer to this algorithm as \mybaseline{} where we use the $\rips^r(P)$ as choice of complex $K$, that is, we perform \oldDM($\rips^r(P), \rho, \delta$). 

\paragraph{Challenges with \mybaseline.}
This \mybaseline{} approach faces several challenges. (C-1) It is usually hard to choose the right radius $r$ and the topology of $\rips^r(P)$ crucially decides the final output graph: see Figure \ref{fig:gaussian_circle_results}, where if $r$ is too small, the shape is not yet captured by $\rips^r(P)$; for larger $r$, there can be  spurious topological features (extra loops) in $K$ which cannot be simplified by persistence (as these loops are generated by edges with infinity persistence). 
There is also the issue that even if one has found a radius $r$ value such that $\rips^r(P)$ can provide the correct topology, the geometry of the graph skeleton computed by this baseline algorithm may lose resolution (e.g, Figure \ref{fig:gaussian_circle_results} (H)).

(C-2) Points may be sampled at non-uniform resolution, hence there may  not exist a single good $r$ that can capture all features; see Figure \ref{fig:two_circle_mains}. 
(C-3) Even for a moderate radius $r$, the size of Rips complex becomes large, making persistence computation very costly. 
(C-4) The Rips complex can be a poor approximation of the hidden space when there is background noise; see Figure \ref{fig:two_circle_mains} (F) and (H), where even though the hidden space consists of 5 independent cycles (see Section \ref{sec:exp}), with much background noise, even a small radius $r$ makes the Rips complex connect these noisy points and lose the hidden structure. Removing low-density points can help; however in general that can be challenging when the density distribution is non-uniform. 

\paragraph{A DTM-Rips based approach.} 
The Rips complex is defined based on the Euclidean distance between input points, and does not handle noise or non-uniform point samples well. 
The \emph{distance-to-measure (DTM)} distance is introduced in \cite{CCM11} to provide a more robust way to produce distance field for noisy points. 
We use the work of \cite{BCOS15} to induce a weighted Rips complex from DTM distances, which we now describe briefly. In particular, 
given a set of points $(P, \dd_P)$ equipped with metric $d_P$ (for points $P\subset \reals^d$, $\dd_P$ is the Euclidean distance in $\reals^d$). 
For a fixed integer parameter $k > 0$, let ${\sf kNN}(p)$ denote the set of $k$-nearest neighbor of $p$ in $P$ under metric $\dd_P$. 
For each $p\in P$, we set \emph{(DTM-induced) weight $w_p$} as $w_p = \sqrt{\frac{1}{k} \sum_{q\in {\mathsf kNN}(p)} \dd_P^2(p, q)}$, and the \emph{weighted radius of $p$ at scale $\alpha$} as $r_p(\alpha) =  \sqrt{\alpha^{2} - w_{p}^2}.$
Now given a simplex $\sigma = \{p_{i_0},\ldots, p_{i_s} \}$, we define $\DTMrho(\sigma)$ to be
\begin{multline*}
\DTMrho(\sigma) = \min \{\alpha'  \mid w_{p_{i_j}} \le \alpha', ~\text{and} ~\dd_P(p_{i_{j}}, p_{i_{j'}}) \le r_{p_{i_{j}}}(\alpha') + r_{p_{i_{j'}}}(\alpha'), \forall j\neq j' \in [0,s]\}.
\end{multline*}
This gives an ordering of all possible simplices formed by points in $P$ (again, edges and triangles are needed), and the resulting filtration is called \emph{DTM-Rips filtration $\Fcal_{\DTMrho}$}. 
Equivalently, consider the DTM-weighted Rips complex $\wR^\alpha(P)$ at scale $r$ defined as: $ \wR^r(P) = \{ \sigma= \{p_{i_0},\ldots, p_{i_s} \} \mid \DTMrho(\sigma) \le r \}$. The sequence of $\wR^r(P)$ with increasing scales $r=[0, \infty)$ gives rise to the filtration $\Fcal_\DTMrho$. 
The weight $\omega_p$ is a certain average distance to the $k$NN of $p$ and thus intuitively an inverse density estimator (high density points have low weight). Given two points $p, q\in P$, the edge $\sigma = (p,q)$ has {\bf smaller $\DTMrho(\sigma)$} if $p$ and $q$ has lower weight (thus {\bf higher density}). Simplices spanned by {\bf higher} density points will enter {\bf earlier} into the filtration $\Fcal_{\DTMrho}$. 

\paragraph{Incorporating data sparsification.} 
However, the size of weighted Rips can still be large. To this end, we deploy the sparsified version of DTM-Rips developed in \cite{BCOS15}. The resulting filtration is denoted by \emph{sparse DTM-Rips $\widehat{\Fcal}_\DTMrho(\eps)$} which uses a sparsification parameter $\eps > 0$. See \cite{BCOS15} for details of its construction.
Our final graph skeletonization algorithm for PCDs, denoted by \DMPCD($P, k, \eps, \delta$), consists of only two steps:

(Step 1). Compute the sparse DTM-Rips filtration $\widehat{\Fcal}_\DTMrho(\eps)$ using parameters $k$ (to compute DTM-weights of points) and $\eps$ (for sparsification). 

(Step 2). Apply \newDM($K, \widehat{\Fcal}_\DTMrho(\eps), \delta$) to compute the graph skeleton of $P$, where $K$ is given implicitly as all simplices in $\widehat{\Fcal}_\DTMrho(\eps)$. 

Intuitively, using the DTM-weight alleviates the problem of noisy points (challenge (C-4)), using sparsification addresses the issue of size (challenge (C-3)), while using the entire sparse DTM-Rips filtration allows us to use all radii/scales (instead of a Rips complex at a fixed radius $r$ as in \mybaseline), thereby addressing challenges (C-1) and (C-2). 
Also, while at a larger radius, the filtration will include edges and triangles spanned by far-away points. Theorem \ref{thm:lexOPHC} guarantees that we will output those important loop features using edges that come in as early as possible, i.e., those spanned by higher density points (with smaller $\DTMrho$ values) whenever possible. This allows \DMPCD{} to capture hidden graphs across different scales. See Figure \ref{fig:two_circle_mains}. 

\section{Experimental Results}
\label{sec:exp}
We compare our \DMPCD{} algorithm with the \mybaseline{} algorithm introduced in Section \ref{sec:PCD}, 
and with SOA graph skeletonization algorithms based on Reeb graph \cite{GSBW11} and Mapper \cite{mapper} (referred to as \myReeb{} and \myMapper{} below). (\myMapper{} can produce higher dimensional structures beyond graph skeleton, although often 1D structures are used in practice.) We test on two synthetic point sets and three real datasets. Unless otherwise specified, we use $k=15$ and $\eps = .99$ in our \DMPCD($P, k,\eps,\delta$); while the persistence simplification threshold $\delta$ depends on the point set at hand. For \mybaseline{}, \myReeb{}, and \myMapper{} we report the results of the best parameters we find for them. In particular, a key input for the \myMapper{} algorithm is an appropriate filter function. We tested several standard choices, including distance to base point, eccentricity, density, graph Laplacian eigenfunction and so on, and report the best results found. \emph{For all experiments, all methodologies are run on the original point cloud data, and the figures showing results of higher dimensional data display projections of the results into a lower dimensional space.}
Significantly more results and details are in the Appendix. 

\paragraph{Overview.}  Methods are run on five total datasets - two lower-dimensional (2-D) synthetic datasets, image patches dataset \cite{carlsson08} (8-D), time-delay embedding of traffic sensor datasets \cite{caltrans} (6-D), and Coil-20 \cite{nene96columbia} (16384-D).  Our experiments show that \DMPCD{} is able to extract the true underlying structure of all of these datasets while the other methodologies struggle with noise (image patches and traffic datasets), capturing features at different scales (synthetic and Coil-20 datasets), and having geometrically faithful outputs (synthetic datasets).  Additionally, the size of the filtration used by \DMPCD{} is consistently smaller than that used by \mybaseline{}.

\paragraph{Synthetic datasets.}
We create two synthetic PCDs to illustrate the behavior of our \DMPCD{} algorithm. {\sf Circle} dataset contains a noisy and non-uniform sample around a hidden circle with 2050 points. \emph{\DMPCD{} is able to recover a geometrically faithful hidden circle.  The other methodologies, which require more parameters, also recover a hidden circle, but with a less geometrically faithful structure.  Additionally, \mybaseline{} requires far greater running time for comparable results.} See Figure \ref{fig:gaussian_circle_results}: the output of our method (in (C)) recovers the hidden circle. 
In comparison, outputs of \mybaseline{} algorithm over the Rips complex $\rips^r(P)$ at different radius values are shown in (E) -- (H). 
The total number of simplices involved in our sparsified DTM-Rips filtration is $368,276$. 
The successful \mybaseline{} result (shown in Figure \ref{fig:gaussian_circle_results} (G)) however requires $7,708,243$ simplices, which is about {\bf 20 fold} increase in size. This results in a drastic run-time difference (2.6 seconds vs. 44.8 seconds) between \DMPCD{} and \mybaseline{}.  In general, \DMPCD{} is more efficient than \mybaseline{} because persistence is computed on a much smaller filtration (see Appendix for fully detailed timing results on all datasets).
Also, in general, it is not clear which $r$ to choose for \mybaseline, and if $r$ is too large (e.g., Figure \ref{fig:gaussian_circle_results} (H)), then the output graph  
is geometrically not faithful any more -- this is because long edges are now present in the Rips complex and can appear early in the lower-star filtration in the \oldDM{} algorithm. In contrast, our output (in (C)) takes advantage of the lex-optimality of the algorithm (Theorem \ref{thm:lexOPHC}) and thus always uses "good" edges (small edges from high density regions that enter the filtration early) first. 
The \myReeb{} approach also uses Rips complex at a fixed scale $r$ and thus has similar issues with \mybaseline{}. 
The \myMapper{ approach (Figure \ref{fig:gaussian_circle_results} (J)) correctly captures topology of the space, but misses some geometric details.}

The top row of Figure \ref{fig:two_circle_mains} shows the reconstruction from a set of 300 points non-uniformly sampled from two circles (of different sizes) with background noise. \emph{\DMPCD{} successfully captures both circles, while other methods either fail to capture both circles, or have a topologically correct output that is less geometrically faithful than our method's output.}
Our algorithm scans through all scales in the filtration and captures both loop features. In contrast, both \mybaseline{} and \myReeb{} can capture only one loop. Using a small radius $r$, they can capture the small loop but not the big one. To capture the large loop, they need to use a large radius $r$ (as in Figure \ref{fig:two_circle_mains} (C) and (D)), at which point the small loop is destroyed in the Rips complex. \myMapper{} is able to capture both loops, but again some geometric details are lost (Figure \ref{fig:two_circle_mains} (E)).  See more results in the Appendix.

\begin{figure*}[htbp]
\begin{center}
\includegraphics[width = 0.8\linewidth]{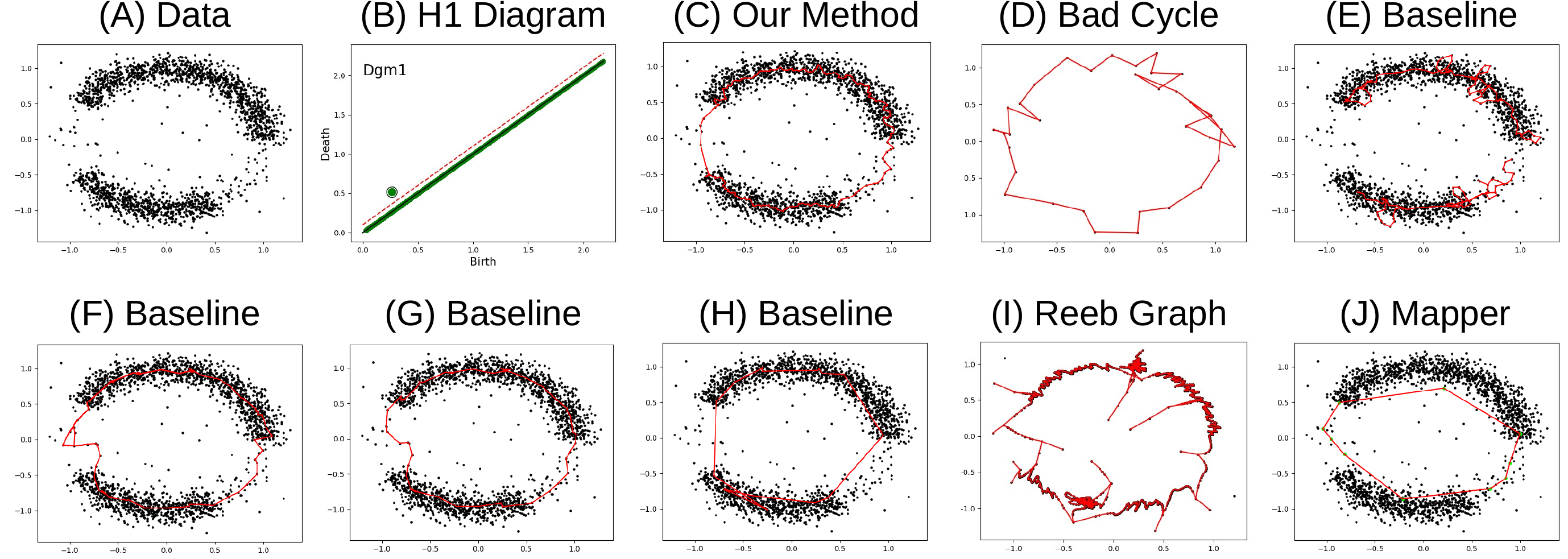}
\end{center}
\vspace*{-0.2in}\caption{{\small (A) Noisy sample of a circle. (B) 1-D persistence diagram w.r.t. our later sparse DTM-Rips filtration. Persistence points are green. Only one point (big point) $\pp$ has persistence larger than some threshold $\delta$ (above the red dotted line). (C) Output of our \DMPCD{} method with persistence threshold $\delta = .25$, which is a lex-optimal persistent cycle w.r.t. the only high persistence point $\pp$ in (B). (D) shows a "bad" persistent cycle w.r.t. the same high persistence point $\pp$. In contrast, our output in (C) uses good (high density) edges whenever possible.  
(E) -- (H) are outputs from the \mybaseline{} algorithm using different radius $r$. (E) $r = 0.1$:
The underlying shape (circle) is not yet captured. (F) $r = 0.2$: There are spurious loops that cannot be simplified via persistence. (G) $r = 0.25$: The circle is recovered; however, the size of $\rips^r(P)$ is now 20 times that of our sparse DTM-Rips filtration.  (H) $r = 1.05$: The output loses geometric details. 
(I) is output from \myReeb{} using $\rips^r(P)$ with radius $r= .25$ and has much noise. 
(J) is output from \myMapper{} using graph Laplacian filter ($k = 15$). }} 
\label{fig:gaussian_circle_results}
\end{figure*}

\paragraph{Image patches dataset.}
The image patches dataset from \cite{carlsson08} contains
 $50K$ points in $\mathbb{S}^7 \subset \reals^8$, each of which corresponds to a 3x3 image patch \cite{Lee2004TheNS}. We subsample $10K$ points randomly so computationally we can experiment with Rips complexes at different radii for the \mybaseline{}.
 \emph{\DMPCD{} is the only method that can extract the true underlying structure from the dataset.  All other methods fail to extract any meaningful structure.}
 The projection of points in 3D (Figure \ref{fig:two_circle_mains} (F)) is very noisy. However, the analysis of \cite{carlsson08} 
 shows that the underlying space has a "three-circle model", with two circles intersecting the third circle twice but not intersecting each other, thus the first Betti number of the underlying space is $5$.
 Our \DMPCD{} (shown in Figure \ref{fig:two_circle_mains} (G)) successfully recovered the same "three-circle model" (with correct $\beta_1=5$)  directly from raw data {\bf without} preprocessing, and the locations of these (outer, horizontal, and vertical) circles match those shown in \cite{carlsson08}. Both \mybaseline{} and \myMapper{} (in Figure \ref{fig:two_circle_mains} (H) and (I)) fail to capture it. (More details in the Appendix.) Results by \myReeb{} are omitted for this data set, as the algorithm does not handle background noise well and results are poor.

\begin{figure*}[htbp]
\begin{center}
\includegraphics[width = 0.8\linewidth]{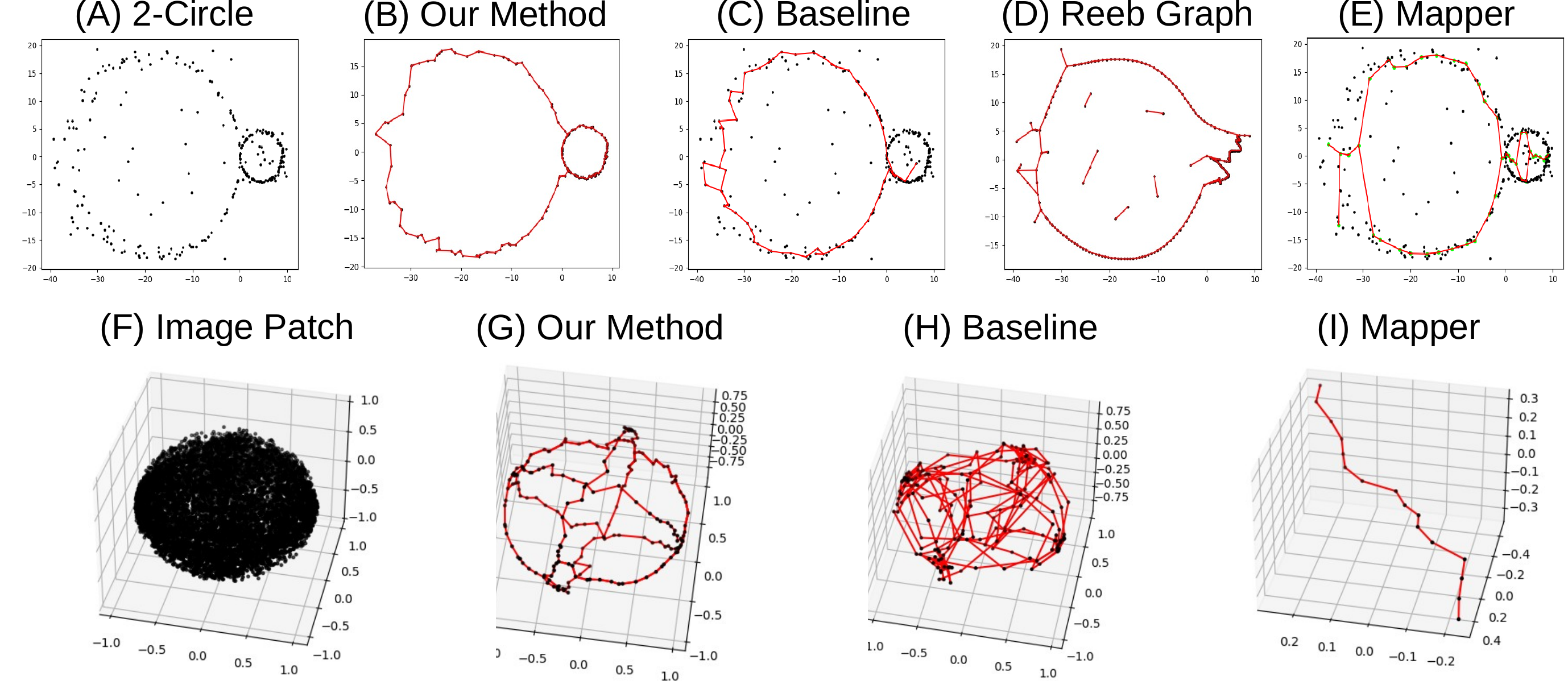}
\end{center}
\vspace*{-0.2in}
\caption{{\small Top row: 2-circle data. Output of our \DMPCD{} method with persistence threshold $\delta = 1.2$ in (B); of \mybaseline{} in (C), of \myReeb{} in (D), and of \myMapper{} in (E). Using a smaller radius $r$ for \mybaseline{} and \myReeb{} will lose the large circle. \myMapper{} output misses geometric details. Bottom row: image patch dataset with projection in $\reals^3$ shown in (F). Our output with persistence threshold $\delta = .146$ in (G) captures the 3-circle model (with $\beta_1 = 5$) perfectly. For \mybaseline{} in (H), further simplification will remove the main circle from the "3-circle model" while keeping all the noisy ones. \myMapper{} (I) (base point filter) is unable to capture any of the loops. }}
\label{fig:two_circle_mains}
\end{figure*} 

\paragraph{Traffic flow dataset.} 
We extract two time-series  from \cite{caltrans}, which are the traffic flows 
at detector $\#$409529 from the time-range 10/1/2017 to 10/14/2017 and from the time-range 11/19/2017 to 12/2/2017 (including Thanksgiving). 
Each time-series is mapped to a PCD in $\reals^6$ via time-delay embedding as proposed by \cite{PH15},
who also propose that loops in the resulting PCD can be used to detect quasi-periodic behavior in the original time-series data. We note that a normal time range has one major loop, indicating one major periodicity; while the Thanksgiving period has two: a normal one and one that indicates the traffic pattern over the holiday weekend. \emph{\DMPCD{} recovers these loops much better than \mybaseline{} and \myMapper{}.} Results by \myReeb{} are again omitted due to low quality.

\begin{figure*}[htbp]
\begin{center}
\includegraphics[width = 0.8\linewidth]{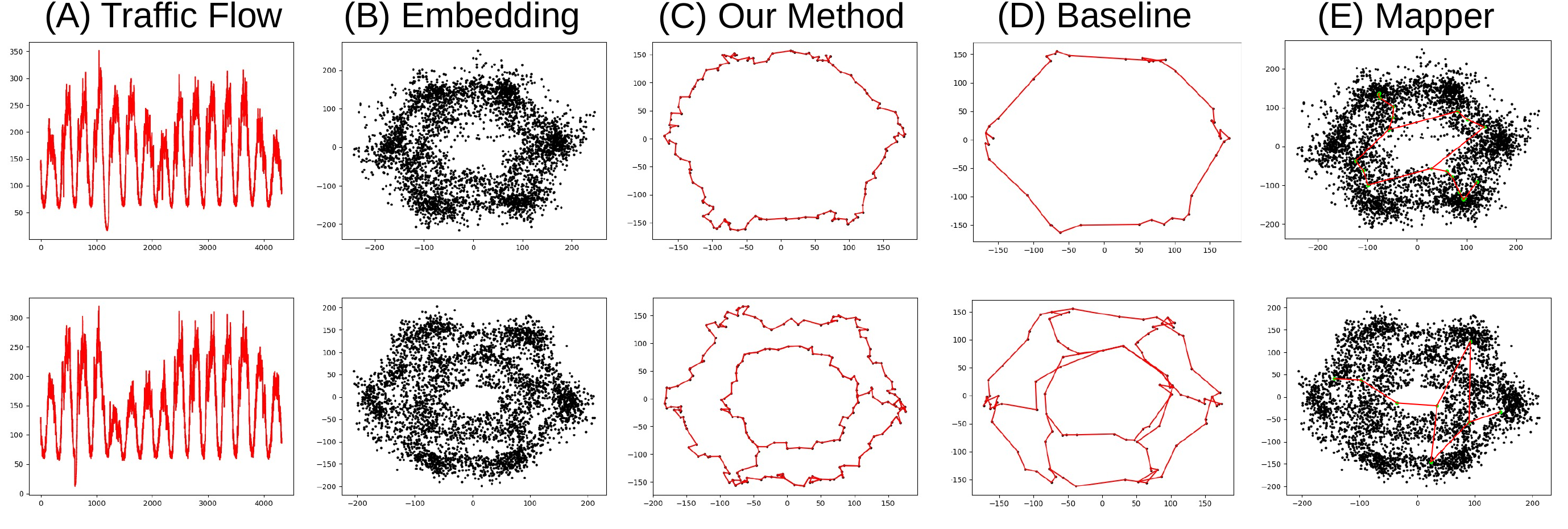}
\end{center}
\vspace*{-0.2in}
\caption{{\small Top row: traffic flow for range (10/1/2017 - 10/14/2017) and bottom row is for range (11/19/2017 - 12/2/2017). (A) input time series, and (B) 2d projections of the time delay embeddings of the time-series.
(C) Outputs of our \DMPCD{} algorithm with persistence thresholds of $\delta = 50$ (top row) and $\delta = 12.5$ (bottom row).  (D) Outputs of the \mybaseline{} approach. 
For the Thanksgiving period, any further simplification will destroy the outer-loop but not the cross connections.
(E) Outputs of \myMapper{} with graph Laplacian filter ($k = 15$). }} 
\label{fig:traffic}
\end{figure*} 

\paragraph{Coil-20 dataset.} 

In our final experiment, we use the Coil-20 dataset provided by \cite{nene96columbia}.  More specifically, we take a subset of 17 objects - removing objects 5, 6, and 19.  Objects 5 and 9 are both medicine boxes, and objects 3, 6, and 19 are toy cars, and we wanted to evaluate our method's performance on a dataset containing unique objects.  We refer to this subset as Coil-17.  Following the process used by \cite{mcinnes2020umap} to convert images to point clouds, we convert each 128 x 128 gray scale image into a 16384 dimensional vector. Hence the input is a set of 1224 points in $\reals^{16384}$. Outputs of other methods can be found in the Appendix.

We visualize the data in two dimensions using UMAP dimensionality reduction with L1 metric.  Presumably, each class forms a high-dimensional loopy shape.  We run \DMPCD{} using L1 metric with $k = 5$.  \emph{\DMPCD{} is able to capture most of the individiual coils UMAP does, while providing a more correct representation of some classes than UMAP.} Shown in Figure \ref{fig:coil-full} is the UMAP reduction with objects uniquely colored (A), the output of our \DMPCD{} algorithm with a persistence threshold of 0 (B), the output of our \DMPCD{} algorithm with a persistence threshold of 56 (C), and the output in (C) after removing the critical edges with L1 length above a threshold of 1700 in the 16384 dimensional original space (D).  The raw output contains the loops that we would expect to see based on our understanding of the data and the shapes formed in the UMAP projection.  The raw output also contains many other edges, revealing more relationships both within individual classes and across multiple classes in the feature space.  We remove the longer edges in order to better highlight the features of the output that capture individual objects.

\begin{figure}[H]
\begin{center}
\includegraphics[width = 0.9\linewidth]{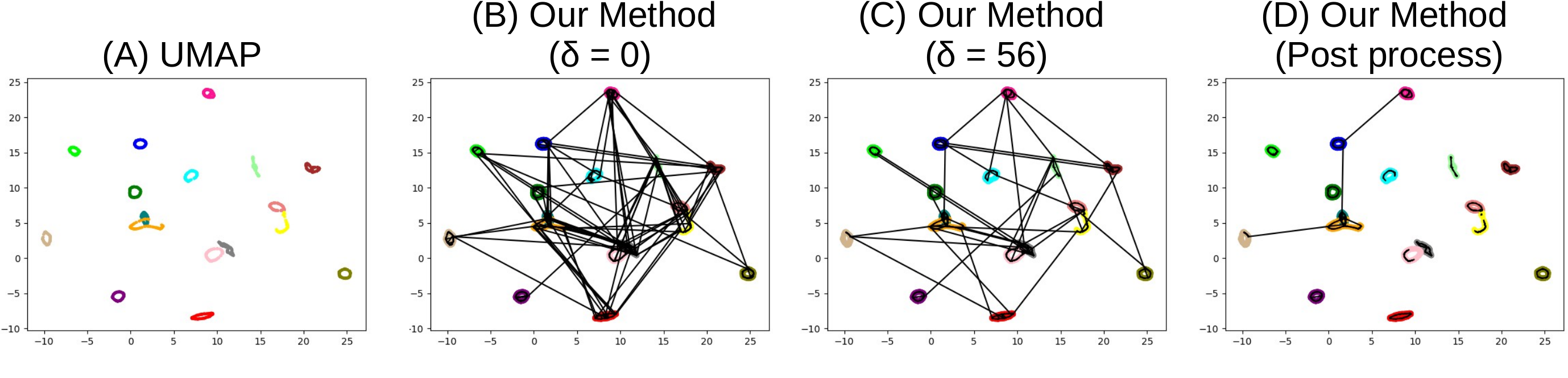}
\end{center}
\vspace*{-0.1in}\caption{{\small Coil: (A) The UMAP projection (using L1 metric) of Coil-17. (B) Output of our \DMPCD{} algorithm with persistence threshold $\delta = 0$.  (C) Output of our \DMPCD{} algorithm with persistence threshold $\delta = 56$. (D) The output shown in (C) with critical edges of L1 length greater than 1700 removed from the output. }}
\label{fig:coil-full}
\end{figure}

Taking a closer look at Figure \ref{fig:coil-full} (D), there are eight objects that the \DMPCD{} captures in the same exact manner that the UMAP projection does.  Close up pictures of these eight objects are shown in Figure \ref{fig:coil-umap-match}.

\begin{figure}[H]
\begin{center}
\includegraphics[width = 0.9\linewidth]{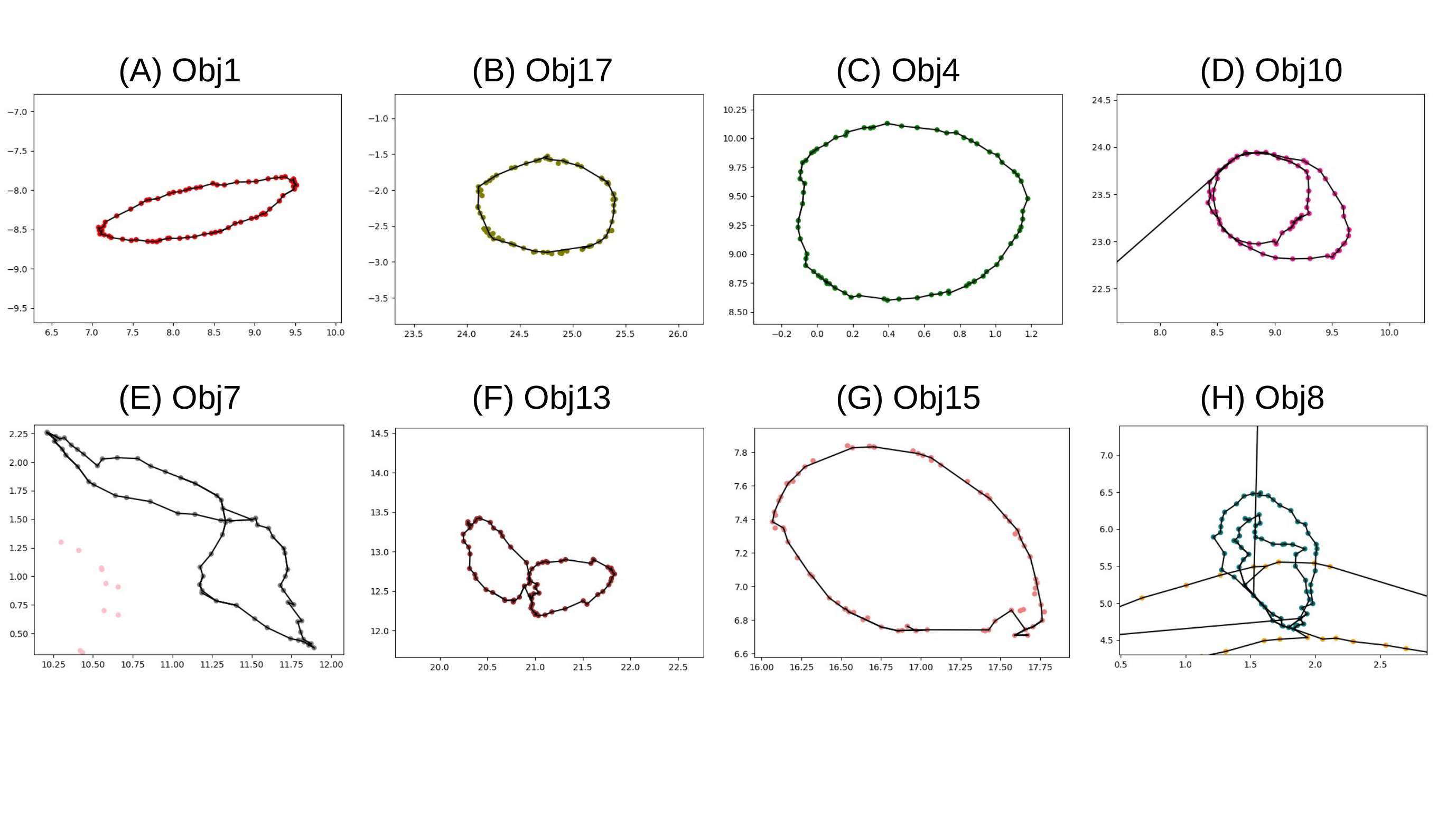}
\end{center}
\vspace*{-0.1in}\caption{{\small Coil: Zoom ins of Figure \ref{fig:coil-full} (D) on the eight objects for which our \DMPCD{} post processed output matches the UMAP projection.}}
\label{fig:coil-umap-match}
\end{figure}

There are also four objects that both the \DMPCD{} output and the UMAP projection capture as loops, but the loops differ between the two methods.  Close up pictures of these four objects are shown in Figure \ref{fig:coil-umap-diff}.  Object 11 (Figure \ref{fig:coil-umap-diff} (A)) is a single loop in the UMAP projection, but is actually two full loops in the \DMPCD{} output.  A closer look (Figure \ref{fig:coil-umap-diff} (B)) at images 16, 17, 54, and 55 shows two separate loops in the output.  The L1 distances between these images in the 16384 dimensional original space (1069.2157046029981, 1038.6980409049952, 693.1607849029981, and 566.901973108997) for pairs (16,54), (17,55), (16,17), and (54,55) respectively) do not match the distances between the pairs in the UMAP projections.  This indicates that the UMAP projection does not preserve the underlying structure of this object, and that the \DMPCD{} output containing two loops is correct.  

A similar result is obtained for Object 14 (Figure \ref{fig:coil-umap-diff} (C)), where UMAP projects a single loop and the \DMPCD{} output contains multiple loops.  Objects 11 and 14 are symmetrical, adding further justification that multiple loops is a better skeletonization.

Object 2 (Figure \ref{fig:coil-umap-diff} (D)) makes a complete loop in the \DMPCD{} output, but the loop looks incomplete in the UMAP projection.  We ran UMAP projections on smaller subsets of Coil-20, some of which project Object 2 as a clear loop (Figure \ref{fig:coil-umap-diff} (E)), whereas the \DMPCD{} output consistently captures Object 2 as a loop.  

Object 20 (Figure \ref{fig:coil-umap-diff} (F)) is captured as the same loop in both the UMAP projection and the \DMPCD{} output, but the \DMPCD{} output has an additional edge dividing the loop.  The edge connects images 44 and 70, which have a L1 distance of 1329.8000237339966 in the original space.  While the other edges adjacent to these nodes are much shorter, other edges that would similarly divide the loop into two are much longer.  For example, the L1 distance between images 19 and 58 is 1822.1608110429997. The dividing edge in the \DMPCD{} output captures this difference, whereas the UMAP projection has no indication of such a difference.

\begin{figure}[H]
\begin{center}
\includegraphics[width = 0.9\linewidth]{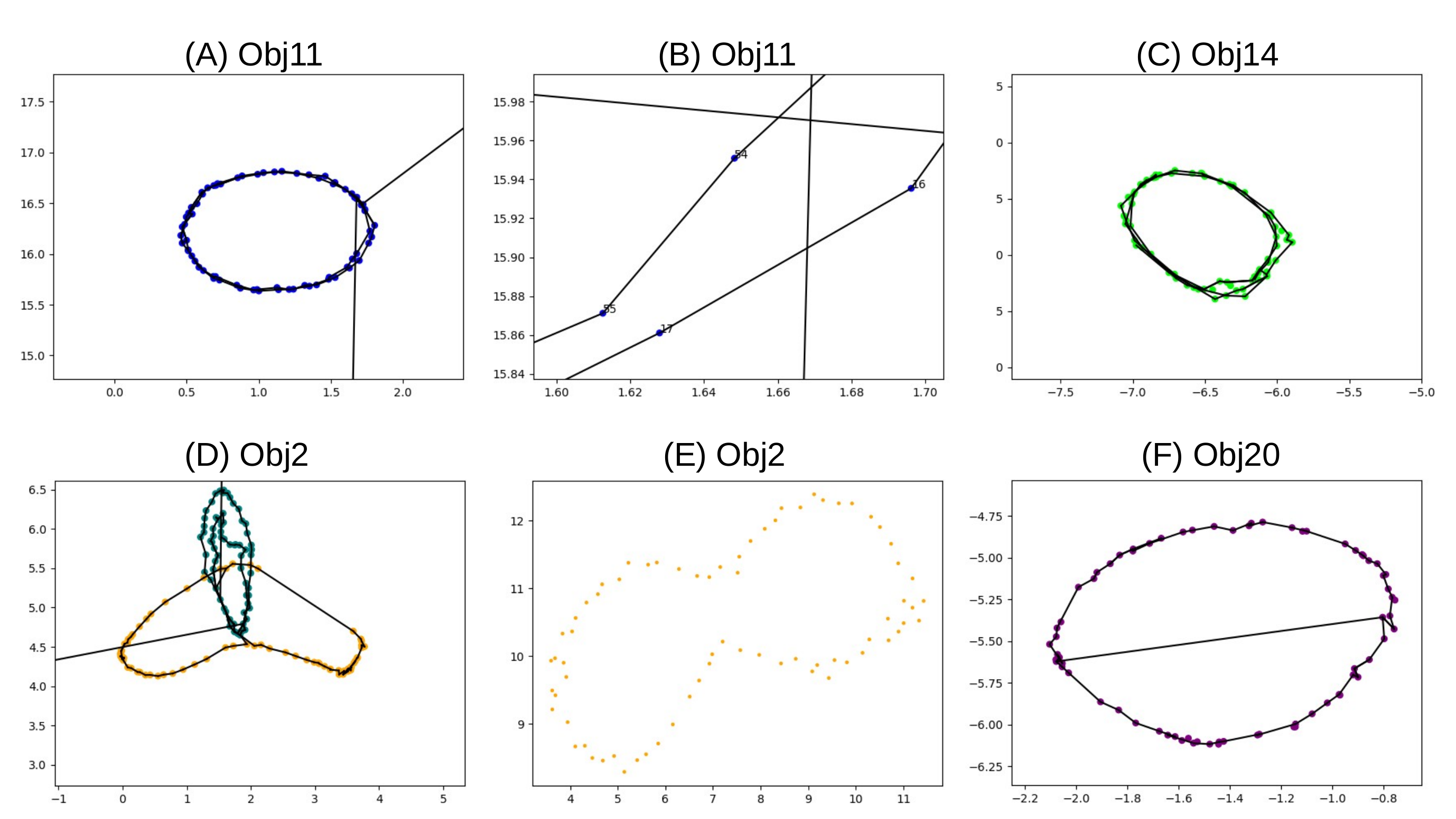}
\end{center}
\vspace*{-0.1in}\caption{{\small Coil: Zoom ins of Figure \ref{fig:coil-full} (D) on the four objects for which our \DMPCD{} post processed output captures a different loop than the UMAP projection. (A) Object 11 - While it appears at first glance that the \DMPCD{} output matches the loop the UMAP projection contains, a closer look reveals (B) that the \DMPCD{} output actually captures two loops.  (C) Object 14 - Similarly to Object 11, the \DMPCD{} output contains multiple loops and UMAP projection only captures one.  (D) Object 2 - UMAP projection is not a complete loop, but \DMPCD{} produces a complete loop.  (E) Object 2 - UMAP projection of only Object 2's images - a complete loop is visible. (F)  Object 20 - UMAP projection shows a single loop, while the \DMPCD{} output captures the same loop, but has an additional edge dividing the loop.}}
\label{fig:coil-umap-diff}
\end{figure}

For the remaining five objects, it is not as clear whether or not the \DMPCD{} output is correct.  Close ups of all five objects are shown in Figure \ref{fig:coil-bad}.  For each object, the figure shows the output at persistence thresholds $\delta = 0$ (first row), $\delta = 56$ (second row), and $\delta = 56$ with critical edges longer than 1700 removed (third row).  Object 3 (Figure \ref{fig:coil-bad} (A)) and Object 18 (Figure \ref{fig:coil-bad} (E)) are not captured as a loop in either the UMAP projection or in any \DMPCD{} output.  The arcs appear to follow the arcs embedded in the UMAP projection.  Object 9 (Figure \ref{fig:coil-bad} (B)) does appear as a loop in the UMAP projection, but is captured as a (double) arc by \DMPCD{}.  Object 12 (Figure \ref{fig:coil-bad} (C)) is captured as a loop in UMAP, but is not in any \DMPCD{} output. However an arc spanning most of the loop is clearly captured.  Under different parameters, \DMPCD{} was able to extract a loop. Object 16 (Figure \ref{fig:coil-bad} (D)) is captured as a loop in UMAP, and a loop is only captured by \DMPCD{} with persistence threshold $\delta = 0$.

\begin{figure}[H]
\begin{center}
\includegraphics[width = 0.9\linewidth]{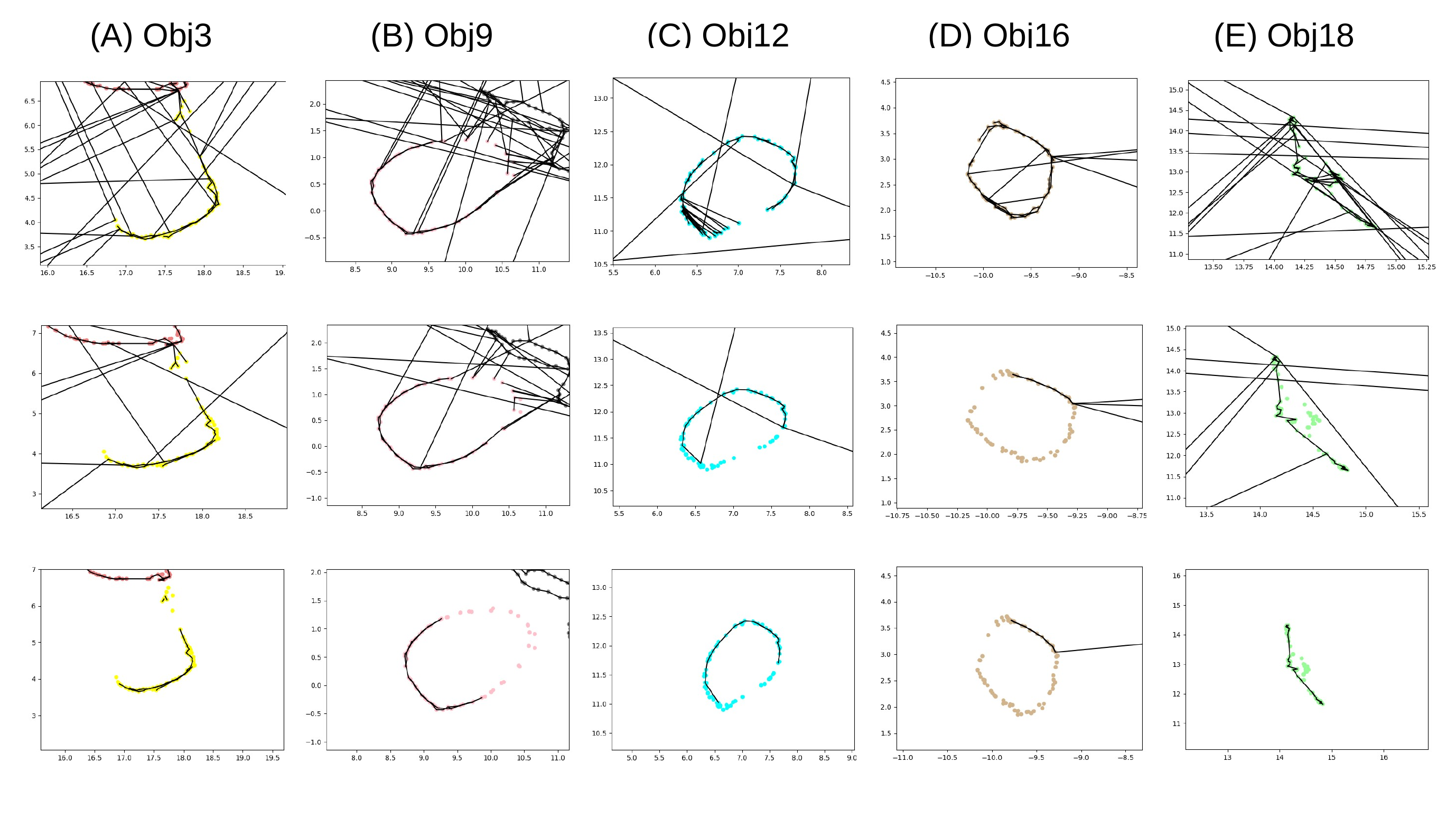}
\end{center}
\vspace*{-0.1in}\caption{{\small Coil: Zoom ins of \DMPCD{} outputs with persistence thresholds $\delta = 0$ (first row), $\delta = 56$ (second row), and $\delta = 56$ with critical edges longer than 1700 removed (third row).  The objects of focus are (A) Object 3, (B) Object 9, (C) Object 12, (D) Object 16, and (E) Object 18. }}
\label{fig:coil-bad}
\end{figure}

A final note on comparing \DMPCD{} to UMAP projections - the metric distortion of UMAP became apparent when viewing the \DMPCD{} outputs.  There is metric distortion within classes - such as Object 20, where there appears to be an extra edge in the \DMPCD{} output because the UMAP embedding does not preserve the distances in the original space.  There is also clear metric distortion in the UMAP embedding with respect to object relationships.  For example, before removing critical edges with length greater than 1700, both Objects 4 and 16 have an edge that connects to Object 8.  However, once the thresholding is applied, the edge connecting Objects 4 and 8 is removed and the edge connecting Objects 16 and 8 remains.  This would indicate that Object 16 is closer to Object 8 than Object 4 is, but Object 4 appears closer in the UMAP embedding.

\section{Concluding Remarks}
We generalized the DM-graph reconstruction algorithm to arbitrary filtrations, proved that the output of this generalized algorithm is meaningful, and developed a method for graph reconstruction from high-dimensional PCDs. Empirical results demonstrate the effectiveness of our \DMPCD{} approach. 

Time complexity is the main limitation of our approach.  The time to compute persistence is a function of the size of the input filtration.  While the theoretical worst case running time is cubic in this size, in practice modern implementations (such as PHAT) perform in subquadratic time (we observe near-linear growth of time w.r.t. the size of simplicial complex in our experiments). Hence reducing the size of filtration is crucial in practice. While the sparsification strategy we used in this paper helps to bring down the size of filtration, the reduction might not be significant enough for very large or more challenging datasets than what we experiment with in our paper. 

In addition to running time causing potential limitations, the raw output of our \DMPCD{} method must be connected.  As shown in Coil-20, with some post-processing we were able to capture individual objects quite easily - but the proper post-processing approach will depend on individual datasets and may not be so straight forward.

\section{Data and Code Availability}

Code for both our new methodology and the baseline approach is publicly available at \url{https://github.com/lucasjmagee/PCD-Graph-Recon-DM}.  The repository also contains all datasets used in this manuscript.

\bibliographystyle{plain}
\bibliography{refs.bib}

\newpage

\appendix

\section{Further Study of Alternative Approaches}
\label{appendix:sec:app}

\myparagraph{Different input triangulations for baseline}
Our main experiments compare the quality of our \DMPCD{} method to the \mybaseline{} algorithm.  \mybaseline{} takes an input triangulation, for which we chose the Rips complex at a fixed radius.  We tested other input triangulations to highlight that the baseline approach fails regardless of the input triangulation.  Results are shown in Figure \ref{fig:baseline_swr}. Even using sparse weighted Rips complex at a fixed radius large enough to capture the larger feature with less noise compared to a regular Rips complex, the points forming the smaller feature are connected by nearly a clique.  Using this triangulation with any valid density function as input for the \mybaseline{} algorithm results in the smaller feature being lost.  It is also shown that using the weighted Rips complex without sparsification results in a similar triangulation and final output.

\myparagraph{Different input filtrations for generalized algorithm}
Our \DMPCD{} algorithm takes a sparse weighted Rips filtration of a point cloud dataset.  However, we generalized the discrete Morse graph reconstruction algorithm to take an arbitrary filtration.  To highlight the utility of the sparse weighted Rips filtration, we run the generalized discrete Morse graph reconstruction algorithm with both the regular Rips filtration and the regular weighted Rips filtration.  Results are shown in Figure \ref{fig:filtrations}.  Using the regular Rips filtration, the output captures the two features with a lot of additional noise.  Trying to use persistence thresholding to remove the noise will remove the smaller feature before all noise is removed.  The regular weighted Rips filtration is able to perfectly capture both features, similarly to using the sparse weighted Rips filtration.  However, because the persistence computation of the filtration is a bottleneck, the sparse filtration is a superior option for our \DMPCD{} algorithm.

\myparagraph{Dimensionality reduction of noisy data}
For noisy datasets, such as the image patches dataset, dimensionality reduction techniques alone fail to reveal meaningful structure.  Results of such techniques are shown in Figure \ref{fig:dim_reduct_10k}.  The main paper shows our \DMPCD{} algorithm extracts a clear three circle structure that is known to be the true underlying structure of the image patches data.  However, PCA, tSNE, and UMAP projections of the image patches dataset reveal no meaningful structure (Figure \ref{fig:dim_reduct_10k} (A) - (C)). This is because these methods do not look to preserve metric relations.  In particular, tSNE attempts to cluster data and UMAP attempts to preserve continuous structure.
For cleaner data, such as Coil-20 (Figure \ref{fig:dim_reduct_10k} (D)), we see that UMAP is able to capture structure.  However, even applying PCA and UMAP (Figure \ref{fig:dim_reduct_10k} (E) and (F)) to the much cleaner $X(15,30)$ subset of image patches, we see that UMAP is still unable to capture the known three circle structure of the data.
Running the \mybaseline{} and \myMapper{} approaches on the PCA reduced image patches data also fails to extract the correct structure.  Results are shown in Figure \ref{fig:dim_reduct_graph_recon}. Running \mybaseline{} with a persistence threshold $\delta = 2$ results in a graph where three circles appear visible (Figure \ref{fig:dim_reduct_graph_recon} (B)).  However, the topology is incorrect, as all circles intersect twice (the first Betti number is equal to 7).  Raising the persistence threshold to 4 (Figure \ref{fig:dim_reduct_graph_recon} (C)) results in an output with the correct first Betti number equal to 5, but we have clearly lost the 3 circles.  \myMapper{} fails on the PCA reduced data and the output is very similar to the output on the original data (Figure \ref{fig:dim_reduct_graph_recon} (D)).  This example highlights a general problem with performing dimensionality reduction then performing graph reconstruction - one needs to reduce to an appropriate dimension.  Clearly it would not be possible to extract the correct graph structure from the images patches dataset if it were first reduced to 2 dimensions. It turns out that reducing to 3 dimensions is also too much, as we are unable to capture the proper (dis)connections between circles.  Not having to reduce dimension, and more so not needing to know the limit for dimensionality reduction, is a huge advantage to our method.

\begin{figure}
\begin{center}
\includegraphics[width = 0.9\linewidth]{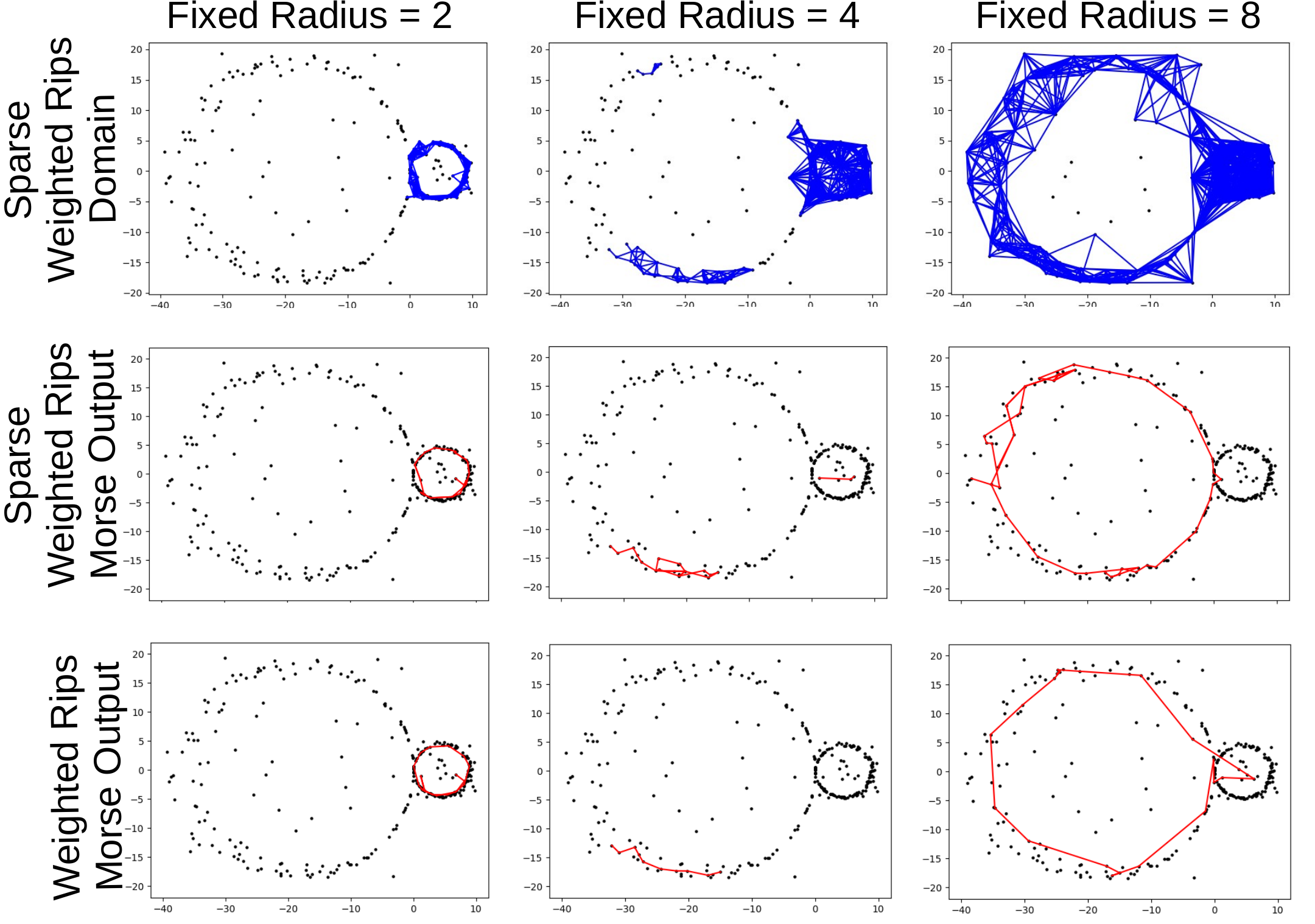}
\end{center}
\vspace*{-0.1in}\caption{{\small Sparse weighted Rips complex at fixed radii (first row) with results of baseline using sparse weighted Rips complex (second row) and full weighted Rips complex (third row).  Fixed radii values of 2 (first column), 4 (second column), and 8 (third column) are shown. }}
\label{fig:baseline_swr}
\end{figure}

\begin{figure}
\begin{center}
\includegraphics[width = 0.9\linewidth]{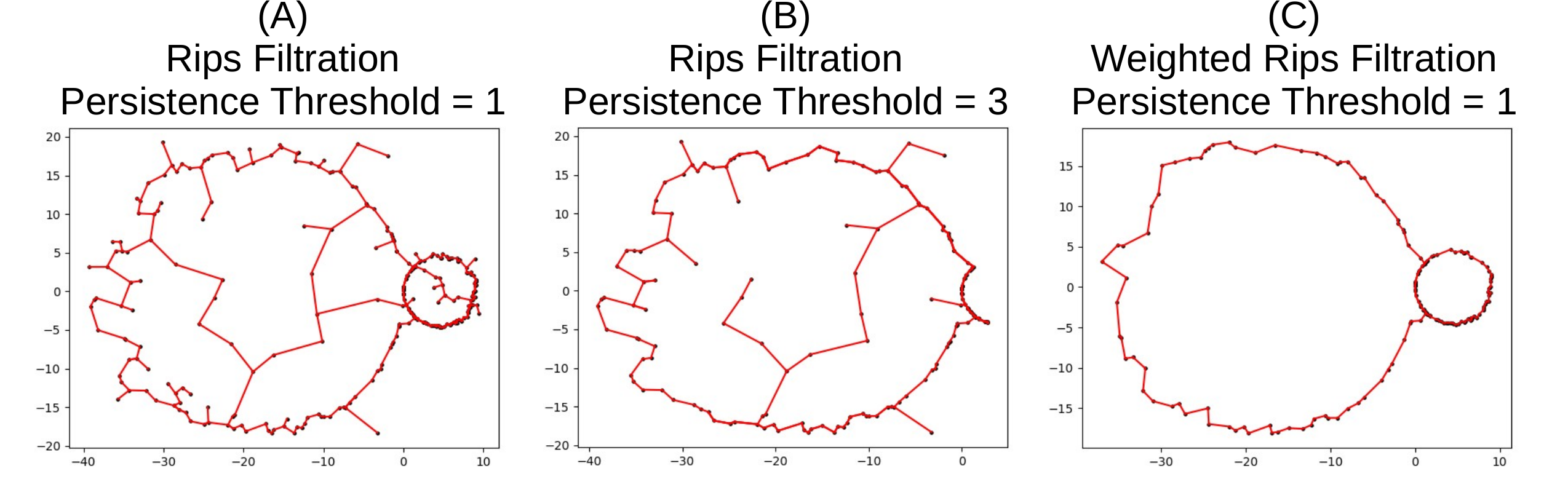}
\end{center}
\vspace*{-0.1in}\caption{{\small Results of using the generalized discrete Morse algorithm with the regular Rips filtration as input.  At a lower persistence threshold (A), both features are captured with additional noise.  At a higher persistence threshold (B), the smaller feature is lost while some noise remains.  Using the weighted Rips filtration as input (C), the algorithm is able to recover both features with no noise.}}
\label{fig:filtrations}
\end{figure}

\begin{figure}
\begin{center}
\includegraphics[width = 0.9\linewidth]{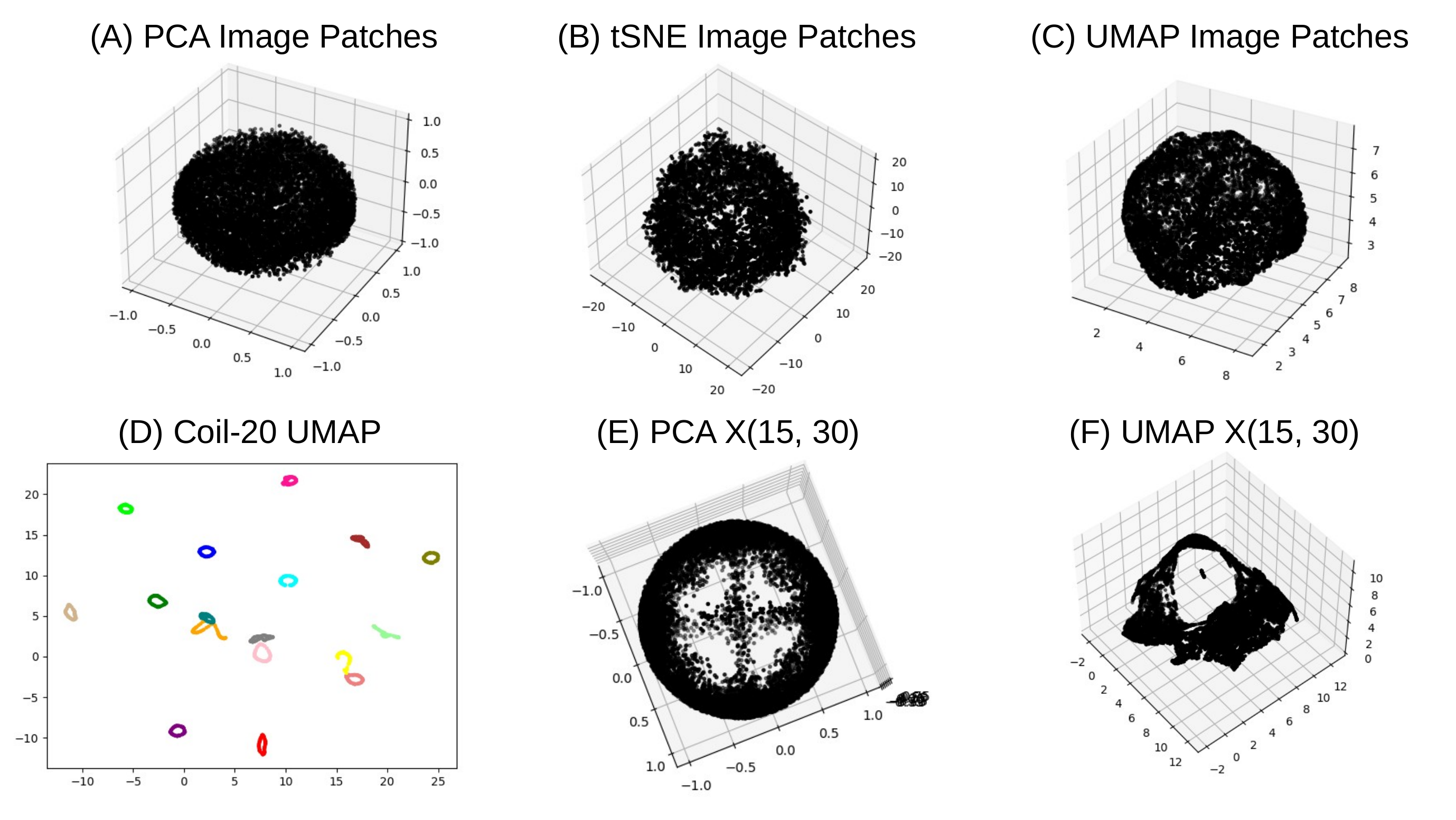}
\end{center}
\vspace*{-0.1in}\caption{{\small Outputs of various dimensionality reduction techniques (PCA, tSNE, and UMAP) performed on the 10,000 image patch subset ((A) - (C)), Coil-20 (D), and $X(15, 30)$ ((E) and (F)). }}
\label{fig:dim_reduct_10k}
\end{figure}

\begin{figure}
\begin{center}
\includegraphics[width = 0.9\linewidth]{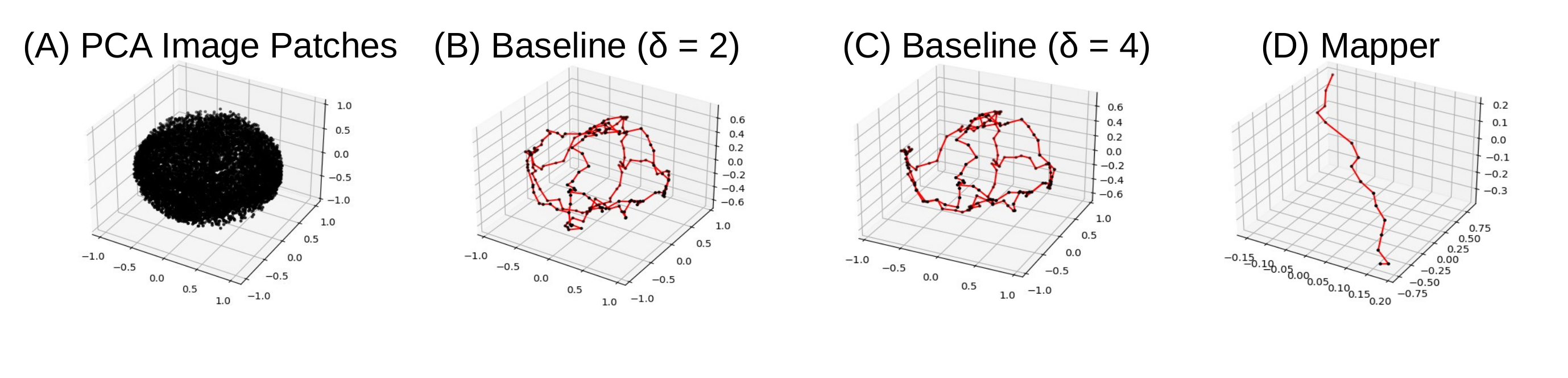}
\end{center}
\vspace*{-0.1in}\caption{{\small PCA reduction of image patches dataset (A) and outputs of \mybaseline{} with persistence thresholds 2 and 4 (B and C), and \myMapper{} (base point filter) (D) on the PCA reduced image patches dataset. }}
\label{fig:dim_reduct_graph_recon}
\end{figure}

\section{More Details on Experiments}
\label{appendix:sec:exp}

\myparagraph{Comparison of methods}
Our experiments compare the quality of outputs and computational efficiency of our \DMPCD{} method with the \mybaseline{} algorithm and the state-of-the-art \myReeb{} algorithm. We also compare the quality of outputs to those of the \myMapper{} algorithm.  We do not include \myMapper{} running times in our comparisons of computational efficiency because it is significantly faster than the other algorithms.

The \myReeb{} algorithm has two outputs - a contracted output, which contains only non-degree two nodes, and an augmented output, which contains edges going through every possible node in the domain.  While the contracted outputs are useful for examining the topology of the output, they do a poor job of preserving the underlying geometry of the output. On the other hand, the augmented outputs are very noisy because every node is included.  For this reason, the authors of the \myReeb{} algorithm smooth outputs.  We smooth the augmented outputs by subsampling the arcs (non-degree two paths), and then perform standard iterative smoothing on the remaining vertices.  An example is shown in Figure \ref{fig:1circle_reeb_sup}.  Unless otherwise noted, the \myReeb{} results displayed in figures are the smoothed augmented outputs.  Ultimately, the quality of the output is now dependant on the smoothing, and we note that different smoothing techniques may result in better quality outputs. However, the topology of the outputs is often incorrect, and in such cases no smoothing can make the output "correct".

The \myMapper{} algorithm traditionally outputs a simplicial complex and was not developed to explicitly extract underlying graph structures from data.  For all of our experiments, we limit the \myMapper{} output to be a graph (1-dimensional simplicial complex).  Each node in a graph outputted by \myMapper{} represents a cluster computed within the algorithm.  We assign the coordinates of a node to be the average coordinates of the cluster it represents.

\begin{figure}
\begin{center}
\includegraphics[width = 0.9\linewidth]{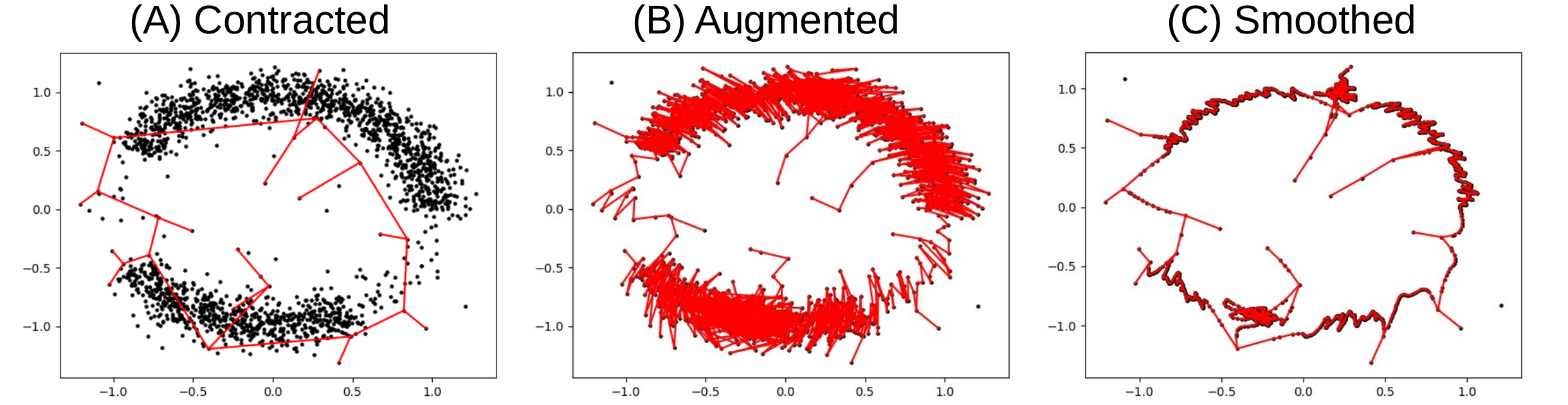}
\end{center}
\vspace*{-0.1in}\caption{{\small \myReeb{} outputs on one circle dataset.  The contracted output (A) and the augmented output (B) of the \myReeb{} algorithm with $r = .25$.  The contracted output is useful in examining the topology of the output, but does a poor job of preserving the geometry of the underlying skeleton.  The augmented output better preserves the geometry but contains a lot of noise. (C) is a smoothed augmented output with less noise.}}
\label{fig:1circle_reeb_sup}
\end{figure}

In our time comparisons, \myReeb{} is much slower than both \DMPCD{} and \mybaseline{}.  While we are using an old implementation from 2011 that may not be optimized, it is known that \myReeb{} is theoretically faster than both \DMPCD{} and \mybaseline{}, which have persistence computation as a bottleneck.  \DMPCD{} tends to be more efficient than \mybaseline{}, as the sparsification in our algorithm builds a filtration that is linear in size with respect to the number of points, whereas the regular Rips complex used in \mybaseline{} results in a filtration of size $O(n^3)$, and $r$ values large enough to capture the underlying skeleton will have much bigger filtrations.

\myparagraph{One Circle dataset.}
The main paper shows that our \DMPCD{} algorithm is able to successfully capture the circle, and both the \mybaseline{} and \myReeb{} capture the circle with $r = .25$.  However, the quality of output for both \mybaseline{} and \myReeb{} is heavily dependent on the value of $r$ - more specifically the corresponding $\rips^r(P)$ complex.  Shown in the first row of Figure \ref{fig:1circle_rips_sup} is the $\rips^r(P)$ complex for $r$ values of .1 (A), .2 (B), .25 (C), and 1.05 (D).  The second and third rows contain results of \mybaseline{} and \myReeb{}.  All \myReeb{} outputs are smoothed with no subsampling, a neighborhood radius of 2 neighbors, and 5 iterations - except for (D), where the output is a spanning tree and smoothing does not improve output quality.  With an $r$ value too small (.1), the underlying skeleton is not contained in $\rips^r(P)$, and neither method will be able to produce a desirable output.  It is not enough to select an $r$ value that results in the complex containing the underlying skeleton.  For $r = .2$, the circle is captured by the $\rips^r(P)$ complex, but so is an additional spurious loop.  Neither \mybaseline{} or \myReeb{} can produce an output not containing the spurious loop.  For $r = 1.05$, there are no additional spurious loops in the $\rips^r(P)$ complex, but \myReeb{} produces a spanning tree and \mybaseline{}, while producing a single loop, loses the geometry of the underlying skeleton.

For this dataset, $r = .25$ was an appropriate selection for both \mybaseline{} and \myReeb.  However, as shown in Table \ref{tab:1circle_times}, the size of the $\rips^r(P)$ complex is much bigger than the sparsified weighted Rips complex used in our \DMPCD{} algorithm.  Complex size is particularly costly for our \DMPCD{} method and the \mybaseline{} method because of the persistence computation.  While \mybaseline{} was able to produce a reasonable output at $r = .25$, it took significantly more time than our \DMPCD{} algorithm.

While smoothing certainly decreases the noise in the \myReeb{} output, the output quality is still worse than that of both \DMPCD{} and \mybaseline{}.  We comment that a different smoothing method may result in a better quality output.

Additionally, the main paper shows that the \myMapper{} approach is also able to successfully capture the circle. We show the $\myMapper{}$ results with a variety of filter functions in Figure \ref{fig:1circle_mapper_sup}.  The graph Laplacian filter and the distance to base point filter are able to capture the circle, while the eccentricity filter and the Gaussian density filter are unable to capture the true underlying structure of the data. The heat maps of the filter functions shown in the first row of Figure \ref{fig:1circle_mapper_sup} provide intuition on why each filters is either successfully or unsuccessfully used to extract the underlying structure with \myMapper{}.  These two filter functions will be the top choices for most of the remaining datasets.

\begin{figure}
\begin{center}
\includegraphics[width = 0.9\linewidth]{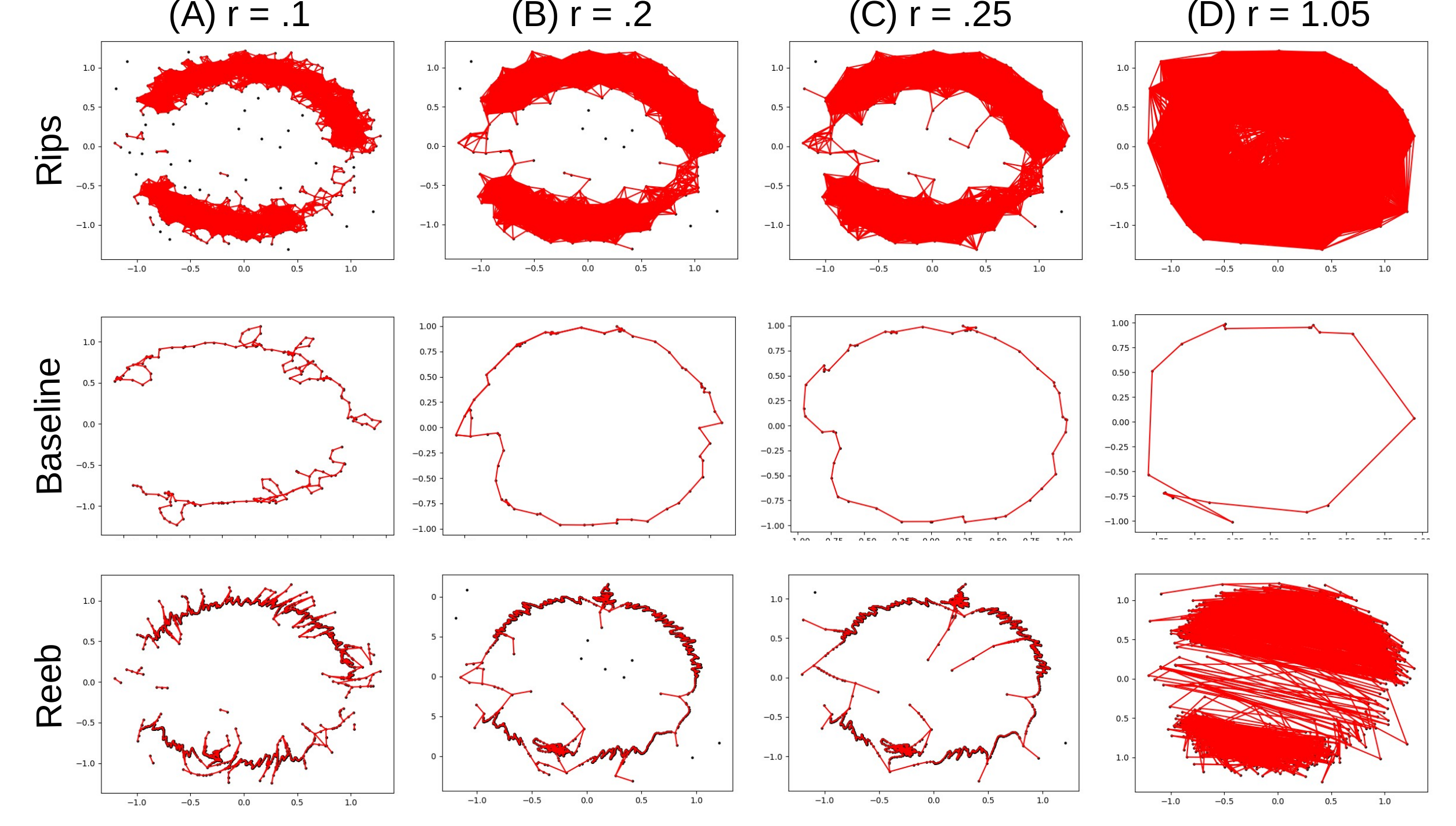}
\end{center}
\vspace*{-0.1in}\caption{{\small One circle data - $\rips^r(P)$ complex (first row), \mybaseline{} outputs (second row), and \myReeb{} outputs (third row). (A) $r = .1$ - $\rips^r(P)$ complex fails to capture the circle, resulting in both methods failing to capture the circle.  (B) $r = .2$ - $\rips^r(P)$ complex now contains the circle, but also contains a spurious loop. Both the \mybaseline{} output, which was generated with persistence threshold $\delta = \infty$, and the \myReeb{} output must contain this spurious loop  (C) $r = .25$ - $\rips^r(P)$ complex now contains the circle without any additional spurious loops.  Both the \mybaseline{} output ($\delta = \infty$) and the \myReeb{} output capture the loop.  (D) $r = 1.05$ - $\rips^r(P)$ complex still contains the circle, as well as many more simplices. As a result, the \mybaseline{} output has lost its nice geometry, with long edges going through high density regions, and the \myReeb{} output is a spanning tree.  }}
\label{fig:1circle_rips_sup}
\end{figure}

\begin{figure}
\begin{center}
\includegraphics[width = 0.9\linewidth]{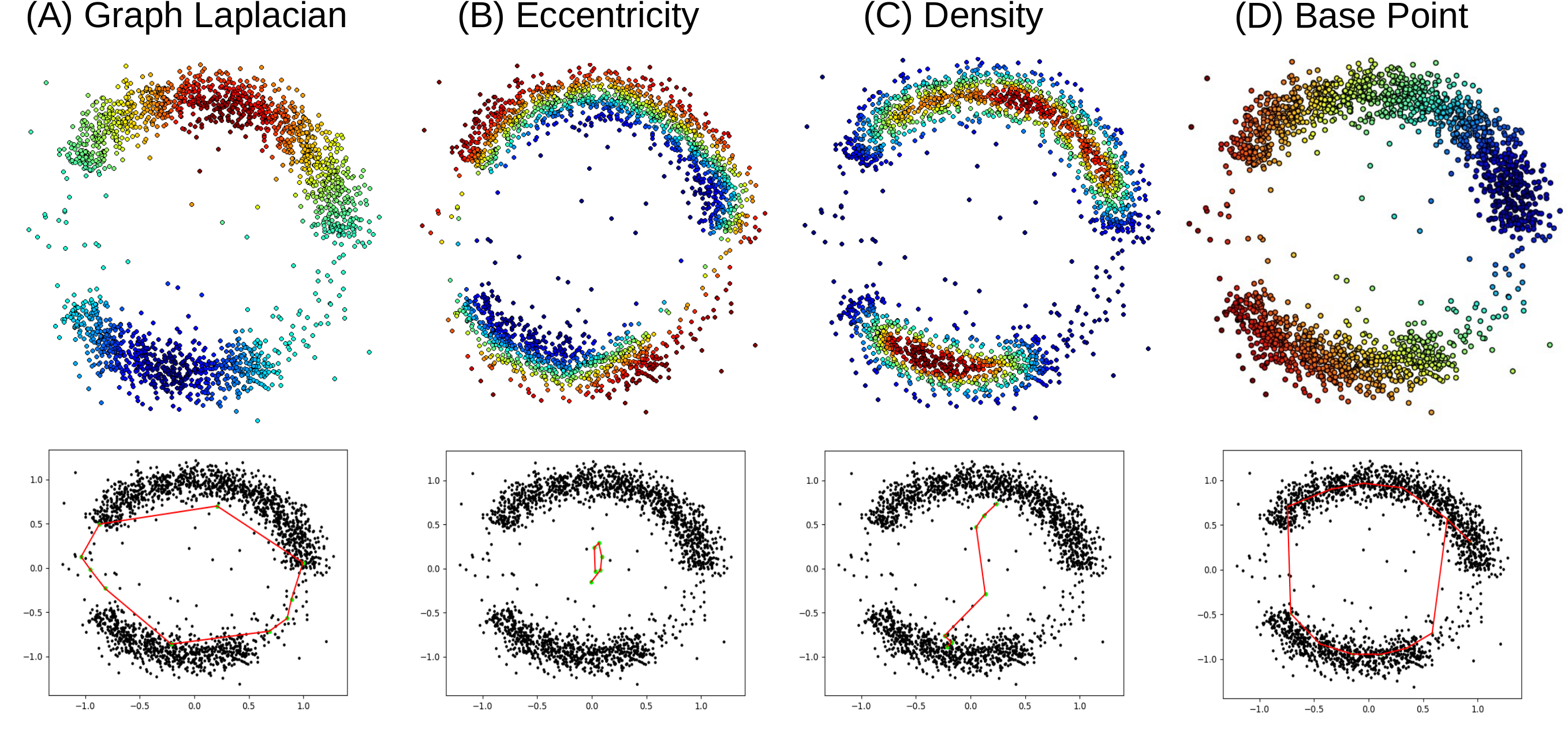}
\end{center}
\vspace*{-0.1in}\caption{{\small One circle data - Filter function values (first row) and corresponding Mapper outputs (second row) with filter functions - (A) graph Laplacian filter ($k = 15$), (B) eccentricity filter, (C) Gaussian density filter, (D) distance to base point filter.}}
\label{fig:1circle_mapper_sup}
\end{figure}

\begin{table}[!htbp]
    \centering
    \begin{tabular}{llll}
        \hline
        \textbf{Method} & \textbf{Radius} & \textbf{$\#$ Simplices} & \textbf{Time (seconds)} \\
        \hline\hline
        {\bf Our Method}  & $\infty$ & 368276 & 2.6 \\
        \hline
        Baseline & .05 & 33368 & .1 \\
        \hline
        Reeb Graph & .05 & 33368 & .03 \\
        \hline
        Baseline & .10 & 356925 & 1.1 \\
        \hline
        Reeb Graph & .10 & 356925 &  2.49\\
        \hline
        Baseline & .15 & 1490149 & 5.4 \\ 
        \hline
        Reeb Graph & .15 & 1490149 & 21.05 \\ 
        \hline
        Baseline & .2 & 3869507 & 13.0 \\ 
        \hline
        Reeb Graph & .2 & 3869507 & 76.46 \\ 
        \hline
        Baseline & .25 & 7708243 & 44.8 \\
        \hline
        Reeb Graph & .25 & 7708243 & 221.25 \\
        \hline
        Baseline & .5 & 43392850 & 231.1\\ 
        \hline
        Reeb Graph & .5 & 43392850 & 3445.10 \\ 
        \hline\hline

    \end{tabular}
    \caption{One circle dataset: Comparison of radius used, $\#$ simplices, and running time of \DMPCD{}, \mybaseline{}, and \myReeb{}. Our algorithm has radius $\infty$ as we run on the full sparse DTM-Rips filtration. } 
    \label{tab:1circle_times}
\end{table}{}

\myparagraph{Two Circle dataset.}
The main paper shows that our \DMPCD{} algorithm is able to successfully capture both circles, while both \mybaseline{} and \myReeb{} failed to capture both circles.  Again, this is because both methods are heavily dependent on the input triangulation (the $\rips^r(P)$ complex).  This complex at various values of $r$ is shown in the first row of Figure \ref{fig:2circle_sup}, while the corresponding \mybaseline{} and \myReeb{} outputs are shown in the second and third rows respectively.  All \myReeb{} outputs are smoothed with no subsampling, a neighborhood radius of 2 neighbors, and 10 iterations.  Neither result can contain the larger circle if the input triangulation itself does not contain the larger circle, so we increase values of $r$ until the complex contains the larger circle.  $r = 1$ is too small to capture even the smaller circle.  At $r = 1.5$, the complex does contain the smaller circle, and both \mybaseline{} and \myReeb{} are able to successfully extract the loop.  However, at $r = 2$ and $r = 3$, the complex still does not contain the larger circle, and more noise around the smaller circle is added to the outputs.  Finally, at $r = 4$, the larger circle is contained within the complex.  However there are two issues.  Firstly, there is a spurious loop in the complex along the larger circle, so while $r = 4$ is able to capture the larger circle, it is still not an appropriate value of $r$. We would need to try to find a new $r$ value that better captures the data if not for the second issue - the smaller circle is lost in both outputs - meaning an appropriate value of $r$ does not exist for either method.  We can see that in the $\rips^r(P)$ complex at $r = 4$, the points forming the smaller circle now nearly form a clique, which results in both \mybaseline{} and \myReeb{} outputs losing the smaller circle.  We conclude that there is no value for $r$ that will result in either method capturing both circles. Running time and simplicial complex size comparisons are shown in Table \ref{tab:2circle_times}.  For radius $r = 4$, we see that the number of simplices used in both \mybaseline{} and \myReeb{} is nearly double that of the filtration used by \DMPCD{}.  As a result, the running times of \mybaseline{} and \myReeb{} are longer than that of \DMPCD{}.

Additionally, the main paper shows that \myMapper{} was able to successfully capture both features of the two circle dataset.  Further results for different filter functions are shown in Figure \ref{fig:2circle_mapper}.  Similarly to the results of the one circle dataset, \myMapper{} was able to successfully capture both features when using either the graph Laplacian filter or distance to base point filter.  Looking at the heat map for the eccentricity filter, we see that it would also appear to be an acceptable choice for this particular dataset.  The output captures the larger feature and is unable to capture the smaller feature.  The density filter once again fails to extract any meaningful structure from the dataset.

\begin{figure}
\begin{center}
\includegraphics[width = 0.9\linewidth]{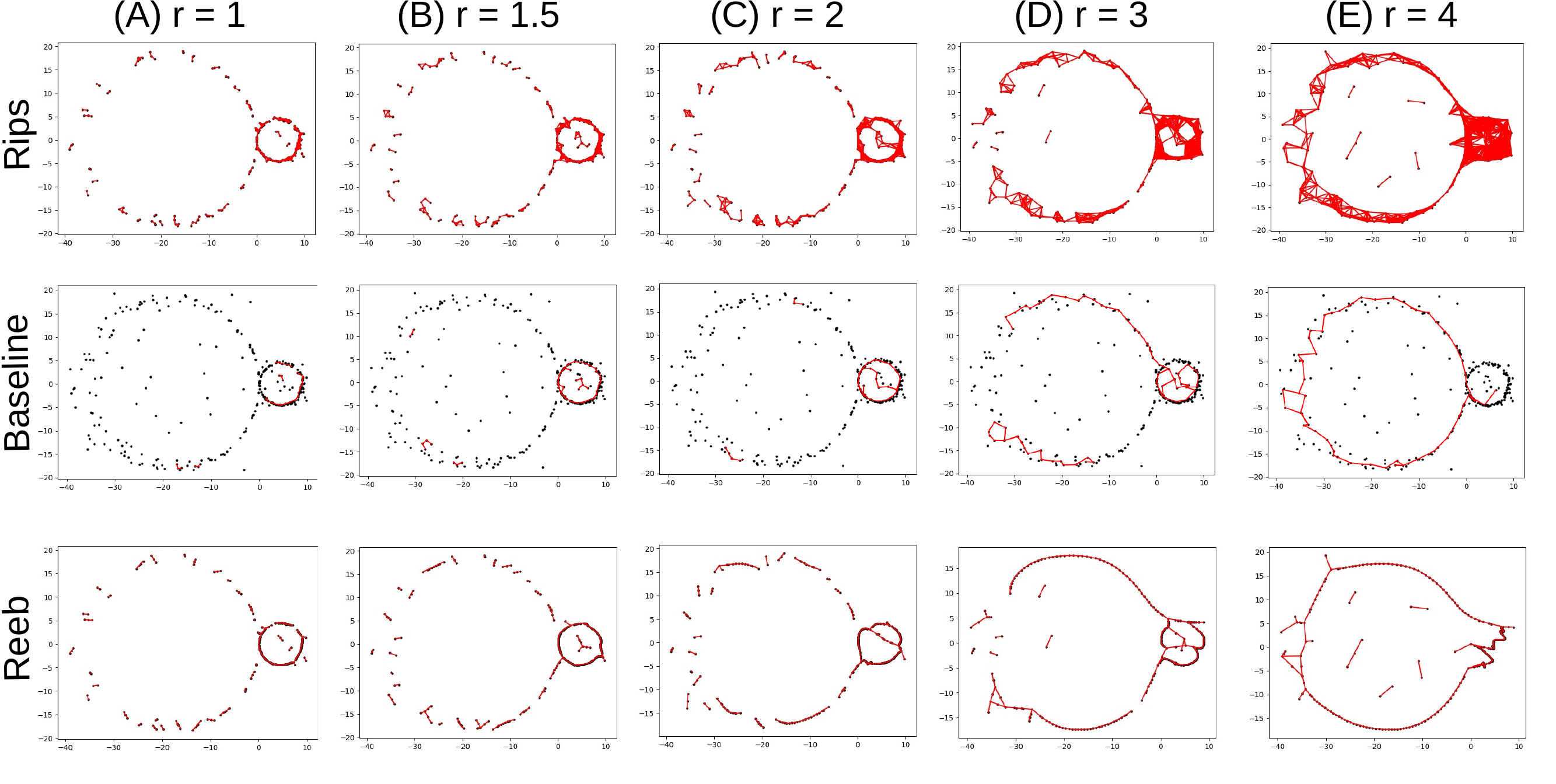}
\end{center}
\vspace*{-0.1in}\caption{{\small The $\rips^r(P)$ complex (first row), \mybaseline{} outputs (second row), and \myReeb{} outputs (third row). All \mybaseline{} outputs were generated using persistence threshold zero, meaning that no simplification occurred.  (A) $r = 1$ - $\rips^r(P)$ complex fails to capture either circle, resulting in both methods failing to capture either circle.  (B) $r = 1.5$ - $\rips^r(P)$ complex now contains the smaller circle but not the larger circle. Both outputs successfully capture the smaller circle, but fail to capture the larger circle.  (C) $r = 2$ - $\rips^r(P)$ complex now connects the noise inside of the smaller circle to the smaller circle, while still not containing the larger circle.  The outputs now capture the smaller circle and some noise inside of the circle, and still fail to capture the larger circle.  (D) $r = 3$ - $\rips^r(P)$ complex still does not contain the larger circle, and contains many edges cutting across the smaller circle.  The \myReeb{} output captures the smaller circle with more noise, while the \mybaseline{} output has begun to lose the smaller circle.  Both outputs fail to capture the larger circle.  (E) $r = 4$ - $\rips^r(P)$ complex now contains the larger circle, as well as a spurious loop, and the points forming the smaller circle now nearly form a clique.  The outputs capture the larger circle, but contain a spurious loop, and the smaller circle is completely lost. }}
\label{fig:2circle_sup}
\end{figure}

\begin{figure}
\begin{center}
\includegraphics[width = 0.9\linewidth]{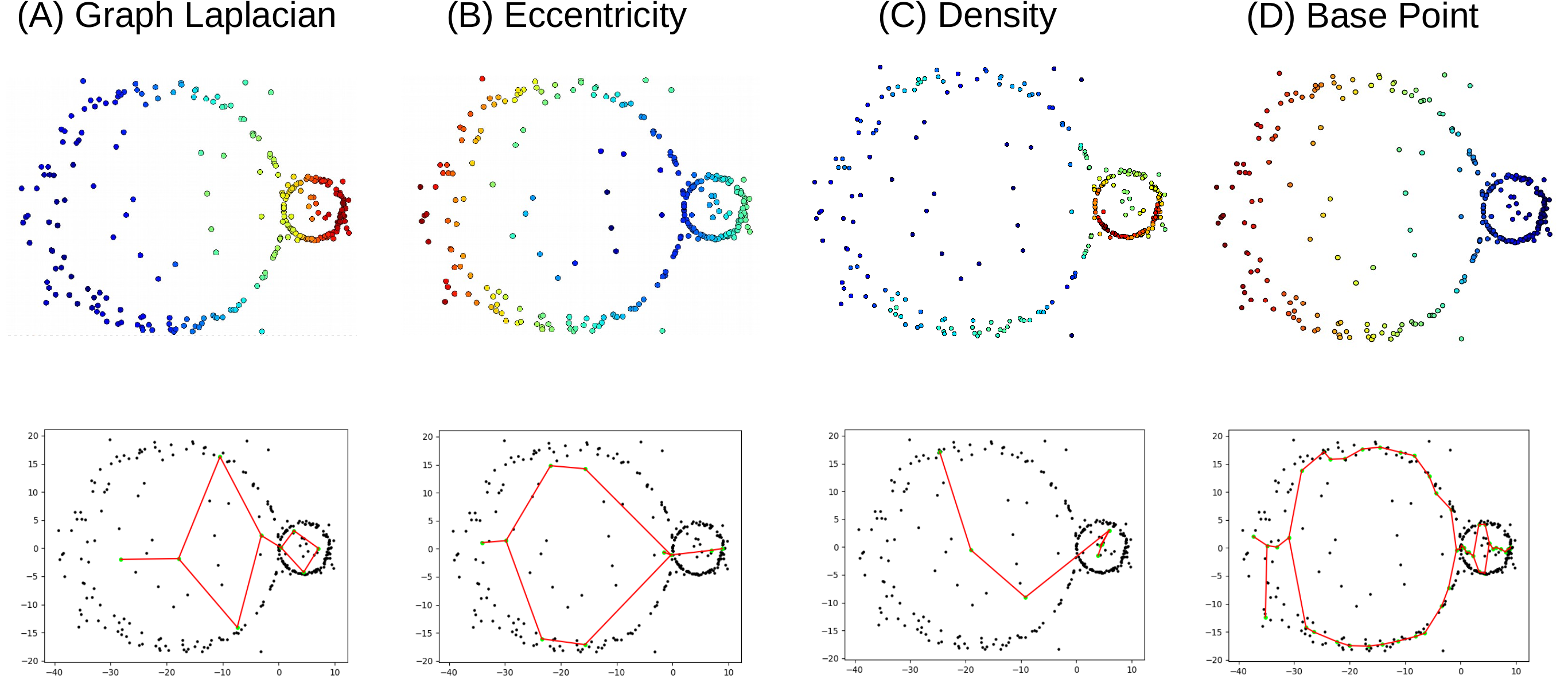}
\end{center}
\vspace*{-0.1in}\caption{{\small \small Two circle dataset: filter function values (first row) and corresponding Mapper outputs (second row) using (A) graph Laplacian filter ($k = 15$), (B) eccentricity filter, (C) Gaussian density filter, (D) distance to base point filter.}}
\label{fig:2circle_mapper}
\end{figure}

\begin{table}[!htbp]
    \centering
    \begin{tabular}{llll}
        \hline
        \textbf{Method} & \textbf{Radius} & \textbf{$\#$ Simplices} & \textbf{Time (seconds)} \\
        \hline\hline
        {\bf Our Method}  & $\infty$ & 19497 & .06 \\
        \hline
        Baseline & 1 & 2182 & .002 \\
        \hline
        Reeb Graph & 1 & 2182 & .01 \\
        \hline
        Baseline & 2 & 8350 & .013 \\
        \hline
        Reeb Graph & 2 & 8350 & .02 \\
        \hline
        Baseline & 3 & 20005 & .045 \\
        \hline
        Reeb Graph & 3 & 20005 & .05 \\
        \hline
        Baseline & 4 & 41349 & .13 \\
        \hline
        Reeb Graph & 4 & 41349 & .25 \\
        \hline\hline

    \end{tabular}
    \caption{Two circle dataset: Comparison of running time of \DMPCD{}, \mybaseline{}, and \myReeb{}. Our algorithm has radius $\infty$ as we run on the full sparse DTM-Rips filtration. } 
    \label{tab:2circle_times}
\end{table}{}

\myparagraph{Image patches dataset.}
The main paper shows that our \DMPCD{} algorithm is able to successfully extract the "three-circle model" from a random 10,000 point subset of the image patches dataset from \cite{carlsson08}, while \mybaseline{}, \myReeb{}, and \myMapper{} methods are unable to do so.  We run \mybaseline{} with $\rips^{.75}(P)$ as the input complex.  We tried several $r$ values less than $.75$, as well as $r = .8$.  For $r$ values less than $.75$, there were many spurious loops that could not be removed with persistence thresholding.  For $r = .8$, the desired three-circles are not completely recovered even with no persistence thresholding. Results at various persistence thresholds are shown in Figure \ref{fig:3circle_baseline}.  Although the output does contain the three circles we wish to extract, it is also made up of several additional loops.  Raising the persistence threshold to 5 removes some of the additional loops, but raising the persistence threshold to 10 removes part of the desired three circle model without removing the remaining additional loops.  In fact, raising the persistence threshold to $\infty$, we see that some of these incorrect loops are a product of the input triangulation, and it is not possible to achieve a desired output from \mybaseline{} with $r = .75$.  While it may still be possible for a "good" $r$ value to exist, it is extremely expensive to compute persistence on triangulations with this many simplices.

Running time and simplicial complex size comparisons are shown in Table \ref{tab:3circle_times}.  For radius $r = .75$, we see that the number of simplices used in \mybaseline{} is over 50,000,000 greater than the number of simplices in the filtration used by \DMPCD{}.  Although \DMPCD{} takes longer to compute persistence even with a smaller filtration, the \DMPCD{} filtration has an implied $r = \infty$, and that any sizable increase to $r = .75$ for \mybaseline{} will result in a significant increase in running time. We note that for all values of $r$, the number of simplices used in \mybaseline{} would be the same number of simplices used by \myReeb{}.

Additionally, the main paper shows that \myMapper{} was unable to capture the true underlying structure of the image patches dataset.  Further results for different filter functions are shown in Figure \ref{fig:3circle_mapper}.  Gaussian density and eccentricity filters fail, as seen in previous datasets.  However, unlike the previous dataset, the graph Laplacian and distance to base point filters also fail to capture the underlying structure.  This data is simply too noisy for \myMapper{} to extract the underlying structure.

Finally, while our algorithm is deterministic, this dataset is generated from a random 10K point subset.  In an attempt to quantify the error,  we generated 10 different random subsets to apply our method to.  On all 10 datasets, our method extracts the 3 circles correctly.  To quantify error, we computed the distance between two output graphs $G_i$, $G_j$ by calculating the average distance between each node in one graph to its nearest node in the other graph, and normalizing this distance by the diameter of the full 50K point dataset. The result was $0.036$ average error.

\begin{table}[!htbp]
    \centering
    \begin{tabular}{llll}
        \hline
        \textbf{Method} & \textbf{Radius} & \textbf{$\#$ Simplices} & \textbf{Time (seconds)} \\
        \hline\hline
        {\bf Our Method}  & $\infty$ & 209397755 & 16089.4 \\
        \hline
        Baseline & .25 & 77261 & .15 \\
        \hline
        Baseline & .75 & 263787145 & 1485.42 \\
        \hline\hline

    \end{tabular}
    \caption{Image Patches dataset: Comparison of running time of \DMPCD{} and \mybaseline{}. Our algorithm has radius $\infty$ as we run on the full sparse DTM-Rips filtration. } 
    \label{tab:3circle_times}
\end{table}{}

\begin{figure}
\begin{center}
\includegraphics[width = 0.9\linewidth]{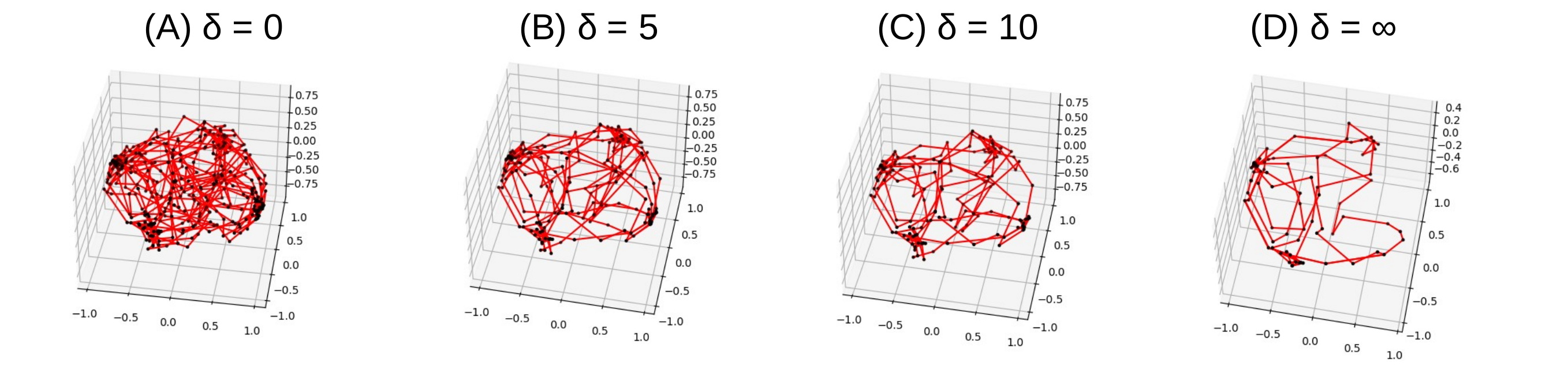}
\end{center}
\vspace*{-0.1in}\caption{{\small Image Patches: Outputs of \mybaseline{} with $\rips^r(P)$ complex ($r = .75$) as the input triangulation at various persistence thresholds. (A) $\delta = 0$ - With no thresholding of critical edges, the output captures the three circles that we expect to, as well as additional loops.  (B) $\delta = 5$ - Raising the persistence threshold allows for some of the additional loops to be removed while keeping the three circles we expect. (C) $\delta = 10$ - Further raising the persistence threshold results in losing part of the horizontal circle while keeping extra loops. (D) $\delta = \inf$ - Removing all critical edges except those with infinity persistence removes more of the desired three circle output and keeps loops not part of the desired output. }}
\label{fig:3circle_baseline}
\end{figure}

\begin{figure}
\begin{center}
\includegraphics[width = 0.9\linewidth]{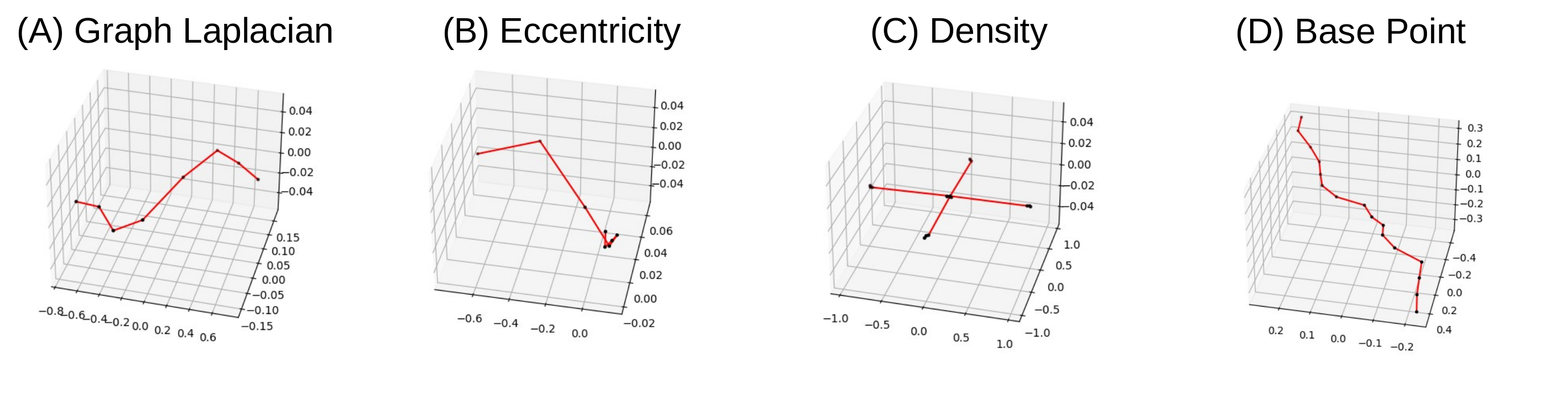}
\end{center}
\vspace*{-0.1in}\caption{{\small Image patches: Outputs of \myMapper{} using different filter functions - (A) graph Laplacian filter ($k = 15$), (B) eccentricity filter, (C) Gaussian density filter, (D), distance to base point filter. The dataset is too noisy and none of the filters result in \myMapper{} outputting a graph representative of the true underlying structure.}}
\label{fig:3circle_mapper}
\end{figure}

\myparagraph{Traffic flow dataset.} 
We also test on point clouds derived from traffic flow data \cite{caltrans}. We extract two datasets: the time-series of traffic flow at detector $\#$409529 from time-range 10/1/2017 to 10/14/2017 and from time-range 11/19/2017 to 12/2/2017 (which includes Thanksgiving). Each time-series is mapped to a point cloud dataset in $\reals^6$ via time-delay embedding.

Given a time series $f: t -> \reals$ and a parameter $\tau$, the lift defined by $\phi (t) = (f(t), f(t + \tau), ... , f(t + M \tau))$ is called a time delay embedding.  For each traffic flow function, we create a PCD using a time delay embedding with $M = 5$ and $\tau = 50$.  The two dimensional projections of these PCDs are shown in Figure 4 (B) of the main paper.  The first function's time delay embedding projection appears to be a single loop, while the second appears to have an inner loop and an outer loop.

The main paper shows the results of our \DMPCD{} algorithm with $k = 30$ on both time series datasets.  So far in our experiments, a default value of $k = 15$ has been used. By the nature of time delay embeddings, which may create clumps of points close together, different values of $k$ can produce markedly different results.  Shown in Figure \ref{fig:traffic_our_method} are results of \DMPCD{} on the two datasets with $k$ values of 15, 30, and 40.  For the first dataset (10/1/2017 - 10/14/2017), a single loop is captured with all values of $k$. For the second dataset (11/19/2017 - 12/2/2017), changing the value of $k$ results in more drastic changes in the output.  The persistence thresholds for the outputs are 8.25 ($k = 15$), 12.5 ($k = 30$), and 12.84 ($k = 40$).  In all cases, if the persistence threshold were raised enough to further threshold the output, a portion of the outer loop would be lost.  We note that our output must be connected, so the desired result is two loops with a single connection. The output with $k = 15$ contains many extra connections, while the output with $k = 30$ contains a single extra connection.  With $k = 40$, the desired output is achieved.

Also shown in the main paper, \mybaseline{} is able to successfully capture the single loop of the first time series dataset.  Results for \mybaseline{} on the second time series dataset are shown in Figure \ref{fig:traffic_baseline_k}.  The persistence thresholds for the outputs are 8 ($k = 15$), 5 ($k = 30$), and 3 ($k = 40$).  Just like the results for \DMPCD{} in Figure \ref{fig:traffic_our_method}, if the persistence thresholds were raised enough to further threshold the output, a portion of the outer loop would be lost, making the output of \DMPCD{} superior.

Running time and simplicial complex size comparisons for 10/1/2017 - 10/14/2017 and 11/19/2017 - 12/2/2017 traffic flows are shown in Table \ref{tab:traffic_times} and \ref{tab:turkey_times} respectively.  Note that the baseline results shown in the main paper use $r = 90$ and $r = 75$ respectively.  For the first dataset, we see that the number of simplices used by the \mybaseline{} with $r = 90$ is more than five times greater than the number of simplices used in \DMPCD{}. This results in longer running time for \mybaseline{}.  For the second dataset, the number of simplices used by the \mybaseline{} with $r = 75$ is a little less than three times greater than the number of simplices used in \DMPCD{}.  While the running time remained shorter for \mybaseline{} in this particular instance, we note that an increase in the value of $r$ can add a significant amount of simplices and push the running time to be longer than the \DMPCD{} running time.  We again note that for all values of $r$, the number of simplices used in \mybaseline{} would be the same number of simplices used by \myReeb{}.

Additionally, the main paper shows that \myMapper{} was able to extract the structure behind traffic flow from 10/1/2017 to 10/14/2017, but was unable to do so for traffic flow from 11/19/2017 to 12/2/2017.  Results of \myMapper{} on both datasets using a variety of filter functions is shown in Figure \ref{fig:traffic_mapper}.  Again, using the graph Laplacian and distance to base point filters allowed \myMapper{} to extract the single loop structure behind the first dataset.  However, \myMapper{} is unable to extract the two loop structure behind the second dataset with these filters, along with eccentricity and Gaussian density filters. In contrast, \DMPCD{} was able to get the true underlying structure behind both datasets.

\begin{table}[!htbp]
    \centering
    \begin{tabular}{llll}
        \hline
        \textbf{Method} & \textbf{Radius} & \textbf{$\#$ Simplices} & \textbf{Time (seconds)} \\
        \hline\hline
        {\bf Our Method}  & $\infty$ & 6,879,338 & 98.6 \\
        \hline
        Baseline & 75 & 14,646,522 & 54.7 \\
        \hline
        Baseline & 90 & 35,543,784 & 123.903 \\
        \hline\hline

    \end{tabular}
    \caption{Traffic (10/1/2017 - 10/14/2017) dataset: comparison of running time of \DMPCD{} and \mybaseline{}. Our algorithm has radius $\infty$ as we run on the full sparse DTM-Rips filtration. } 
    \label{tab:traffic_times}
\end{table}{}

\begin{table}[!htbp]
    \centering
    \begin{tabular}{llll}
        \hline
        \textbf{Method} & \textbf{Radius} & \textbf{$\#$ Simplices} & \textbf{Time (seconds)} \\
        \hline\hline
        {\bf Our Method}  & $\infty$ & 5,720,309 & 106.1 \\
        \hline
        Baseline & 60 & 5,157,336 & 16.8 \\
        \hline
        Baseline & 75 & 15,659,797 & 57.7 \\
        \hline\hline

    \end{tabular}
    \caption{Traffic (11/19/2017 - 12/2/2017) dataset: comparison of running time of \DMPCD{} and \mybaseline{}. Our algorithm has radius $\infty$ as we run on the full sparse DTM-Rips filtration. } 
    \label{tab:turkey_times}
\end{table}{}

\begin{figure}
\begin{center}
\includegraphics[width = 0.9\linewidth]{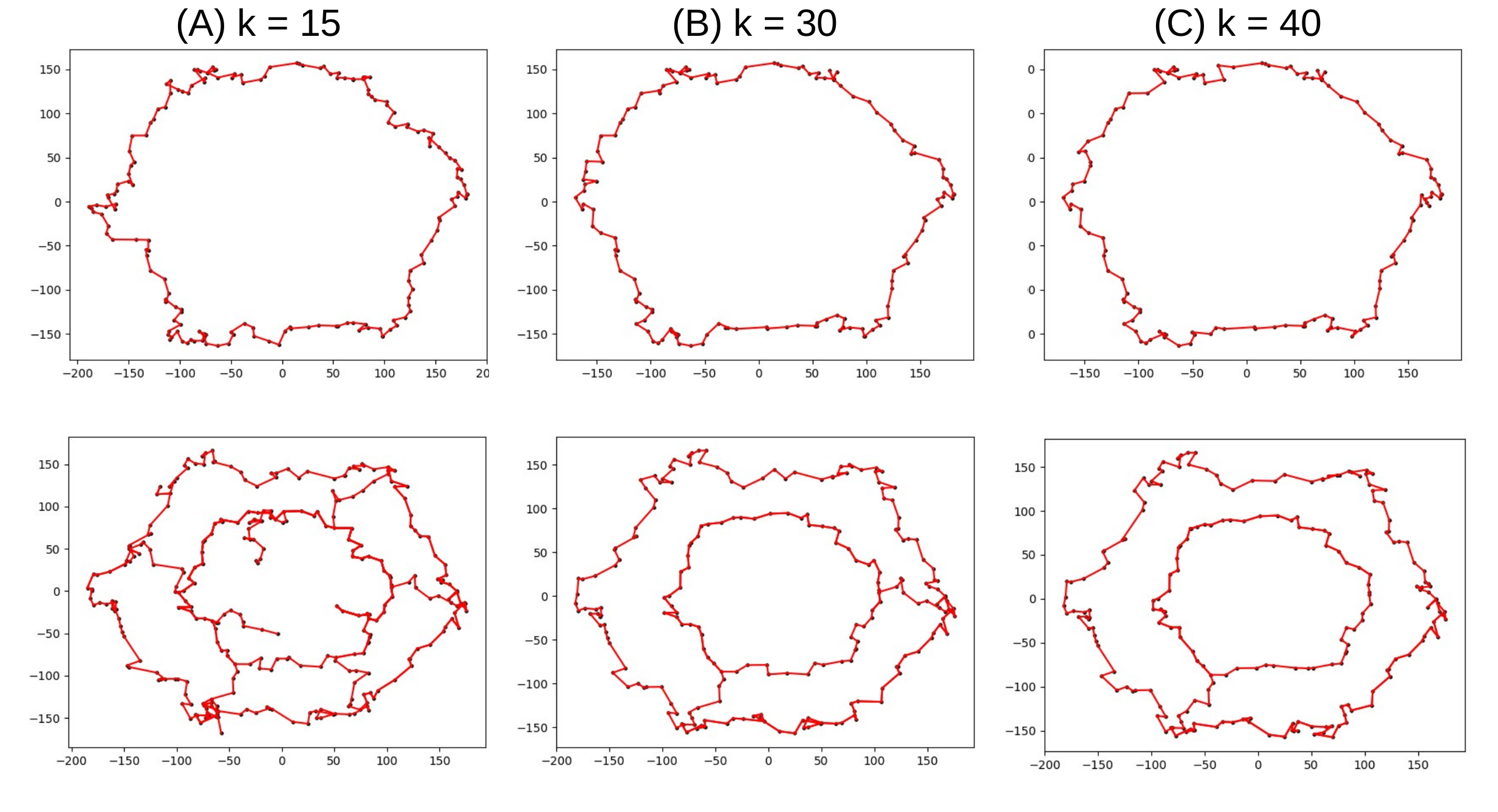}
\end{center}
\vspace*{-0.1in}\caption{{\small Traffic flow: Outputs of \DMPCD{} on both datasets (10/1/2017 - 10/14/2017 top row, 11/19/2017 - 12/2/2017 bottom row) with different values of $k$. (A) $k = 15$ - Single loop captured in first dataset, two loops captured with extra connections in second dataset.  (B) $k = 30$ - Single loop captured in first dataset, two loops captured with an extra connection in second dataset. (C) $k = 40$ - Single loop captured in first dataset, two loops captured with a single connection in second dataset.}}
\label{fig:traffic_our_method}
\end{figure}

\begin{figure}
\begin{center}
\includegraphics[width = 0.9\linewidth]{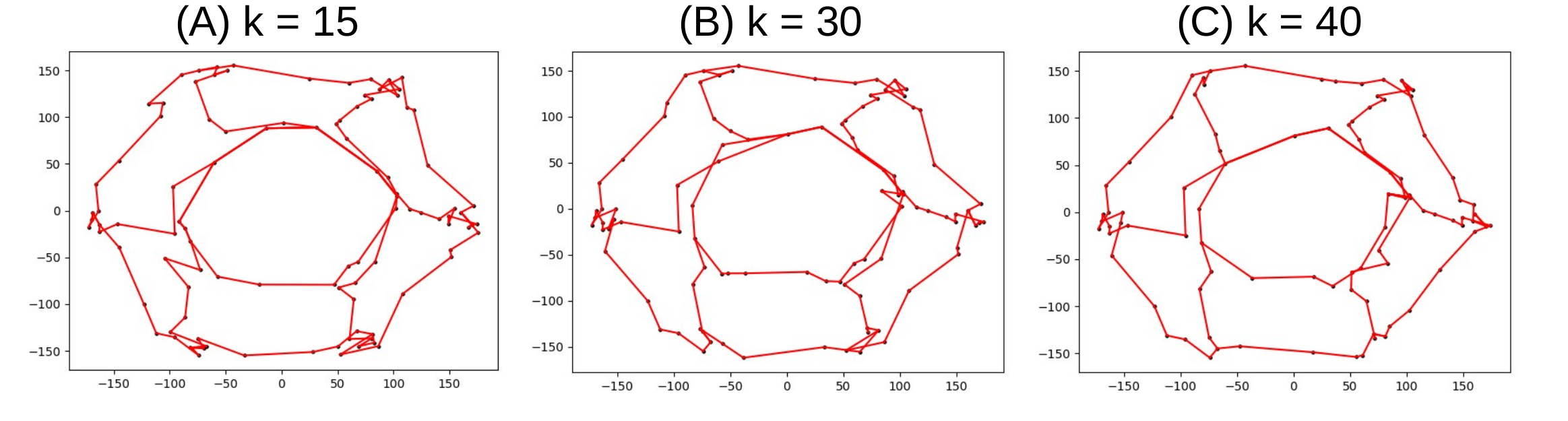}
\end{center}
\vspace*{-0.1in}\caption{{\small Traffic flow: Outputs of \mybaseline{} method on the second traffic time series dataset (11/19/2017 - 12/2/2017) at different values of $k$ - (A) $k = 15$, (B) $k = 30$, (C) $k = 40$. In all cases, any further simplification will lose the outer loop before removing any connections with the inner loop.}}
\label{fig:traffic_baseline_k}
\end{figure}

\begin{figure}[H]
\begin{center}
\includegraphics[width = 0.9\linewidth]{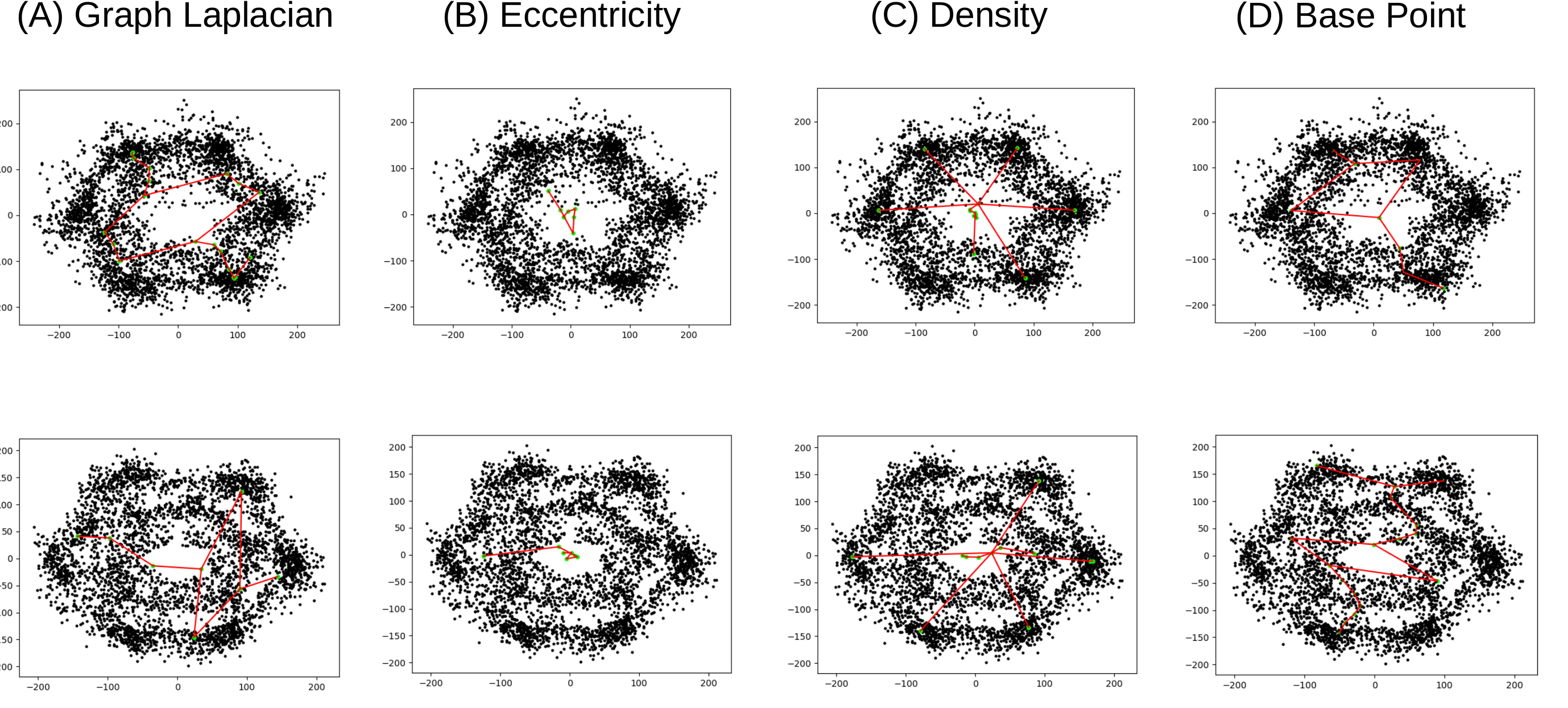}
\end{center}
\vspace*{-0.1in}\caption{{\small Traffic flow: Outputs of \myMapper{} on both datasets using different filter functions - (A) graph Laplacian filter ($k = 15$), (B) eccentricity filter, (C) Gaussian density filter, (D), distance to base point filter. The structure of traffic flow from 10/1/2017 to 10/14/2017 (first row) is captured by \myMapper{} using the graph Laplacian filter function and the distance to base point filter function, while the structure of traffic flow from 11/19/2017 to 12/2/2017 (second row) is not captured by \myMapper{} using any of the filter functions. }}
\label{fig:traffic_mapper}
\end{figure}

\myparagraph{Coil-20.}
Similarly to the two circle example, both the \mybaseline{} and \myReeb{} approaches are unable to capture all coils because the coils have varying scales in the original space.  A concrete example is shown in Figure \ref{fig:coil-baseline}, where Objects 1 and 17 cannot be captured at the same scale.
We also applied \myMapper{} to Coil-17 using a base point filter.  While \myMapper{} can extract the structure of individual objects quite well, the method also struggles to capture all coils.  Focusing on Objects 1 and 17 again, we see that by changing the epsilon parameter of the density clustering scheme we use inside of \myMapper{}, we are able to capture the structure of either Object 1 or Object 17, but not both (results shown in Figure \ref{fig:coil-mapper}).  While it may be possible that a different clustering scheme (or different covering and filter function combinations) could lead to a \myMapper{} configuration that can capture both objects, finding parameters able to capture all coils would be difficult.

\begin{figure}[H]
\begin{center}
\includegraphics[width = 0.9\linewidth]{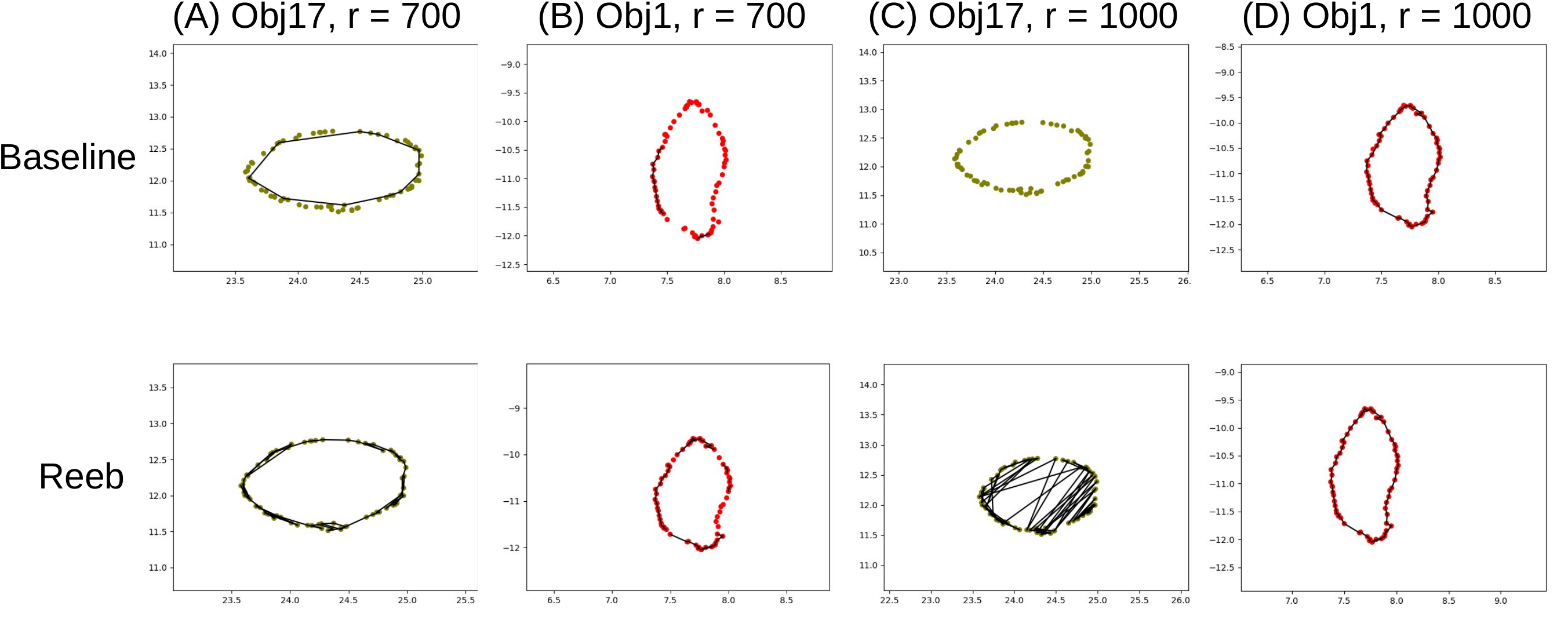}
\end{center}
\vspace*{-0.1in}\caption{{\small Coil: Objects 1 and 17 with \mybaseline{} and \myReeb{} outputs.  (A) Object 17, $r = 700$ - Object 17 is captured by both methods.  (B) Object 1, $r = 700$ - Object 1 is not captured by either method.  (C) Object 17, $r = 1000$ - Object 17 is not captured by either method. (D) Object 1, $r =1000$ - Object 1 is captured by both methods. }}
\label{fig:coil-baseline}
\end{figure}

\begin{figure}[H]
\begin{center}
\includegraphics[width = 0.9\linewidth]{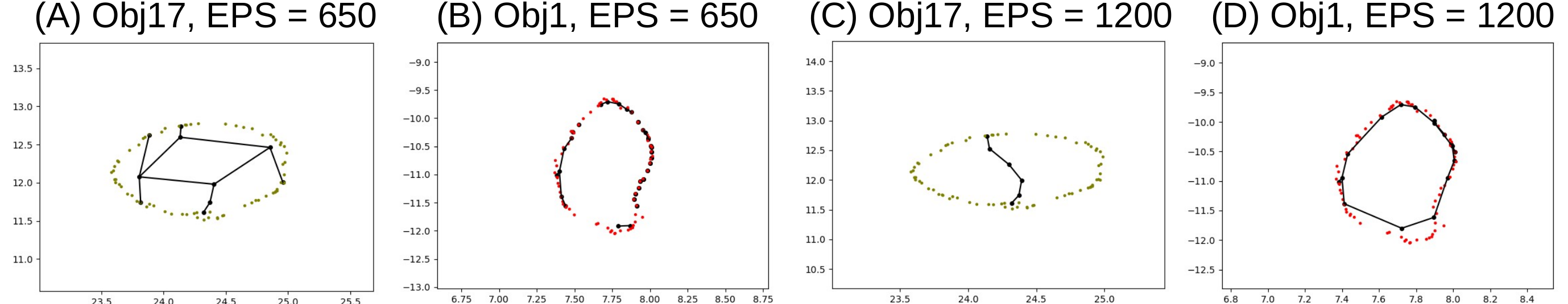}
\end{center}
\vspace*{-0.1in}\caption{{\small Coil: Objects 1 and 17 with \myMapper{} outputs generated using a base point filter function.  (A) Object 17, $EPS = 650$ - structure of Object 17 is captured.  (B) Object 1, $EPS = 650$ - structure of Object 1 is not captured.  (C) Object 17, $EPS = 1200$ - structure of Object 17 is not captured. (D) Object 17, $EPS = 1200$ - structure of Object 1 is captured. }}
\label{fig:coil-mapper}
\end{figure}

\end{document}